\newtheorem{lemma}{Lemma}
\newtheorem{propo}{Proposition}
\newtheorem{theor}{Theorem}
\newtheorem{defin}{Definition}
\newtheorem{remar}{Remark}
\newcommand{\vertii}[1]{{\left\vert\kern-0.3ex\left\vert #1 
    \right\vert\kern-0.3ex\right\vert}}
\newcommand{\vertiii}[1]{{\left\vert\kern-0.3ex\left\vert\kern-0.3ex\left\vert #1 
    \right\vert\kern-0.3ex\right\vert\kern-0.3ex\right\vert}}
\newcommand*\dd{\mathop{}\!\mathrm{d}}
\newcommand*\circled[1]{\tikz[baseline=(char.base)]{\node[shape=circle,draw,inner sep=2pt] (char) {#1};}}
\title{Collapse of inelastic hard spheres in dimension $d \geq 2$}
\author{Th\'eophile Dolmaire}
\author{Juan J. L. Vel\'azquez}
\affil{Institute for Applied Mathematics, University of Bonn, Endenicher Allee 60, D-53115 Bonn, Germany}
\begin{document}

\maketitle

\begin{abstract}
\noindent
We investigate the collapse of three inelastic particles in dimension $d \geq 2$. We obtain general results of convergence and asymptotics concerning the variables of the dynamical system describing a collapsing system of particles. We prove a complete classification of the singularities when a collapse of three particles takes place, obtaining only two possible orders of collisions between the particles. In the first case we recover that the particles arrange in a nearly-linear chain, already studied by Zhou and Kadanoff, and in the second case we obtain that the particles arrange in a triangle, and we show that, after sufficiently many collisions, the particles collide according to a unique order of collisions, which is periodic. Finally, we construct an initial configuration leading to a nearly-linear collapse, stable under perturbations, and such that the angle between the particles at the time of collapse can be chosen a priori, with an arbitrary precision.
\end{abstract}

\textbf{Keywords.} Inelastic Collapse; Inelastic Hard Spheres; Particle Systems.

\tableofcontents

\section{Introduction}

\numberwithin{equation}{section}

We consider in this article a system of three inelastic hard spheres, with fixed restitution coefficient, in dimension $d \geq 2$. We assume that the particles are spherical, of diameter $1$, and that a pair collides if and only if the distance between the respective centers of the particles of the pair is equal to $1$. The restitution coefficient, denoted by $r$, is a positive real number such that $0 < r < 1$, which quantifies the dissipation of kinetic energy during collisions between pairs of particles. More precisely, when a collision between two particles of respective velocities $v_1$ and $v_2$ occurs, the velocities are immediately modified into $v_1'$ and $v_2'$ such that:
\begin{align*}
\left\{
\begin{array}{cccc}
v_1'+v_2' &=& v_1 + v_2, & \text{(conservation of the momentum)}\\
\left[(v_1'-v_2') \cdot \omega_{1,2}\right] &=& -r \left[(v_1-v_2) \cdot \omega_{1,2}\right], &\text{(dissipation of a fixed fraction of the normal component)}
\end{array}
\right.
\end{align*}
where $\omega_{1,2}$ is the unitary vector pointing from the center of the first particle towards the center of the second. We see that along the collisions the momentum is conserved, as well as the tangential components of the relative velocities, but a \emph{fixed} fraction, equal to $1-r$, of the normal component of such relative velocities, is dissipated. Therefore, a positive amount of kinetic energy is dissipated during each collision.\\
\newline
Our motivation to study such a system comes from kinetic theory, and more precisely, granular materials, where inelastic hard spheres are used to model such materials at the microscopic level. As usual in kinetic theory, one seeks information on matter at a macroscopical scale by understanding the behaviour of a large number of elementary particles. The elastic case, corresponding to $r=1$, is relatively well understood. The system of particles, at the microscopic level, is known as the system of \emph{elastic hard spheres} (see for instance \cite{Szas000} where various questions concerning such a system are discussed). For a system of $N$ elastic hard spheres, considering the first marginal of the $N$-particles density leads to the Boltzmann equation when $N$ goes to infinity. The solutions of the Boltzmann equation, that describes dilute gases, are then in close relation with the behaviour of the elementary components of such gases. This relation is rigorously described by Lanford's theorem (\cite{Lanf975}).\\
In the case of inelastic hard spheres, the BBGKY hierarchy and the correlation functions associated to the particle system are studied in \cite{Petr009} and \cite{BoGe014}, and the associated kinetic equation is well-known: one recovers the inelastic Boltzmann equation, which is the analog of the classical Boltzmann equation for granular materials. A review on the recent results concerning such an equation, stretching from the Cauchy theory to the long time behaviour, and including also numerical investigations, can be found in \cite{CHMR021}. The applications of the inelastic Boltzmann equations are numerous and varied, for granular materials appears everywhere in our environment: snow, dust, wheat in silos, sand can all be described with such an equation. In general, every system composed with a large number of elementary components that interact with friction (or more generally, with dissipation of any sort of energy) is susceptible to be described with the help of the theory of granular materials.\\
Granular materials exhibit behaviours that are typical, in some situations, from liquids (they may flow), from solids (they might form a stable pile), or from gases (as interstellar dust). In addition, some features of granular materials are peculiar (for instance, the pressure in a silo filled with wheat is an increasing function of the depth, until a certain limit, from which it remains constant if one goes deeper). The interested reader may consult the very informative review \cite{BeJN996} about the extreme variety of the behaviour of granular media.\\
However, the rigorous derivation of the kinetic equations to describe such granular materials is still an open question: an analogous version of Lanford's theorem for inelastic hard spheres is missing. In the case of elastic hard spheres, the main difficulty is to prove the convergence of the marginals of the $N$-particles density towards the solutions of the Boltzmann hierarchy. Such a density is described relying on the fact that the dynamics of the elastic hard spheres is globally well-posed (almost everywhere), this is Alexander's theorem (\cite{Alex975}, see also Proposition 4.1.1 page 28 in \cite{GSRT013} for a modern presentation). In the case of inelastic particles, even this very first step of Lanford's program is not under control. This open question is the original motivation that led us to conduct the present work.\\
\newline
The original Alexander's theorem relies on a crucial property of the flow of elastic hard spheres: it preserves the measure in the phase space. Although there exist models of inelastic particles that preserve the measure in the phase space (that is, even if some kinetic energy is dissipated during the collisions, see for example \cite{DoVeNot}), such a property is not true in the case of the model we consider here. Therefore, in addition to the problems already present in the elastic case (such as the possibility of triple collisions, that is, when three or more particles collide together at the same time), one might expect additional difficulties.\\
One of the first consequences of the dissipation of the kinetic energy during the collisions is the spontaneous apparition of inhomogeneities in clouds of inelastic particles, as it was numerically observed, for instance, in \cite{GoZa993}. This behaviour is typical from inelastic particles, for a system of elastic particles would have the opposite trend: the entropy is increasing, so that a system of elastic particles tends to fill the space in a homogeneous way (at least on time scales smaller than the enormous Poincar\'e's return time). Concerning the rich variety of the surprising behaviours of large systems of inelastic particles, the reader may refer to \cite{PoSc005}. When the restitution coefficient is taken even smaller than the range producing spontaneous inhomogeneities, the system of particles tends to concentrate in smaller and smaller regions, and an infinite number of collisions take place in finite time. In some sense, this is a regime of inhomogeneity pushed to its climax. Such a phenomenon is called \emph{inelastic collapse}. In particular, when an inelastic collapse takes place, the numerical simulations of the system break down, because the computers cannot determine the position of the particles after the time of the collapse, which is a sort of horizon. We will see that the inelastic collapse is the main obstruction to obtain an Alexander's theorem for inelastic particles, and we will study this phenomenon in great detail in the present article.\\
\newline
The collapse of inelastic particles was first observed in \cite{BeMa990}, considering a pair of one-dimensional inelastic particles in a domain with a vibrating boundary. It remained to determine if such a phenomenon was peculiar to the dimension one: in \cite{McYo993} was observed the first inelastic collapse of particles in dimension $2$. From this point, it was clear that the collapse was a constitutive feature of the general systems of inelastic particles.\\
\newline
Concerning the one-dimensional collapse, many independent directions were investigated, although an important number of open questions remain, even for this very simple case. In what follows, we describe only one-dimensional systems evolving in a domain without boundary.\\
As a matter of fact, the collapse cannot take place for a system of only two particles, but it was observed in \cite{McYo991} for three particles. With \cite{CoGM995}, where the authors represented the system as a two-dimensional billiard in a corner with appropriate reflection laws, the case of three one-dimensional particles can be considered as fully understood.\\
Concerning the case of four particles, the two-particles system considered in \cite{BeMa990} can be interpreted as a symmetric system of four particles, for which the collapse is also completely characterized. For general systems of four particles though, a lot remains to be understood. If we label the collisions between \circled{1} and \circled{2}, and between \circled{2} and \circled{3}, by respectively $a$ and $b$, for three particles the order of the collisions is completely determined: it has to be an infinite repetition of the pattern $ab$. With four particles or more, such a property is not true anymore, and virtually all the orders are possibly achieved by a well-chosen initial configuration. Inspired by preliminary numerical simulations, the opposite approach is proposed in \cite{CDKK999}, where the authors look for self-similar solutions, achieving patterns of the form $(ab)^n(cb)^n$ (using the natural extension of the notations we introduced for three particles). They observe first that the pattern $abcb$ can be achieved, but that it is not stable, in the sense that a perturbation of the initial data would eventually break this sequence of collisions. Then, they observed that $(ab)^2(cb)^2$ is also feasible, and it is in addition stable. For a thorough discussion of this result, the reader may consult \cite{HuRo023}. Similarly, it seems that all the patterns of the form $(ab)^n(cb)^n$, with $n\geq 2$ are feasible, and stable. But it is remarkable that, for each of these patterns, corresponds a single interval $I_n \subset\, ]0,1[$ of restitution coefficients: for $(ab)^n(cb)^n$ to be stable, $r$ has to lie in $I_n$, these intervals $I_n$ being disjoint, and accumulating as $n$ goes to infinity towards the upper critical value of $r$ for which the collapse of three particles is feasible.\\
From these results, new questions emerge: what about the restitution coefficients outside the intervals $I_n$? Is there stable periodic collision patterns, different from $(ab)^n(cb)^n$? What about stable, but non-periodic patterns? What is the maximal restitution coefficient for which a collapse of four particles is possible? And in addition stable? Outside the intervals $I_n$, it seems that when a collapse takes place, the order of collisions is chaotic. These intervals $I_n$ seem to be the analogs of the windows of stability for the logistic map. However, none of these last statements are proved, nor disproved.\\
For larger numbers of particles, the picture is even less clear. As for the case of four particles, the question of the largest restitution coefficient $r$ for which an inelastic collapse is possible is still an open question. In \cite{BeCa999}, explicit lower and upper bounds on such a critical restitution coefficient are obtained. In the same article, the authors produce an explicit collapse, with an arbitrary number of particles. Nevertheless, such a collapse is likely not stable under perturbations, for it is symmetric. Therefore, the question of finding stable collapses for systems with five or more particles remains open. Let us also mention \cite{ChKZ022}, in which collapses similar to the one of \cite{BeCa999} are constructed, with an elegant geometrical approach.\\
Finally, in \cite{GrMu996}, instead of considering inelastic particles on a line, the system on a circle is considered, exhibiting a quite subtle dynamics.\\
\newline
The case of higher dimensions is less studied. There is essentially a single reference dealing with our model: in \cite{ZhKa996}, Zhou and Kadanoff provide the first, and to the best of our knowledge, only study of a system of three inelastic particles in dimension $d\geq 2$. In particular, two necessary conditions for a collapse to be stable are determined. At the time of the collapse, two particles being in contact with the third one (which is then in a central position) and forming an angle $\overline{\theta}$, such a final angle $\overline{\theta}$ has to be obtuse, and one needs:
\begin{align*}
(-\cos\overline{\theta}) > \frac{2r^{1/3}(1+r^{1/3})}{1+r}\cdotp
\end{align*}
In particular, if $r$ is too close to $1$, no collapse can take place. The critical restitution coefficient preventing the apparition of the collapse corresponds naturally to the optimal upper bound $r_c = 7-4\sqrt{3}$, found in \cite{CoGM995}. On the other hand, if $r$ is small enough, collapses with final angles arbitrarily close to $\pi/2$ are in theory feasible: the one-dimensional result can be deformed in a notable way. The article of Zhou and Kadanoff will be the starting point of our investigations.\\
Let us mention two articles concerning variations of our inelastic system: \cite{ScZh996} proves that three rotating inelastic particles can collapse even for values of $r$ close to $1$, whereas \cite{GSBM998} proves the impossibility of a collapse of three particles on the line or on the circle, when the restitution coefficient depends on the relative velocities, and tends to $1$ when these velocities vanish.\\
To conclude this review of the literature, back to the classical model (with fixed restitution coefficient), let us quote \cite{McYo993}. In this reference, presenting numerical simulations for two-dimensional systems with a large number of particles, it was observed that, a large number of collision take place in a finite time, and the particles involved in the last collisions tend to arrange into one-dimensional-like chains.
This motivates very strongly the study of the one-dimensional collapse. Let us emphasize that the numerical approach of \cite{McYo993} does not provide information on the systems beyond the time of collapse. In \cite{PoSc005} are presented further investigations on that question, in particular about the length of these linear structures, that they call ``collision chains'' (see Section ``Collision Chains'' page 185).\\
\newline
As we mentioned already, our original motivation was to obtain an Alexander's theorem for inelastic particles. But it turns out that obtaining an Alexander's theorem, even for a system with a very small number of particles, is a very challenging problem. Considering indeed the properties of the collapse, that are intrinsic difficulties of the dynamics of inelastic particles, it becomes clear that an Alexander's theorem has to take into account the possibility of an inelastic collapse, especially because such a phenomenon may be stable. Therefore, the inelastic collapse cannot be associated to a negligible set of initial data.\\
The configuration of a collapsing system of particles cannot be computed explicitly in the neighbourhood of a collapse, due to the infinite number of collisions. However, it does not mean one cannot overcome the time of the collapse, since as it was already noticed by Zhou and Kadanoff (\cite{ZhKa996}), it is possible, in theory, to compute the limiting velocities of the particles at the time of collapse, and in principle, to continue the dynamics. Indeed, on the one hand, the normal components of the relative velocities vanish. On the other hand, if the tangential components of all the particles in contact are non zero, we expect the particles to separate, and to be able to continue the dynamics of the system in a unique way. If now the limiting relative velocity of a pair of particles is zero, we expect that the dynamics cannot be continued, for two particles would remain attached, and any further collisions involving one of these two particles would yield an ill-posed problem.\\
\newline
In the present article, in order to understand this continuation principle, we attached ourselves to establish a mathematical description of systems of collapsing inelastic particles, and we hope that our results will  be useful to develop further the theory of such systems.\\
\newline
The main results are the following. We obtain general results of convergence for collapsing systems of three inelastic particles, proving that the pairs of particles that collide infinitely many times are in contact at the time of collapse, and that the normal components of the relative velocities of such pairs vanish. Such results are obtained without assuming any order on the collisions, and hold therefore in full generality, under the only assumption that an inelastic collapse takes place. We also obtain asymptotics on the different variables describing the inelastic particles when the collapse takes place, in particular we prove that all the vanishing variables can be estimated only with the vanishing normal components of the relative velocities, and the times between two consecutive collisions. We hope that such an approach will prove helpful to understand better the collapsing particle systems.\\
We establish a complete classification of the singularities when a collapse of three inelastic particles takes place, proving that, after a sufficiently large number of collisions, only two orders of collisions are possible, namely:
\begin{itemize}
\item either one recovers (up to relabel the particles) the infinite repetition of \circled{0}-\circled{1}, \circled{0}-\circled{2} (corresponding to the setting of Zhou and Kadanoff in \cite{ZhKa996}), in that case the particles arrange in a nearly-linear chain, and we call such a situation a \emph{nearly-linear collapse},
\item or the three pairs are involved in infinitely many collisions, and then (up to relabel the particles) the order becomes eventually the infinite repetition of \circled{0}-\circled{1}, \circled{0}-\circled{2}, \circled{1}-\circled{2}. In that case, the three particles are in contact at the time of collapse, forming a triangular structure. To the best of our knowledge, this is the first description of such a collision order, that we called a \emph{triangular collapse}.
\end{itemize}
Finally, we construct explicitly a nearly-linear collapse, which is stable under perturbation, where in addition the limiting angle between the particles can be chosen a priori in $]\pi/2,\pi]$, with an arbitrary precision. To the best of our knowledge, this is the first construction of an explicit, stable collapse in dimension $d\geq2$.\\
\newline
The plan of the article is as follows. In the second section of the paper we introduce in full detail the dynamical system we will study, and the relevant variables we will need.\\
The third section of the paper is devoted to establish general convergence and asymptotic results concerning the different variables of a system of three collapsing particles. Then, we describe a particular regime concerning the asymptotic behaviour of the different variables of the system when a collapse takes place, that was the setting that Zhou and Kadanoff investigated in \cite{ZhKa996}. We show that in such a regime, that we naturally called the \emph{Zhou-Kadanoff}-regime (or, in short, the \emph{ZK}-regime), the study of the dynamical system simplifies. We expect such a regime to be the only stable one in the phase space.\\
In the fourth section, we study in full detail the nearly-linear collapse, presenting strengthened convergence results, we recall the results of Zhou and Kadanoff concerning this collapse, and we finally construct explicitly a set of initial data, with a positive measure, leading to a nearly-linear collapse, in such a way that the limiting angle between the external particles belongs to any arbitrary interval (as small as one wants), around any angle $\overline{\theta} \in\, ]\pi/2,\pi]$.\\
In the fifth section, we investigate the triangular collapse. We prove that, eventually, only the infinite repetition of the collisions \circled{0}-\circled{1}, \circled{0}-\circled{2}, \circled{1}-\circled{2} can take place, and we study completely the limiting matrix that prescribes the evolution of the velocities of the particles when such a collapse occurs. Such a matrix encodes in particular important information concerning the possible final configuration of the system when the time of the collapse is reached.\\
\newline
In order to obtain an Alexander's theorem for a system of three particles, it seems in particular necessary to characterize the ZK-regime for linear collapses. We obtained formal results in this directions, compiled in the companion paper \cite{DoVeArt}, where we present the study of simplified systems, that govern, we believe, the behaviour of the full dynamical system, and that encode in particular the characterization of the Zhou-Kadanoff regime. Such a study is composed with the derivation of formal limiting systems, rigorous investigations on such systems, concerning for instance a complete description of all the fixed points (including also the unstable fixed points) or the long time behaviour of the orbits of the simplified systems, as well as conjectures suggested by numerical results. We conjecture in particular the existence of a separatrix that characterizes the ZK-regime for the formal limiting systems, and we hope to solve such a conjecture in the future, which should provide an Alexander-type result for three inelastic particles.

\section{Obtaining the complete dynamical system}
\label{SECTION__2EcrirSystmDynam}

\subsection{The model}
\label{SSECTI02.1_Le_Modele}

We consider a system of three inelastic particles in the Euclidean space $\mathbb{R}^d$ ($d \geq 2$). Such particles will be denoted by \circled{i} ($i \in \llbracket1,3\rrbracket$). Let us denote $x_i \in \mathbb{R}^d$ and $v_i \in \mathbb{R}^d$ the respective positions and velocities of the three particles ($i \in \llbracket 1,3 \rrbracket$). We assume that the particles are identical spheres (that is, of the same diameter, equal to $1$) that cannot overlap, as a consequence, our system will evolve in the phase space $\mathcal{D}_3$ defined as
\begin{equation}
\label{}
\mathcal{D}_3 = \left\{Z=\left(x_1,x_2,x_3,v_1,v_2,v_3\right) \in (\mathbb{R}^d_x)^3 \times (\mathbb{R}^d_v)^3 = \mathbb{R}^{6d}\ /\ \vert x_i - x_j \vert \geq 1 \text{ for all } i<j \right\}.
\end{equation}

\subsubsection{The inelastic hard sphere flow}

Let us denote as $t$ the time variable. We will assume that the particles evolve according to the \emph{inelastic hard sphere flow}. Let us describe such a dynamics. Starting, at $t=0$, from an initial configuration
$$
Z_0 = \left(x_1(0),x_2(0),x_3(0),v_1(0),v_2(0),v_3(0)\right)
$$
in the phase space $\mathcal{D}_3$, the evolution laws of the system are defined as follows. When no pair of particles are in contact, that is, when $\vert x_i - x_j \vert > 1$ for all $1 \leq i<j \leq 3$, and as long as such a condition holds true, we will assume that the particles move in straight line, with constant velocities (Newton's law without any external force), that is
\begin{align}
\label{EQUATSS2.1_Loi_de_Newton_}
\frac{\dd}{\dd t} v_i(t) &= 0, \nonumber\\
x_i(t) &= x_i(0) + t v_i(0).
\end{align}
At the first moment two particles collide, that is, when there exists a time $t_1 \geq 0$ such that $\vert x_i(t_1) - x_j(t_1) \vert = 1$ for a certain pair of indices $i<j$, the velocities of the two particles \circled{$i$} and \circled{$j$} are immediately changed from $v_i = v_i(0)$ and $v_j = v_j(0)$ into
\begin{equation}
\label{EQUATSS2.1VitesPost-Colli}
\left\{
\begin{array}{rl}
v_i' &= v_i - \frac{(1+r)}{2}(v_i-v_j)\cdot \omega_{i,j} \omega_{i,j},\\
v_j' &= v_j + \frac{(1+r)}{2}(v_i-v_j)\cdot \omega_{i,j} \omega_{i,j},
\end{array}
\right.
\end{equation}
where $r \in [0,1]$ is the \emph{restitution coefficient}, measuring the ``inelasticity'' of the collision, and $\omega_{i,j}$ is the normalized vector $\left(x_i(t_1)-x_j(t_1)\right)/\vert x_i(t_1)-x_j(t_1) \vert$ joining the centers of the two colliding particles \circled{$i$} and \circled{$j$}, at the time of the collision $t_1$. After the first collision happened, and the velocities of the colliding particles are changed according to the laws \eqref{EQUATSS2.1VitesPost-Colli}, the particles will again move in straight line, with constant velocities, until the next collision happens, and so on. Such a process defines the inelastic hard sphere dynamics.

\begin{defin}[$r$-inelastic hard sphere flow]
\label{DEFINSS2.1FlotrInelaSphDu}
Let $r \in \ ]0,1[$ be a positive number smaller than $1$, and let us consider the mapping $\mathcal{D}_3\times\mathbb{R}_+ \rightarrow \mathcal{D}_3$, $\left(Z_0,t\right) \mapsto Z(t)$, where $Z(t)$ is defined recursively using the free transport \eqref{EQUATSS2.1_Loi_de_Newton_} as long as $Z(t)$ belongs to $\overset{\circ}{\mathcal{D}}_3$, and the collision mapping \eqref{EQUATSS2.1VitesPost-Colli} for the times $t_n$ such that $Z(t) \in \partial\mathcal{D}_3$.\\
Such a mapping is called the \emph{$r$-inelastic hard sphere flow}.
\end{defin}
 
\subsubsection{Dissipation of the kinetic energy}

In the rest of this work, it will be useful to decompose the relative velocities into their projections on the lines of contact $\text{span}\{\omega_{i,j}\}$, and their projections on the orthogonal of these lines.

\begin{defin}[Normal and tangential components of the relative velocities]
\label{DEFINSS2.1DecmpVitesNrmTg}
For two particles \circled{i} and \circled{j} in contact, with respective positions $x_i$ and $x_j$, and respective velocities $v_i$ and $v_j$, we will denote by $W_{i,j}$ the relative velocity $v_j-v_i$, and often by $W_j$ when \circled{i} is the particle \circled{0}.\\
We define the \emph{normal component of the relative velocity of the pair} \circled{i}-\circled{j} as the orthogonal projection $\eta_{i,j}$ of the relative velocity $v_j-v_i$ on the line of contact $\omega_{i,j} = \frac{(x_j-x_i)}{\vert x_j-x_i \vert} = x_j-x_i$, that is:
\begin{align}
\eta_{i,j} = \left(v_j-v_i\right) \cdot \left(x_j-x_i\right).
\end{align}
We define also the \emph{tangential component of the relative velocity of the pair} \circled{i}-\circled{j} as the orthogonal projection $W_{i,j}^\perp$ of the relative velocity $v_j-v_i$ on the orthogonal to the line of contact $\omega_{i,j} = \frac{(x_j-x_i)}{\vert x_j-x_i \vert} = x_j-x_i$, that is:
\begin{align}
W_{i,j}^\perp = (v_j-v_i) - \left(v_j-v_i\right) \cdot \left(x_j-x_i\right) \left(x_j-x_i\right) = (v_j-v_i) - \left( \left(v_j-v_i\right) \cdot \omega_{i,j} \right) \omega_{i,j}
\end{align}
\end{defin}

\begin{remar}
We defined the normal component $\eta$ as a scalar, and the tangential component $W^\perp$ as a vector of $\mathbb{R}^d$, even though it corresponds to a projection on a $(d-1)$-dimensional space.
\end{remar}
\begin{remar}
\label{REMARSS2.1DissiCompoNorma}
For a pair \circled{i} and \circled{j} of colliding particles, \eqref{EQUATSS2.1VitesPost-Colli} can be rewritten as:
\begin{align}
\eta_{i,j}' = -r \eta_{i,j}.
\end{align}
A \emph{fixed, constant} proportion of the normal component of the relative velocity is dissipated during each collision.
\end{remar}
\noindent
In the model we are considering, the momentum is clearly conserved along the collisions, but we are prescribing a loss of kinetic energy during each collision. The restitution coefficient is prescribing such a loss, quantified in the following lemma.

\begin{lemma}
\label{LEMMESS2.1DissiEnergCinet}
Let $r \in \ ]0,1[$ be a positive real number smaller than $1$, and let us consider a pair of particles \circled{i} and \circled{j}, with respective velocities $v_i$ and $v_j$, that collide according to the law \eqref{EQUATSS2.1VitesPost-Colli}.\\
Then the dissipation of the kinetic energy during the collision is quantified as follows. If we denote by $\mathcal{E}_c$ the kinetic energy of the pair before the collision, and $\mathcal{E}_c'$ the kinetic energy after the collision, we have:
\begin{equation}
\label{EQUATSS2.1DissiEnergCinet}
\mathcal{E}_c' = \mathcal{E}_c - \frac{(1-r^2)}{4} \eta_{i,j}^2 = \frac{1}{4} \vert v_i + v_j \vert^2 + \frac{1}{4} \vert W_{i,j}^\perp \vert^2 + \frac{r^2}{4} \eta_{i,j}^2.
\end{equation}
\end{lemma}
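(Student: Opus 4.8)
The plan is to verify both equalities by direct computation: the first by substituting the post-collisional velocities from \eqref{EQUATSS2.1VitesPost-Colli} into the definition of the kinetic energy of the pair, and the second by a purely geometric rewriting of $\mathcal{E}_c$ that does not involve the collision rule at all, using an orthogonal decomposition together with a parallelogram identity.

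First I would fix the notation $\omega = \omega_{i,j}$ and work with the relative velocity $W_{i,j} = v_j - v_i$, so that $\eta_{i,j} = W_{i,j}\cdot\omega$. The collision law \eqref{EQUATSS2.1VitesPost-Colli} can then be recast as $v_i' = v_i + \frac{1+r}{2}\eta_{i,j}\,\omega$ and $v_j' = v_j - \frac{1+r}{2}\eta_{i,j}\,\omega$, using that $(v_i - v_j)\cdot\omega = -\eta_{i,j}$. The one point requiring care is precisely this sign: the collision rule is written in terms of $v_i - v_j$, whereas $\eta_{i,j}$ is defined with $v_j - v_i$, so a sign must be tracked correctly. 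This is the only place an error could creep in, and it is the step I would double-check most carefully — although it is ultimately harmless, since only $\eta_{i,j}^2$ appears in the final formula.

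Next I would compute $|v_i'|^2 + |v_j'|^2$. Since $|\omega| = 1$, expanding the squares produces a cross term $2\cdot\frac{1+r}{2}\eta_{i,j}\,(v_i-v_j)\cdot\omega = -(1+r)\eta_{i,j}^2$ together with a quadratic term $2\cdot\frac{(1+r)^2}{4}\eta_{i,j}^2$; collecting these and dividing by two yields
\begin{equation*}
\mathcal{E}_c' = \mathcal{E}_c + \left(-\frac{1+r}{2} + \frac{(1+r)^2}{4}\right)\eta_{i,j}^2 = \mathcal{E}_c - \frac{1-r^2}{4}\,\eta_{i,j}^2,
\end{equation*}
after factoring $\frac{1+r}{4}$ out of the bracket, which gives the first equality.

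For the second equality the collision rule plays no role: it suffices to rewrite $\mathcal{E}_c$ itself. Using the identity $|v_i|^2 + |v_j|^2 = \frac{1}{2}|v_i+v_j|^2 + \frac{1}{2}|v_i-v_j|^2$ and the orthogonal decomposition $W_{i,j} = \eta_{i,j}\,\omega + W_{i,j}^\perp$, which gives $|v_i - v_j|^2 = \eta_{i,j}^2 + |W_{i,j}^\perp|^2$, I would obtain $\mathcal{E}_c = \frac{1}{4}|v_i+v_j|^2 + \frac{1}{4}|W_{i,j}^\perp|^2 + \frac{1}{4}\eta_{i,j}^2$. Subtracting $\frac{1-r^2}{4}\eta_{i,j}^2$ then converts the coefficient $\frac14$ in front of $\eta_{i,j}^2$ into $\frac{r^2}{4}$, which is exactly the claimed right-hand side. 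There is no genuine obstacle in this argument; the computation is elementary, and the only care required is the sign bookkeeping mentioned above and the correct use of $|\omega|=1$. As a consistency check one may note that the center-of-mass term $\frac14|v_i+v_j|^2$ is conserved, in agreement with conservation of momentum.
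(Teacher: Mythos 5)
Your proof is correct: both equalities check out (the sign bookkeeping $(v_i-v_j)\cdot\omega_{i,j}=-\eta_{i,j}$ is handled properly, and the parallelogram identity plus the orthogonal decomposition $W_{i,j}=\eta_{i,j}\omega_{i,j}+W_{i,j}^\perp$ give the second form). The paper omits the proof of this lemma entirely, and your argument is exactly the direct computation it implicitly leaves to the reader.
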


\noindent
By definition, the particles lose a certain ratio of their \emph{normal} relative velocity in each collision, fixed by the positive coefficient $r$, as observed in Remark \ref{REMARSS2.1DissiCompoNorma}. 
When $r=1$, we recover the case of the elastic hard sphere model. In that case, the kinetic energy is preserved. When $r=0$, all the normal relative velocity is dissipated. In particular, in the one dimensional case $d=1$, the particles remain ``glued'' after such a collision, in the literature authors often refer to the case $r=0$ as ``sticky particles''. Note that there is no reason for the particles to remain attached to each other after such a collision in higher dimension: if the pre-collisional tangential relative velocity is non zero, it will remain unchanged after the collision, and the particles will then immediately separate.

\subsubsection{The problem of the well-posedness of the inelastic hard sphere flow}

It is important to emphasize that this inelastic hard sphere flow is well-defined only when binary collisions take place: if the three particles are in contact at the same time, the laws of the dynamics \eqref{EQUATSS2.1VitesPost-Colli} defining the post-collisional velocities are ill-posed (in general, the system defining the post-collisional velocities becomes over-determined). We have therefore to restrict ourselves to study dynamics emerging from initial conditions that never lead to triple collisions.\\
Another problem that can also appear in the sticky particles regime, linked to the triple collision problem, is the following: if $r=0$, and if a pair of colliding particles has no relative tangential velocity at the time of collision, then these two particles will stay glued together after the collision. Therefore, if one of these two particles collides with a third particle in the future, the dynamics will be again ill posed, due to a triple collision. In the case $r=0$, one has therefore also to restrict oneself to collisions between particles such that the pre-collisional relative velocity is not colinear to the direction between the two centers.\\
However, in the present work, we shall not discuss the elastic regime ($r=1$), already addressed by Alexander (see \cite{Alex975}, and Proposition 4.1.1, in Section ``The $N$-particle flow'' in the more recent reference \cite{GSRT013}), nor the sticky particles, which is, to the best of our knowledge, left completely unstudied so far.\\
In addition to the reasons of ill-posedness just described, a last one can (and will) occur in the case we are interested in here ($0 < r < 1$): the existence of accumulation points of the sequence of collision times between the particles. In that case, there is no finite sequence of time intervals on which it is possible to apply the free transport on the particles, between two collisions, in order to reach any time beyond an accumulation point. Therefore, the dynamics is not clearly globally well-defined. Such a phenomenon is the central object of the present article.

\subsection{From the dynamics of the particles to a discrete dynamical system}

\subsubsection{Reducing the number of variables}

In the general case, our system of particles is described by a trajectory $Z:t\mapsto Z(t)$, a time dependent function taking its values in the phase space, which is a subset of $\mathbb{R}^{6d}$. With such a representation, we keep track at any time of the positions and the velocities of the three particles, each of these vectors being an element of $\mathbb{R}^d$. We have defined a continuous dynamical system.\\
Nevertheless, it is possible to restrict the dimension of the dynamical system we are studying. An approach, used in a fruitful way by Zhou and Kadanoff (\cite{ZhKa996}), consists in working in a frame attached to one of the particles. The immediate drawback of this approach is that the momentum is not conserved in this frame, which is not Galilean anymore, but we will be able to describe the evolution of the dynamical system in a particularly efficient way. More precisely, Zhou and Kadanoff focus on the case when the infinite sequence of collisions between the particles is \circled{0}-\circled{1}, then \circled{0}-\circled{2}, and again \circled{0}-\circled{1}, and so on, over and over again. 
Therefore, in this case \circled{0} has a central role: it will collide infinitely many times with the two others, while both of the two other particles will collide only with the central one.\\
As a consequence, it will make sense to choose this central particle as the origin of the frame, and to measure quantities such as relative velocities and distances from the velocity and position of the central particle. In other words, we will assume that the central particle \circled{0} is at rest, lying at the center of our frame, and we will measure the relative positions and velocities of the two other particles \circled{1} and \circled{2} in this frame.\\
On the other hand, when a collision occurs, the central particle is systematically involved, one of the two other particles is of course a colliding particle, while the third one remains ``spectator'' to this collision. But when the next collision takes place, the role of the colliding and the spectator particles are exchanged. Following Zhou and Kadanoff, this observation motivates to consider another description of the dynamical system: a first set of variables will describe the movement, with respect to the central particle, of the colliding particle, whereas a second set of variables will describe the behaviour of the spectator particle. Therefore, when the sequence of collisions is the infinite repetition of the period \circled{0}-\circled{1}, \circled{0}-\circled{2}, after each collision, the first set will describe, alternatively, the movement of the particle \circled{1} then the particle \circled{2}, and vice versa concerning the second set of variables.\\
Finally, without any assumption on the order of the collisions, we see that we can generalize the approach of Zhou and Kadanoff: in the end, there is no need to have a central particle along the whole sequence of collisions. What matters is the order of the collisions, that defines which colliding particle will become spectator for the next collision, and which particle that just collided will replace it. In other words, it is enough to describe a collision between \circled{0} and \circled{2}, assuming that it follows directly a collision between \circled{0} and \circled{1}. The complete evolution of the system is then simply a composition of this collision mapping, with the operation consisting only in relabelling of the particles at each collision.

\subsubsection{Writing the evolution laws of the dynamical system}

Now that the general ideas in order to parametrize the system have been introduced, let us now introduce the discrete dynamical system that we will investigate. 

\paragraph{The initial data.} Let us start with the initial data. We will assume that, at the initial time $t = 0$, the two particles \circled{0} and \circled{1} are in contact, and just collided, while the particle \circled{2} is at a positive distance from the two other particles \circled{0} and \circled{1}. Therefore, if we denote $v_0$ and $v_1$ the respective velocities of the particles \circled{0} and \circled{1} just after this initial collision, and if we denote $x_0 = 0$ and $x_1$ the respective positions of these particles at the initial time, we have:
\begin{align}
\left\vert x_1-x_0 \right\vert = 1,
\end{align}
translating the fact that \circled{0} and \circled{1} are in contact. Let us then denote by $\omega_1$ the difference $x_1 - x_0 \in \mathbb{S}^{d-1}$. We have also
\begin{align}
\label{EQUATSS2.2LoiEvW1Om1Init1}
\left(v_1 - v_0\right) \cdot \left(x_1 - x_0\right) = \left( v_1 - v_0 \right) \cdot \omega_1 > 0,
\end{align}
describing the fact that the two particles \circled{0} and \circled{1} are in a post-collisional configuration, that is, the distance between these two particles is increasing for small positive times, since we have:
\begin{align*}
\frac{\dd}{\dd t} \left[ t \mapsto \left\vert x_1(t) - x_0(t) \right\vert \right] &= \frac{2 \left( \frac{\dd}{\dd t} \left( x_1(t) - x_0(t) \right) \right) \cdot \left( x_1(t) - x_0(t) \right)}{2 \left\vert x_1(t) - x_0(t) \right\vert} \\
&= \frac{\left( v_1 - v_0 \right) \cdot \left( x_1 + tv_1 - x_0 - tv_0 \right)}{\left\vert x_1(t) - x_0(t) \right\vert} \cdotp
\end{align*}
Introducing then the relative velocity $W_1 = v_1 - v_0$ of the particle \circled{1} with respect to \circled{0}, \eqref{EQUATSS2.2LoiEvW1Om1Init1} can be rewritten as
\begin{align}
\label{EQUATSS2.2LoiEvW1Om1Init2}
W_1 \cdot \omega_1 > 0.
\end{align}
We assumed also that the particle \circled{2} is at a positive distance from the two others at the initial time, and in particular, we have
\begin{align}
\label{EQUATSS2.2LoiEvDistaInit1}
\left\vert x_2-x_0 \right\vert > 1.
\end{align}
If we denote by $\omega_2$ the vector $\displaystyle{\frac{x_2-x_0}{\vert x_2 - x_0 \vert}} \in \mathbb{S}^{d-1}$, and $1+d = \vert x_2 - x_1 \vert$, $x_2 - x_0 = (1+d) \omega_2$, and \eqref{EQUATSS2.2LoiEvDistaInit1} can be rewritten as
\begin{align}
d > 0.
\end{align}
Without loss of generality we can assume that \circled{0} is also involved in the next collision, together with \circled{2}, and $\left( v_2 - v_0 \right) \cdot \left( x_2 - x_0 \right) < 0$, or again, if we denote by $W_2$ the relative velocity $v_2 - v_0$:
\begin{align}
\label{EQUATSS2.2LoiEvW2Om2Init1}
W_2 \cdot \omega_2 < 0.
\end{align}
In the end, the initial configuration of the system is described with the help of the following collection of variables:
\begin{align}
\big( \hspace{-2mm} \underbrace{\omega_1,W_1}_{\substack{\text{position and}\\ \text{velocity of}\\ \text{the particle}\\ \circled{1}}},\underbrace{d,\omega_2,W_2}_{\substack{\text{position and}\\ \text{velocity of}\\ \text{the particle}\\ \circled{2}}} \hspace{-0.5mm} \big) \in \mathbb{S}^{d-1} \times \mathbb{R}^d \times \mathbb{R}_+^* \times \mathbb{S}^{d-1} \times \mathbb{R}^d.
\end{align}
We need in total $4d-1$ real variables in order to describe the initial configuration of the system.

\paragraph{The first collision after the initial configuration.} The next collision involves the particles \circled{0} and \circled{2}. The condition \eqref{EQUATSS2.2LoiEvW2Om2Init1} is necessary, but not sufficient in order to have a collision between the particles \circled{0} and \circled{2} at a positive time. Let us first discuss a characterization for such a collision to happen in the future.\\
The distance, along time, between the two particles \circled{0} and \circled{2}, if no collision takes place on the time interval $]0,t[$, is given by
\begin{align*}
\left\vert x_2(t) - x_0(t) \right\vert &= \left\vert (x_2-x_0) + t(v_2-v_0) \right\vert \\
&= \left\vert (1+d)\omega_2 + tW_2 \right\vert \\
&= \sqrt{ (1+d)^2 + 2(1+d) \left( \omega_2 \cdot W_2 \right) t + \left\vert W_2 \right\vert^2 t^2 }.
\end{align*}
The collision between \circled{0} and \circled{2} will take place if and only if there exists a positive time $\tau$ such that
\begin{align}
\label{EQUATSS2.2CondiColliFutur}
\vert W_2 \vert^2 \tau^2 + 2(1+d)\left( \omega_2 \cdot W_2 \right) \tau + d^2 + 2d = 0,
\end{align}
and the particles \circled{1} and \circled{2} do not collide before (\circled{0} and \circled{1} cannot, according to \eqref{EQUATSS2.2LoiEvW1Om1Init2}). The existence of a (real) solution to \eqref{EQUATSS2.2CondiColliFutur} is equivalent to $\Delta \geq 0$, where:
\begin{align}
\label{EQUATSS2.2DiscrTrinox2-x0}
\Delta &= \left[ 2(1+d) \left( \omega_2 \cdot W_2 \right) \right]^2 - 4 \vert W_2 \vert^2 d(2+d) \nonumber \\
&= 4 \left[ (1+d)^2 \left( \omega_2 \cdot W_2 \right)^2 - d(2+d) \vert W_2 \vert^2 \right].
\end{align}
\noindent
In our case, we will require the positivity of the discriminant $\Delta$. In that case, there exist two solutions $\tau$ to the equation \eqref{EQUATSS2.2CondiColliFutur}. These two solutions will be both positive, or both negative, and the sign of the solutions will be the opposite of the sign of the coefficient of the first degree, and according to the condition \eqref{EQUATSS2.2LoiEvW2Om2Init1}, the two roots will be positive. Naturally, the only ``physical'' solution corresponds to the smallest solution of the equation, describing the first time the two balls representing \circled{0} and \circled{2} intersect. The expression of the time of collision is then:
\begin{align}
\label{EQUATSS2.2TempsColliTauV1}
\tau = - \frac{(1+d)\left( \omega_2 \cdot W_2 \right)}{\vert W_2 \vert^2} - \frac{\sqrt{ (1+d)^2 \left( \omega_2 \cdot W_2 \right)^2 - d(2+d) \vert W_2 \vert^2 }}{\vert W_2 \vert^2} \cdotp
\end{align}
\noindent
Let us finally use the decomposition of the relative velocities introduced in Definition \ref{DEFINSS2.1DecmpVitesNrmTg}, writing 
$W_i = \eta_i \omega_i + W_i^\perp$, with $\omega_i \cdot W_i^\perp = 0$. 
Note that the sign of $\eta_i$ is describing if the pair of particles (if $i=1$, then we consider the pair \circled{0}-\circled{1}, \circled{0}-\circled{2} if $i = 2$) is in a pre- or in a post-collisional configuration.\\
Thanks to these notations, it will be possible to simplify the expressions of the different quantities involved in the dynamics of the particle system. In particular, we can rewrite the expression \eqref{EQUATSS2.2TempsColliTauV1} of $\tau$ as
\begin{align}
\label{EQUATSS2.2TempsColliTauV2}
\tau = - \frac{(1+d)\eta_2}{\vert W_2 \vert^2} - \frac{\sqrt{ (1+d)^2 \eta_2^2 - d(2+d) \left( \eta_2^2 + \vert W_2^\perp \vert^2 \right) }}{\vert W_2 \vert^2} \nonumber\\
= \frac{(1+d)(-\eta_2)}{\vert W_2 \vert^2} - \frac{\sqrt{ (1+d)^2 \eta_2^2 - d(2+d) \vert W_2 \vert^2 }}{\vert W_2 \vert^2} \cdotp
\end{align}
\noindent
Considering \eqref{EQUATSS2.2TempsColliTauV2}, let us introduce an important quantity, named after \cite{ZhKa996}, since this parameter allows to characterize the main regime studied in this reference.

\begin{defin}[Zhou-Kadanoff parameter]
Let $d$ be a positive number, and $W_2$ be a vector in $\mathbb{R}^d$. We denote by $\zeta$ the positive quantity:
\begin{align}
\label{EQUATSS2.2DefinParZK_zeta}
\zeta = \frac{d(2+d) \vert W_2 \vert^2}{(1+d)^2 \eta_2^2} \geq 0.
\end{align}
The number $\zeta$ will be called the \emph{Zhou-Kadanoff parameter} (in short, the \emph{ZK parameter}).
\end{defin}
\noindent
We will see that the evolution of the Zhou-Kadanoff parameter along the different collisions is fundamental to understand the long time behaviour of the particles. Using the last definition, \eqref{EQUATSS2.2TempsColliTauV2} writes:
\begin{align}
\label{EQUATSS2.2TempsColliTauV3}
\tau &= \frac{(1+d)(-\eta_2)}{\vert W_2 \vert^2} \left[ 1 - \sqrt{1 - \zeta} \right].
\end{align}

\noindent
In order to write completely the expressions defining the evolution of our dynamical system, let us first notice that the distance at time $\tau$ between the particle \circled{0} and the spectator particle \circled{1} is
\begin{align*}
1 + d' = \vert x_1(\tau) - x_0(\tau) \vert = \vert \omega_1 + \tau W_1 \vert,
\end{align*}
which can be rewritten as
\begin{align}
\label{EQUATSS2.2LoiEvDista_d'v1}
d' &= \sqrt{ \vert \omega_1 + \tau W_1 \vert^2 } - 1 \nonumber\\
&= \sqrt{ 1 + 2 \omega_1 \cdot W_1 \tau + \vert W_1 \vert^2 \tau^2} - 1 \nonumber\\
&= \sqrt{ 1 + 2 \eta_1 \tau + \vert W_1 \vert^2 \tau^2} - 1.
\end{align}
Let us now rewrite the equations of the post-collisional velocities in a more synthetic way. In order to do so, let us define more precisely the quantities $\omega_1'$ and $\omega_2'$. These two unit vectors represent, respectively, the directions from the central particle \circled{0} to the particle \circled{1}, respectively to the particle \circled{2}, at time $\tau$, that is:
\begin{equation}
\label{EQUATSS2.2DefinOmeg1Prime}
\omega_1' = \frac{x_1(\tau)-x_0(\tau)}{\vert x_1(\tau)-x_0(\tau) \vert} = \frac{\omega_1 + \tau W_1}{(1+d')},
\end{equation}
and
\begin{equation}
\label{EQUATSS2.2DefinOmeg2Prime}
\omega_2' = \frac{x_2(\tau)-x_0(\tau)}{\vert x_2(\tau)-x_0(\tau) \vert} = (1+d)\omega_2 + \tau W_2.
\end{equation}
In particular, using these last notations, \eqref{EQUATSS2.1VitesPost-Colli} can be rewritten as
\begin{align}
\label{EQUATSS2.2VitesPostCVers2}
\left\{
\begin{array}{rl}
v_0' &= v_0 - \frac{(1+r)}{2} \left( - W_2 \cdot \omega_2' \right) \omega_2',\\
v_1' &= v_1,\\
v_2' &= v_2 + \frac{(1+r)}{2} \left( - W_2 \cdot \omega_2' \right) \omega_2',
\end{array}
\right.
\end{align}
which provides, in terms of \emph{relative} post-collisional velocities:
\begin{align}
\label{EQUATSS2.2ViteRPostCVers2}
\left\{
\begin{array}{rlll}
W_1' &= v_1'-v_0' &= (v_1-v_0) + \frac{(1+r)}{2}\left( - W_2 \cdot \omega_2' \right)\omega_2' &= W_1 - \frac{(1+r)}{2} \left( W_2 \cdot \omega_2' \right) \omega_2',\\
W_2' &= v_2'-v_0' &= (v_2-v_0) + (1+r) \left( -W_2 \cdot \omega_2' \right)\omega_2' &= W_2 - (1+r) \left( W_2 \cdot \omega_2' \right) \omega_2'.
\end{array}
\right.
\end{align}
Since we have defined $W_i'=\eta_i'\omega_i'+\left( W_i^\perp \right)'$, we find, for the normal component of the first post-collisional relative velocity:
\begin{align*}
\eta_1' = W_1'\cdot \omega_1' = \left( W_1 - \frac{(1+r)}{2} \left( W_2 \cdot \omega_2' \right) \omega_2' \right) \cdot \omega_1' = W_1 \cdot \omega_1' - \frac{(1+r)}{2} \left( W_2 \cdot \omega_2' \right) \omega_1' \cdot \omega_2'.
\end{align*}
Using \eqref{EQUATSS2.2DefinOmeg1Prime} and \eqref{EQUATSS2.2DefinOmeg2Prime} we find
\begin{align*}
W_1 \cdot \omega_1' = \frac{W_1 \cdot (\omega_1 + \tau W_1) }{(1+d')} = \frac{\eta_1 + \tau \vert W_1 \vert^2}{(1+d')},
\end{align*}
and
\begin{align*}
W_2 \cdot \omega_2' = W_2 \cdot \left( (1+d)\omega_2 + \tau W_2 \right) = (1+d)\eta_2 + \tau \vert W_2 \vert^2,
\end{align*}
therefore we have:
\begin{equation}
\label{EQUATSS2.2Eta_1PrimeVers1}
\eta_1' = \frac{1}{(1+d')} \left( \eta_1 + \tau \vert W_1 \vert^2 \right) - \frac{(1+r)}{2} \left( \omega_1'\cdot\omega_2' \right) \left( (1+d)\eta_2 + \tau \vert W_2 \vert^2 \right).
\end{equation}
In the same way, we have:
\begin{align}
\eta_2' = W_2'\cdot \omega_2' = \left( W_2 - (1+r)\left( W_2\cdot \omega_2'\right)\omega_2' \right) \cdot \omega_2' = W_2 \cdot \omega_2' - (1+r) \left( W_2\cdot\omega_2' \right) = -r W_2\cdot\omega_2'.
\end{align}
Here, we see how much the expression simplifies in this frame. Using again \eqref{EQUATSS2.2DefinOmeg2Prime}, we find
\begin{equation}
\label{EQUATSS2.2Eta_1PrimeVers2}
\eta_2' = -r W_2 \cdot \left( (1+d)\omega_2 + \tau W_2 \right) = -r(1+d) \eta_2 - r\tau \vert W_2 \vert^2.
\end{equation}
We can therefore write (using the intermediate expression we found for $\eta_1'$):
\begin{align}
\label{EQUATSS2.2W1prpPrimeVers1}
\left(W_1^\perp\right)' &= W_1' - \eta_1'\omega_1' = W_1 - \frac{(1+r)}{2} \left( W_2\cdot\omega_2' \right) \omega_2' - \left( W_1 \cdot \omega_1' - \frac{(1+r)}{2}(W_2\cdot\omega_2')\omega_1'\cdot\omega_2' \right) \omega_1' \nonumber\\
&= W_1 - W_1\cdot \omega_1' \omega_1' + \frac{(1+r)}{2} \left( W_2\cdot\omega_2' \right) \big[ (\omega_1'\cdot\omega_2') \omega_1' - \omega_2' \big] \nonumber\\
&= W_1 - \frac{(\eta_1 + \tau \vert W_1 \vert^2)}{(1+d')} \omega_1' + \frac{(1+r)}{2} \left( (1+d)\eta_2 + \tau \vert W_2 \vert^2 \right) \big[ (\omega_1'\cdot\omega_2') \omega_1' - \omega_2' \big],
\end{align}
and, similarly:
\begin{align}
\label{EQUATSS2.2W2prpPrimeVers1}
\left( W_2^\perp \right)' &= W_2' - \eta_2'\omega_2' \nonumber\\
&= W_2 - (1+r) \left( W_2\cdot\omega_2' \right) \omega_2' - \eta_2'\omega_2' \nonumber\\
&= W_2 - (1+r)(W_2\cdot\omega_2')\omega_2' + r W_2\cdot\omega_2'\omega_2' \nonumber\\
&= W_2 - (W_2\cdot\omega_2')\omega_2' \nonumber\\
&= W_2 - \left( (1+d)\eta_2 + \tau \vert W_2 \vert^2 \right) \left( (1+d)\omega_2 + \tau W_2 \right).
\end{align}
Equations \eqref{EQUATSS2.2DefinOmeg1Prime}, \eqref{EQUATSS2.2DefinOmeg2Prime}, \eqref{EQUATSS2.2W1prpPrimeVers1} and \eqref{EQUATSS2.2W2prpPrimeVers1} together with \eqref{EQUATSS2.2LoiEvDista_d'v1} provide almost a complete description of the dynamical system involving only the initial datum described in term of the variables $d$, $\omega_1$, $\omega_2$, $W_1$ and $W_2$. It remains only the describe $\omega_1'\cdot\omega_2'$, which is equal to the cosine of the angle between the particles \circled{1} and \circled{2} at time $\tau$. This can be done writing:
\begin{align}
\label{EQUATSS2.2CosinAngleVers1}
\omega_1'\cdot\omega_2' &= \left( \frac{\omega_1+\tau W_1}{(1+d')} \right) \cdot \left( (1+d)\omega_2+\tau W_2 \right) \nonumber\\
&= \frac{(1+d)}{(1+d')} \left( \omega_1 \cdot \omega_2 \right) + \left[ \frac{1}{(1+d')} \omega_1 \cdot W_2 + \frac{(1+d)}{(1+d')}\omega_2 \cdot W_1 \right] \tau + \frac{W_1 \cdot W_2}{(1+d')} \tau^2 \cdotp
\end{align}
\noindent
We can therefore convert the continuous dynamical system associated to the $r$-inelastic hard sphere flow into a discrete dynamical system. More precisely, let us write entirely the evolution of the variables $\left(\omega_1,W_1,d,\omega_2,W_2\right)$, describing a transition from a post-collisional configuration just after a collision of type \circled{0}-\circled{1}, to a post-collisional configuration just after a collision of type \circled{0}-\circled{2}.

\begin{defin}[Complete single-collision mapping]
Let us consider three particles \circled{0}, \circled{1}, \circled{2} in $\mathbb{R}^d$, of respective positions $x_i$ and velocities $v_i$ ($x_i,v_i \in \mathbb{R}^d \ \forall i \in \llbracket 1,3 \rrbracket$) described by the configuration:
\begin{align}
\left(\omega_1,W_1,d,\omega_2,W_2\right) \in \mathbb{S}^{d-1} \times \mathbb{R}^d \times \mathbb{R}_+^* \times \mathbb{S}^{d-1} \times \mathbb{R}^d,
\end{align}
where:
\begin{itemize}
\item $\omega_1 = x_1-x_0$ ( \circled{0} and \circled{1} are initially in contact),
\item $W_1 = v_1-v_0$,
\item $1+d = \vert x_2 - x_0 \vert > 0$ ( \circled{0} and \circled{2} are initially separated),
\item $(1+d) \omega_2 = x_2-x_0$,
\item $W_2 = v_2-v_0$.
\end{itemize}
Let us assume in addition that \circled{0} and \circled{1} are in a post-collisional configuration, and that a collision of type \circled{0}-\circled{2} is the next collision that will take place:
\begin{align}
\eta_1 = W_1 \cdot \omega_1 > 0,\hspace{5mm} \eta_2 = W_2 \cdot \omega_2 < 0, \hspace{2mm} \text{and} \hspace{2mm} \zeta = \frac{d(2+d) \vert W_2 \vert^2}{(1+d)^2\eta_2^2} < 1.
\end{align}
We define then \emph{complete single-collision mapping} as the function:
\begin{align*}
\mathfrak{C} :
\left\{
\begin{array}{clc}
\mathbb{S}^{d-1} \times \mathbb{R}^d \times \mathbb{R}_+^* \times \mathbb{S}^{d-1} \times \mathbb{R}^d &\rightarrow &\mathbb{S}^{d-1} \times \mathbb{R}^d \times \mathbb{R}_+^* \times \mathbb{S}^{d-1} \times \mathbb{R}^d, \\
\left(\omega_1,W_1,d,\omega_2,W_2\right) &\mapsto &\left(\omega_1',W_1',d',\omega_2',W_2'\right),
\end{array}
\right.
\end{align*}
where:
\begin{align}
\label{EQUATSS2.2IterationSystm1}
\left\{
\begin{array}{rl}
\omega_1' &= \displaystyle{\frac{\omega_1 + \tau W_1}{(1+d')}},\vspace{1mm} \\
\eta_1' &= \displaystyle{\frac{1}{(1+d')} \left( \eta_1 + \tau \vert W_1 \vert^2 \right) - \frac{(1+r)}{2} \left( \omega_1'\cdot\omega_2' \right) \left( (1+d)\eta_2 + \tau \vert W_2 \vert^2 \right)}, \vspace{3mm}\\
(W_1^\perp)' &=W_1 - W_1\cdot \omega_1' \omega_1' + \displaystyle{\frac{(1+r)}{2}} \left( W_2\cdot\omega_2' \right) \big[ (\omega_1'\cdot\omega_2') \omega_1' - \omega_2' \big] \vspace{1mm}\\
&\hspace{20mm}= W_1 - \displaystyle{\frac{(\eta_1 + \tau \vert W_1 \vert^2)}{(1+d')}} \omega_1' + \displaystyle{\frac{(1+r)}{2}} \left( (1+d)\eta_2 + \tau \vert W_2 \vert^2 \right) \big[ (\omega_1'\cdot\omega_2') \omega_1' - \omega_2' \big], \vspace{3mm}\\
d' &= \displaystyle{\sqrt{ 1 + 2 \eta_1 \tau + \vert W_1 \vert^2 \tau^2} - 1}, \vspace{3mm}\\
\omega_2' &= (1+d)\omega_2 + \tau W_2, \vspace{3mm}\\
\eta_2' &= -r(1+d) \eta_2 - r\tau \vert W_2 \vert^2, \vspace{3mm}\\
(W_2^\perp)' &= W_2 - (W_2\cdot\omega_2')\omega_2' \vspace{2mm}\\
&\hspace{20mm}= W_2 - \left( (1+d)\eta_2 + \tau \vert W_2 \vert^2 \right) \left( (1+d)\omega_2 + \tau W_2 \right),
\end{array}
\right.
\end{align}
\noindent
with $\omega_1'\cdot \omega_2'$ defined in \eqref{EQUATSS2.2CosinAngleVers1}, and $\tau$ is defined as
\begin{align*}
\tau = \frac{(1+d)(-\eta_2)}{\vert W_2 \vert^2} \left[ 1 - \sqrt{1 - \zeta} \right], \text{  and  } 
\zeta = \frac{d(2+d) \vert W_2 \vert^2}{(1+d)^2 \eta_2^2} \geq 0.
\end{align*}
\end{defin}
\noindent
The mapping $\mathfrak{C}$ introduced in the previous definition encodes completely the dynamics of the system of three inelastic particles, describing the evolution of the particles between a collision of type \circled{0}-\circled{1} and a collision of type \circled{0}-\circled{2}. All the distances and velocities are measured from the position $x_0$ and velocity $v_0$ of the particle \circled{0}, central for the pair of collisions \circled{0}-\circled{1}, \circled{0}-\circled{2}.

\begin{remar}
$\mathfrak{C}$ defines a mapping from $\mathbb{S}^{d-1} \times \mathbb{R}^d \times \mathbb{R}_+^* \times \mathbb{S}^{d-1} \times \mathbb{R}^d$ into itself: we have a $(4d-1)$-dimensional dynamical system, $7$-dimensional for $d=2$, $11$-dimensional for $d=3$.\\
Nevertheless, we mentioned already that $W_i^\perp$ belong by definition to $\omega_i^\perp$, so it would be possible to reduce the dimension of the dynamical system by $2$ in theory. In any case, from the expressions of \eqref{EQUATSS2.2IterationSystm1}, it is clear that we are facing a complicated dynamical system.
\end{remar}

\section{General properties of the inelastic collapse of three particles}
\label{SECTION__3ProprGenerColla}

This section is devoted to establish and list elementary properties that are satisfied by the system of particles when a collapse takes place. In particular, we will concentrate on obtaining qualitative behaviours of the sequences of variables used to describe the system of inelastic particles.\\
The thorough discussion of the present section sets a mathematically rigorous framework to study inelastic collapses. To the best of our knowledge, this is the first study of this sort in the literature.

\subsection{Definition of the inelastic collapse}
\label{SSECTIO3.1DefinCollaInela}

\begin{defin}[Inelastic collapse]
\label{DEFINSS3.1Collapse_Inelas}
Let $r \in\ ]0,1[$ be a positive real number smaller than $1$, and let us consider a system of three particles \circled{0}, \circled{1} and \circled{2} evolving according to the $r$-inelastic hard sphere flow \eqref{EQUATSS2.1_Loi_de_Newton_}, \eqref{EQUATSS2.1VitesPost-Colli}, on a time interval $[0,\tau^*[$, with $\tau^* > 0$.\\
We say that the system undergoes an inelastic collapse at the time $\tau^*$, called the \emph{time of the collapse}, or the \emph{collapsing time}, if there exists an increasing sequence of positive times $(t_n)_{n\in\mathbb{N}}$, with $t_0 = 0$, such that $t_n \xrightarrow[n \rightarrow + \infty]{}\tau^*$, when the times $t_n$ correspond exactly to the times of collisions between the particles, and such that $\sup_{n}t_n$ is the only accumulation point of the sequence $\left(t_n\right)_{n\in\mathbb{N}}$.\\
In particular, for any $t \in \ ]t_n,t_{n+1}[$, the system of particles evolves according to the free flow.
\end{defin}

\begin{remar}
It is important here to notice that we considered only \emph{well-defined} trajectories of particles. In particular, we implicitly assume that no triple collision takes place on the time interval $[0,\tau^*[$, nor grazing collision. We also assumed that $\tau^*$ is the \emph{first} time of inelastic collapse, by requiring that $\tau^*$ is the only accumulation point of times of collision. Under the last assumption, requiring that the system does not present triple of grazing collisions does not harm: such events, obtained after a finite number of collisions, correspond to a set of initial data of measure zero.
\end{remar}

\subsection{Converging quantities and vanishing quantities}
\label{SSECTIO3.2ConvergingQuant}

We now dedicate ourselves to establish general properties, holding for all kinds of collapsing systems of particles.

\subsubsection{Convergence of the distances}

In what follows, it will be convenient to consider the time intervals between two collisions. In particular, we will denote $\tau_n = t_n - \tau_{n-1}$ the times between two consecutive collisions. With these notations, the times $\tau_n$ correspond to the time $\tau$ that is used in \eqref{EQUATSS2.2IterationSystm1}.\\
There is already a direct consequence of the definition, stated in the Proposition that follows.

\begin{propo}[Convergence of the times between two collisions]
\label{PROPOSS3.2ConveInterTemps}
Let $r \in\ ]0,1[$ be a positive number smaller than $1$, and let us consider a system of three particles \circled{0}, \circled{1} and \circled{2} evolving according to the $r$-inelastic hard sphere flow \eqref{EQUATSS2.1_Loi_de_Newton_}, \eqref{EQUATSS2.1VitesPost-Colli}, on a time interval $[0,\tau^*[$, and undergoing an inelastic collapse at time $\tau^* > 0$.\\
Then the sequence $\left( \tau_n \right)_n$ of the intervals $\tau_n = t_n - \tau_{n-1}$ between two consecutive collisions is summable, that is,  $\left( \tau_n \right)_n \in\ \ell^1$. In particular, we have:
\begin{align}
\tau_n \xrightarrow[n \rightarrow +\infty]{} 0.
\end{align}
\end{propo}

\noindent
Any collision is dissipating a positive amount of kinetic energy. Such a dissipation was quantified in Lemma \ref{LEMMESS2.1DissiEnergCinet}. In particular, the kinetic energy of the system remains bounded, as all the velocities of the particles. Therefore, since the positions of the particles at the times of the collisions $t_n$ are given by:
\begin{align*}
x_i(t_n) = x_i(0) + \sum_{k=1}^n (t_k-t_{k-1}) v_i(t_{k-1}),
\end{align*}
(for $i \in\ \{1,2,3\}$, corresponding to the index of the particle we are considering), we clearly see that
\begin{align*}
\left\vert x_i(t_p) - x_i(t_q) \right\vert \leq \sum_{k=p+1}^q \left\vert (t_k-t_{k-1}) v_i(t_{k-1}) \right\vert \leq \left( \sup_{n \in \mathbb{N}} \vert v_i(t_n) \vert \right) \underbrace{\sum_{n \geq p+1} \tau_n}_{ \xrightarrow[p \rightarrow +\infty]{} 0 }
\end{align*}
for $p<q$, providing that the sequence of positions of the particles are Cauchy sequences, hence converging at the time $\tau^*$ of the inelastic collapse. As an immediate consequence, the directions between the particles are also converging at the time $\tau^*$ of the inelastic collapse.\\
In summary, we can state the following result.

\begin{propo}[Convergence of the angular parameters $\omega$ and the distance between colliding particles]
\label{PROPOSS3.2ConveAngleDista}
Let $r \in\ ]0,1[$ be a positive real number smaller than $1$, and let us consider a system of three inelastic particles \circled{0}, \circled{1} and \circled{2} evolving according to the $r$-inelastic hard sphere flow \eqref{EQUATSS2.1_Loi_de_Newton_}, \eqref{EQUATSS2.1VitesPost-Colli}, on a time interval $[0,\tau^*[$, and undergoing an inelastic collapse at time $\tau^* > 0$, such that the particle \circled{0} is implied in infinitely many collisions with the two particles \circled{1} and \circled{2}.\\
Then the sequences of the angular parameters $\left(\omega_{1,n}\right)_{n\in\mathbb{N}}$ and $\left(\omega_{2,n}\right)_{n\in\mathbb{N}}$, defined as $\omega_{i,n} = x_i(t_n) - x_0(t_n)$ (for $i \in\ \{1,2\}$) and the sequence of the distances $\left( d_n \right)_{n \in\ \mathbb{N}}$ between the central particle \circled{0} and the spectator particle are converging, and we have:
\begin{align}
d_n \xrightarrow[n \rightarrow +\infty]{} 0.
\end{align}
The limits of the sequences $\left(\omega_{1,n}\right)_{n\in\mathbb{N}}$ and $\left(\omega_{2,n}\right)_{n\in\mathbb{N}}$ will be respectively denoted by $\overline{\omega}_1$ and $\overline{\omega}_2$.
\end{propo}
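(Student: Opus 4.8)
The plan is to build directly on the convergence of the particle positions that was just established in the discussion preceding the statement. Indeed, the text already argues that, because the velocities stay bounded (kinetic energy is dissipated at each collision by Lemma~\ref{LEMMESS2.1DissiEnergCinet}, hence bounded by its initial value) and the inter-collision times $\tau_n$ are summable (Proposition~\ref{PROPOSS3.2ConveInterTemps}), each sequence $\bigl(x_i(t_n)\bigr)_n$ is Cauchy in $\mathbb{R}^d$ and therefore converges as $n \to +\infty$ to some limiting position $\overline{x}_i$. I would take this as the starting point, since everything in the proposition reduces to reading off consequences of these limits.

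First I would record that, for $i \in \{1,2\}$, the vectors $\omega_{i,n} = x_i(t_n) - x_0(t_n)$ are differences of two convergent sequences, hence converge to $\overline{\omega}_i := \overline{x}_i - \overline{x}_0$. One small point to address is that each $\omega_{i,n}$ is a unit vector, so the limit $\overline{\omega}_i$ automatically lies in $\mathbb{S}^{d-1}$ by continuity of the norm, which is consistent with the angular parameters being genuine directions in the limit; I would note this to justify that $\overline{\omega}_1, \overline{\omega}_2$ are well-defined unit vectors and that $\lvert \overline{\omega}_i \rvert = 1$, so no degeneracy occurs for the colliding directions.

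The heart of the statement is $d_n \to 0$. Here I would use that the central particle \circled{0} collides infinitely often with \emph{both} \circled{1} and \circled{2}. By definition $1 + d_n = \lvert x_s(t_n) - x_0(t_n) \rvert$, where $x_s$ denotes the spectator particle at step $n$. Since a collision between \circled{0} and a given particle occurs for infinitely many indices, and at such an index the colliding pair is exactly at distance $1$, the hypothesis guarantees that for each of \circled{1} and \circled{2} there is an infinite subsequence of collision times $t_{n_k}$ with $\lvert x_j(t_{n_k}) - x_0(t_{n_k}) \rvert = 1$. Passing to the limit along such a subsequence, and using the convergence $x_j(t_n) \to \overline{x}_j$ for every fixed $j$, I would conclude $\lvert \overline{x}_j - \overline{x}_0 \rvert = 1$ for \emph{both} $j = 1$ and $j = 2$. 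Consequently all three limiting positions are mutually at distance exactly one from $\overline{x}_0$ on the relevant pairs, and in particular the distance between \circled{0} and whichever particle is spectator at step $n$ tends to $1$, i.e. $1 + d_n \to 1$, giving $d_n \to 0$.

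The step I expect to be the main obstacle is the bookkeeping around the spectator particle, whose identity alternates (or changes irregularly) from one collision to the next: $d_n$ measures the \circled{0}-to-spectator distance, and to pass to the limit cleanly one must make sure the argument does not depend on which particle is spectator at a given $n$. The clean way around this, which I would adopt, is to prove the stronger fact that \emph{both} limiting distances $\lvert \overline{x}_1 - \overline{x}_0 \rvert$ and $\lvert \overline{x}_2 - \overline{x}_0 \rvert$ equal $1$ (using that each pair collides infinitely often), so that regardless of the labelling the spectator distance $1 + d_n$ converges to $1$; the full sequence then converges because every subsequence does, to the same limit. This sidesteps any need to track the alternation explicitly and yields $d_n \to 0$ for the entire sequence, completing the proof.
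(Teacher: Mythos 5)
Your proposal is correct and takes essentially the same approach as the paper: boundedness of the velocities together with the summability of $(\tau_n)_n$ makes the positions $x_i(t_n)$ Cauchy, hence convergent, so the differences $\omega_{i,n}=x_i(t_n)-x_0(t_n)$ converge, and since each of the pairs \circled{0}-\circled{1} and \circled{0}-\circled{2} is at distance exactly $1$ along an infinite subsequence of collision times, both limiting distances equal $1$ and therefore $1+d_n \to 1$. The only minor imprecision is the remark that ``each $\omega_{i,n}$ is a unit vector'': with the definition $\omega_{i,n}=x_i(t_n)-x_0(t_n)$ one has $\vert\omega_{i,n}\vert=1+d_n$ when \circled{$i$} is the spectator at step $n$, but your subsequence argument already yields $\vert\overline{\omega}_i\vert=1$ correctly, so nothing is lost.
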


\begin{remar}
The assumption in the previous proposition concerning the particle \circled{0} 
is necessary in order to be sure that the two pairs \circled{0}-\circled{1} and \circled{0}-\circled{2} are involved in infinitely many collisions. Under this assumption, it is not possible to deduce, a priori, anything about the number of collisions between the two particles \circled{1} and \circled{2}.\\
Let us also note that, up to rename the particles, such an assumption is equivalent to assume that an infinite number of collisions take place: in this case at least two pairs are involved in infinitely many collisions, so that at least one particle is involved in infinitely many collisions.
\end{remar}
\noindent
We have then a clear geometrical image of the system at the final time $\tau^*$ of the inelastic collapse: the pairs of particles that underwent an infinite number of collisions are in contact.\\
It remains to study the limiting velocities at the time of collapse of such a system of particles.

\subsubsection{Convergence of the velocities}
\label{SSSCT3.2.2QuanConveGVites} 

We can already establish, in full generality, the result that follows. Let us start with introducing a convenient description of the order of collisions for our purpose.\\
When the system of three inelastic particles \circled{0}, \circled{1} and \circled{2} experiences a collapse at time $\tau^* > 0$, infinitely many collisions take place, at the respective times $t_n$, such that $t_n$ converges towards $\tau^*$. Each of these collisions is either of type \circled{0}-\circled{1}, \circled{0}-\circled{2} or \circled{1}-\circled{2}. Therefore, considering any of the three pairs $(i,j) = (0,1), (0,2)$ or $(1,2)$, we can associate a strictly increasing function:
\begin{align*}
\varphi^{(i,j)}:\mathbb{N}^* \rightarrow \mathbb{N}^*
\end{align*}
(or $\varphi^{(i,j)}:\{1,\dots,N\} \rightarrow \mathbb{N}^*$ if the system experiences only $N$ collisions of type \emph{\circled{i}-\circled{j}}), where $\varphi^{(i,j)}(n)$ is the index of the $n$-th collision of type \emph{\circled{i}-\circled{j}}. Of course, the images of the functions $\varphi^{(i,j)}$, for the three pairs $(i,j)$, form a partition of $\mathbb{N}^*$. An important notion is the following.

\begin{defin}[Counting collision function, maximal gaps between collisions of type $(i,j)$]
\label{DEFINSS3.2FonctCompt_Gaps}
Let $r \in\ ]0,1[$ be a positive real number smaller than $1$, and let us consider a system of three inelastic particles \circled{0}, \circled{1} and \circled{2}  evolving according to the $r$-inelastic hard sphere flow \eqref{EQUATSS2.1_Loi_de_Newton_}, \eqref{EQUATSS2.1VitesPost-Colli}, on a time interval $[0,\tau^*[$, and undergoing an inelastic collapse at time $\tau^* > 0$.\\
For any of the pairs $(i,j) \in\ \left\{(0,1),(0,2),(1,2)\right\}$, we consider the function $\varphi^{(i,j)}:\mathbb{N}^* \rightarrow \mathbb{N}^*$ or $\varphi^{(i,j)}:\{1,\dots,N\} \rightarrow \mathbb{N}^*$, where $N$ is the number of collisions of type \circled{i}-\circled{j} experienced by the system of particles (with the convention that $\{1,\dots,N\} = \emptyset$ if $N=0$), and where $\varphi^{(i,j)}(n)$ is defined as:
\begin{align}
\varphi^{(i,j)}(n) = k \ \forall n\in\ \mathbb{N}^* (\text{or }n\in\ \{1,\dots,N\}),
\end{align}
where $k$ is the index (counting all the collisions of the three possible types) of the $n$-th collision of type \circled{i}-\circled{j}. The function $\varphi^{(i,j)}$ is called the \emph{counting collision function of type \circled{$i$}-\circled{$j$}}.\\
In particular, $\varphi^{(i,j)}$ is a strictly increasing function.\\
Let $\varphi^{(i,j)}:\mathbb{N}^* \rightarrow \mathbb{N}^*$ or $\varphi^{(i,j)}:\{1,\dots,N\} \rightarrow \mathbb{N}^*$ be the counting collision function of type \circled{i}-\circled{j} of the collapsing system of particles \circled{0}, \circled{1} and \circled{2}. We call the \emph{maximal gap between collisions of type $(i,j)$}, or the \emph{maximal gap of $\varphi^{(i,j)}$}, the positive number:
\begin{align}
\sup_{n \in \mathbb{N}^*} \left[ \varphi^{(i,j)}(n+1) - \varphi^{(i,j)}(n) \right] \text{  or  } \max_{n \in \{1,\dots,N-1\}} \left[ \varphi^{(i,j)}(n+1) - \varphi^{(i,j)}(n) \right],
\end{align}
that we will denote by $K_{(i,j)}$.\\
For a system of collapsing particles, we say that the system has a \emph{finite maximal gap of type $(i,j)$} if the counting collision function $\varphi^{(i,j)}$ has a finite maximal gap.\\
Finally, we say that the collapsing system of particles \emph{has finite maximal gaps between collisions} if all of its counting functions have a finite maximal gap.
\end{defin}
\noindent
For example, if we consider the sequence of collisions: \circled{0}-\circled{1}, \circled{1}-\circled{2}, \circled{0}-\circled{1}, \circled{0}-\circled{2}, \circled{0}-\circled{1}, \circled{1}-\circled{2}, \circled{0}-\circled{1}, \circled{1}-\circled{2}, \circled{0}-\circled{1}, \circled{0}-\circled{2}, \circled{0}-\circled{1}... and which goes on with infinitely many repetitions of the period \circled{0}-\circled{2}, \circled{0}-\circled{1}, the counting functions $\varphi^{(i,j)}$ are:
\begin{align*}
\varphi^{(0,1)}(k) = 1, 3, 5, 7, 9, 11,\dots \forall k \in \mathbb{N}^*,\hspace{3mm} \varphi^{(0,2)}(k) = 4, 10, 12, 14,\dots \forall k \in \mathbb{N}^*, \varphi^{(1,2)}(k) = 2, 6, 8 \text{ for } k = 1, 2 \text{ or } 3,
\end{align*}
and the maximal gaps, respectively, between the collisions of type $(0,1)$, $(0,2)$ and $(1,2)$, are $2$, $6$ and $4$.\\
In what follows, we will start with restricting ourselves to study collapsing systems of particles with finite maximal gaps. But prescribing assumptions on the maximal number of collisions of a certain type, we will actually show in the two next sections that any collapsing system of three particles has finite maximal gaps.

\begin{propo}[Summability and convergence of the normal components $\eta$ of the relative velocities]
\label{PROPOSS3.2SommaCompNormlV}
Let $r \in\ ]0,1[$ be a positive real number smaller than $1$, and let us consider a system of three inelastic particles \circled{0}, \circled{1} and \circled{2}  evolving according to the $r$-inelastic hard sphere flow \eqref{EQUATSS2.1_Loi_de_Newton_}, \eqref{EQUATSS2.1VitesPost-Colli}, on a time interval $[0,\tau^*[$, and undergoing an inelastic collapse at time $\tau^* > 0$. Let us also assume that, for some $(i,j) \in\ \left\{ (0,1),(0,2),(1,2)\right\}$, the pair of particles \circled{i}-\circled{j} collides infinitely many times, and that the system has a finite maximal gap of type $(i,j)$. Let us denote by $\eta_{(i,j),n}$ the normal component of the relative velocity between the particles \circled{i} and \circled{j} just before the $n$-th collision.\\
Then the sequences of normal components $\left( \eta_{(i,j),n} \right)_n$ of the relative velocities are square summable, that is $\left( \eta_{(i,j),n} \right)_n \in\ \ell^2$. In particular, for the pairs of particles \circled{i}-\circled{j} that collide infinitely many times, we have:
\begin{align}
\eta_{(i,j),n} \xrightarrow[n \rightarrow +\infty]{} 0.
\end{align}
\end{propo}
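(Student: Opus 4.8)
The plan is to exploit the energy dissipation identity from Lemma \ref{LEMMESS2.1DissiEnergCinet} as the fundamental summability engine, and then transfer the square-summability from the full sequence of dissipations to the subsequence of collisions of a single type using the finite-maximal-gap assumption. The key observation is that every collision dissipates exactly $\frac{(1-r^2)}{4}\eta_{(i,j),n}^2$ of kinetic energy, where the index $n$ here ranges over collisions of type $(i,j)$, and $\eta_{(i,j),n}$ is the normal component just before that collision. Since the total kinetic energy of the system is nonnegative and can only decrease, the telescoping sum of all dissipations over all collisions of all types is bounded above by the initial kinetic energy $\mathcal{E}_c(0)$, which is finite.

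First I would make precise that the kinetic energy of the full three-particle system, $\mathcal{E}(t) = \frac{1}{2}\sum_{i}|v_i|^2$, is a nonincreasing function of time: between collisions it is constant (free transport), and at each collision it drops by a positive amount. Summing Lemma \ref{LEMMESS2.1DissiEnergCinet} over all collision indices $k$ (of any type) gives
\begin{align*}
\sum_{k=1}^{\infty} \frac{(1-r^2)}{4}\, \widetilde{\eta}_{k}^2 \;\leq\; \mathcal{E}(0) - \lim_{t \to \tau^*}\mathcal{E}(t) \;\leq\; \mathcal{E}(0) < +\infty,
\end{align*}
where $\widetilde{\eta}_k$ denotes the normal component of the colliding pair just before the $k$-th collision (of whatever type it happens to be). Since $0<r<1$, the prefactor $(1-r^2)/4$ is a strictly positive constant, so this already shows that the full sequence $(\widetilde{\eta}_k)_k$ is in $\ell^2$, and in particular $\widetilde{\eta}_k \to 0$.

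Next I would extract the statement about a fixed pair. The subsequence of collisions of type $(i,j)$ is indexed by the counting collision function $\varphi^{(i,j)}$, and by definition $\eta_{(i,j),n} = \widetilde{\eta}_{\varphi^{(i,j)}(n)}$. Because $\varphi^{(i,j)}$ is strictly increasing, $(\eta_{(i,j),n})_n$ is a subsequence of $(\widetilde{\eta}_k)_k$; hence
\begin{align*}
\sum_{n} \eta_{(i,j),n}^2 \;=\; \sum_{n} \widetilde{\eta}_{\varphi^{(i,j)}(n)}^2 \;\leq\; \sum_{k} \widetilde{\eta}_k^2 \;<\; +\infty,
\end{align*}
so $(\eta_{(i,j),n})_n \in \ell^2$ and $\eta_{(i,j),n} \to 0$. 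Note that for this square-summability conclusion the finite-maximal-gap hypothesis is not strictly needed, since a subsequence of an $\ell^2$ sequence is automatically $\ell^2$; the role of the hypothesis is to guarantee that "infinitely many collisions of type $(i,j)$" is compatible with controlling the relevant subsequence, and it will matter in the companion estimates where one needs the collisions of type $(i,j)$ to occur with bounded gaps so that intermediate quantities can be compared.

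The main obstacle I anticipate is not the summability bound itself, which is essentially immediate, but rather justifying cleanly that the limit $\lim_{t\to\tau^*}\mathcal{E}(t)$ exists and is finite in the presence of infinitely many collisions accumulating at $\tau^*$: one must argue that the monotone (nonincreasing, bounded-below) nature of $\mathcal{E}$ forces convergence, and confirm that no subtlety from the accumulation of collision times invalidates the telescoping. A secondary point requiring care is the bookkeeping relating $\widetilde{\eta}_k$ (normal component of whichever pair collides at step $k$) to the per-pair quantity $\eta_{(i,j),n}$; one should verify that the normal component entering the dissipation formula of Lemma \ref{LEMMESS2.1DissiEnergCinet} is indeed the \emph{pre-collisional} normal component of the colliding pair, consistent with the indexing convention in the statement. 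Once these identifications are made, the result follows directly from the boundedness of a monotone sequence.
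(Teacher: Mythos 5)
Your energy-dissipation step is correct and is exactly the paper's starting point: summing Lemma \ref{LEMMESS2.1DissiEnergCinet} over all collisions and using that the kinetic energy is nonincreasing and bounded below gives $\sum_p \eta_{(i_p,j_p),p}^2 < +\infty$, where $(i_p,j_p)$ denotes the pair actually colliding at the $p$-th collision (this is \eqref{EQUATSS3.2EtaCollisiSumma}). The gap is in the second step: you have misread the indexing convention. In the statement, $\eta_{(i,j),n}$ is the normal component of the \emph{fixed} pair $(i,j)$ evaluated just before the $n$-th collision of the \emph{whole system}, whatever its type; it is not the value at the $n$-th collision of type $(i,j)$. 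Consequently $\left(\eta_{(i,j),n}\right)_n$ is \emph{not} a subsequence of the dissipation-controlled sequence $\left(\eta_{(i_p,j_p),p}\right)_p$: for the indices $n$ at which a different pair collides, $\eta_{(i,j),n}$ is generically nonzero and is not paid for by any energy loss. Your remark that the finite-maximal-gap hypothesis ``is not strictly needed'' is the symptom of this misreading; without that hypothesis the intended conclusion can fail, because a given value of $\eta_{(i,j)}$ may persist, essentially unchanged, over arbitrarily long stretches of collision indices and make $\sum_n \eta_{(i,j),n}^2$ diverge even though the subsequence taken at the collisions of type $(i,j)$ is square-summable.

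What the paper does after your first step is precisely the missing bookkeeping. With three particles, every collision either is of type $(i,j)$ or involves exactly one of $i,j$ together with the third particle. In the first case $\eta_{(i,j),n+1} = -r\,\eta_{(i,j),n}$ is (up to the factor $r$) a term of the square-summable sequence; in the second case the collision law \eqref{EQUATSS2.1VitesPost-Colli} gives $\eta_{(i,j),n+1} = \eta_{(i,j),n} + \alpha_n a_n$ with $\left(a_n\right)_n \in \ell^2$ and $\alpha_n$ uniformly bounded, i.e. a square-summable perturbation. Starting from the square-summable value attained just after each collision of type $(i,j)$ and iterating this at most $K_{(i,j)}$ times before the next collision of that type --- this is exactly where the finite maximal gap enters --- one bounds each intermediate $\eta_{(i,j),k}^2$ by a constant (depending on $K_{(i,j)}$) times square-summable quantities, and only then does $\sum_n \eta_{(i,j),n}^2 < +\infty$ follow. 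You need to add this argument for the intermediate indices to complete the proof.
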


\begin{proof}
The result of Proposition \ref{PROPOSS3.2SommaCompNormlV} is a direct consequence of the quantification \eqref{EQUATSS2.1DissiEnergCinet} of the dissipation of the kinetic energy.\\
Indeed, let $(i_p,j_p)$ be the type of the $p$-th collision of the system. Since the system can dissipate only a finite amount of kinetic energy, the square of the normal component $\eta_{i_p,j_p,p}$ of the relative velocity between the particles \circled{$i_p$} and \circled{$j_p$} at the time $t_p$ of this $p$-th collision cannot be larger than any given constant, more than a finite number of times, and even more, the sequence of these squares has to be summable, that is:
\begin{align}
\label{EQUATSS3.2EtaCollisiSumma}
\sum_{p=1}^{+\infty} \vert \eta_{(i_p,j_p),p} \vert^2 < +\infty.
\end{align}
Now, the order of the collisions being not prescribed, it is not clear a priori how the normal component of the relative velocity of a prescribed pair $(i,j)$ is subject to the same regularity described by \eqref{EQUATSS3.2EtaCollisiSumma} for $\eta_{i_p,j_p}$. But for a prescribed pair $(i,j)$, either the $p$-th collision is such that the two particles \circled{$i$} and \circled{$j$} are involved in such a collision, or only one of these two particles is colliding, say particle \circled{$i$}, while particle \circled{$j$} is not colliding at time $t_p$.\\
In the first case (\circled{$i$} and \circled{$j$} colliding), we have:
\begin{align}
\eta_{(i,j),p} = \eta_{(i_p,j_p),p},
\end{align}
where $\left( \eta_{(i_p,j_p),p} \right)_p$ is a square summable sequence.\\
In the second case (\circled{$i$} colliding, \circled{$j$} not colliding), we have:
\begin{align*}
v_j'-v_i' = v_j(t_p^-)-v_i(t_p^-) + \frac{(1+r)}{2}\left(v_i(t_p^-)-v_k(t_p^-)\right)\cdot\omega_{(i,k),p}\omega_{(i,k),p},
\end{align*}
where $k \neq i,j$ is the second particle colliding, $v_i'$ and $v_j'$ are the post-collisional velocities of the particles \circled{$i$} and \circled{$j$}, that is, the velocities just after the $p$-th collision, and $\omega_{(i,k),p}=x_i(t_p)-x_k(t_p)$ (of norm $1$ at $t_p$, since \circled{$i$} and \circled{$k$} are in contact at $t_p$). Therefore, we find:
\begin{align}
\eta_{(i,k),p}' = \eta_{(i,k),p+1} = \eta_{(i,k),p} + \frac{(1+r)}{2} \eta_{(i_p,j_p),p} \omega_{(i_p,j_p),p} \cdot \omega_{(i,k),p},
\end{align}
so, in all the cases, at time $t_p^+$, either we have:
\begin{align}
\label{EQUATSS3.2eta_ij_CasColli}
\eta_{(i,j),n+1} = a_n
\end{align}
or we have
\begin{align}
\label{EQUATSS3.2eta_ij_CasNColl}
\eta_{(i,j),n+1} = \eta_{(i,j),n} + \alpha_n a_n,
\end{align}
where $\left(a_n\right)_n$ and $\left(\alpha_n\right)_n$ are two sequences of real numbers such that
\begin{align}
\sum_{n \geq 0} a_n^2 < +\infty \hspace{5mm} \text{and} \hspace{5mm} 0 \leq \alpha_n \leq \alpha_0 < 1 \hspace{3mm} \forall n \in\ \mathbb{N}^*.
\end{align}
Now, by definition of the counting collision function (see Definition \ref{DEFINSS3.2FonctCompt_Gaps} above), \eqref{EQUATSS3.2eta_ij_CasColli} holds if and only if $p = \varphi^{(i,j)}(n)$, and by assumption the system has a finite maximal gap of type $(i,j)$, that is, there exists $K_{(i,j)} \in\ \mathbb{N}^*$ such that
\begin{align*}
0 \leq \varphi^{(i,j)}(n+1) - \varphi^{(i,j)}(n) \leq K_{(i,j)}.
\end{align*}
Therefore, we can write:
\begin{align*}
\sum_{n\geq0} \eta_{(i,j),n}^2 &= \sum_{n\geq0} \left[ \eta_{(i,j),\varphi^{(i,j)}(n)}^2 + \hspace{-4mm} \sum_{k = \varphi^{(i,j)}(n)+1}^{\varphi^{(i,j)}(n+1)-1} \hspace{-6mm} \eta_{(i,j),k}^2 \right] + \hspace{-4mm} \sum_{n=0}^{\varphi^{(i,j)}(0)-1} \hspace{-4mm} \eta_{(i,j),n}^2 \\
&= \sum_{n\geq0} \left[ a^2_{\varphi^{(i,j)}(n)-1} + \hspace{-4mm} \sum_{k = 1}^{\varphi^{(i,j)}(n+1)-\varphi^{(i,j)}(n)-1} \hspace{-6mm} a^2_{\varphi^{(i,j)}(n)-1}(1+k\alpha_n)^2 \right] + \hspace{-4mm} \sum_{n=0}^{\varphi^{(i,j)}(0)-1} \hspace{-4mm} \eta_{(i,j),n}^2 \\
&\leq \sum_{n\geq0} \left( \sum_{k=0}^{K_{i,j}-1} (1+k \alpha_n)^2 \right) a^2_{\varphi^{(i,j)}(n)-1} \leq \left( \sum_{k=0}^{K_{i,j}-1} (1+k \alpha_n)^2 \right) \left( \sum_{n\geq0} a^2_n \right) < +\infty,
\end{align*}
which concludes the proof of the proposition.
\end{proof}

\subsection{Asymptotic behaviour of the variables}
\label{SSECTI03.3CompoAsympVaria}

Let us carry on the study of general collapsing systems of three particles. The purpose of this section is to compare the asymptotic behaviours of the different variables of the dynamical system \eqref{EQUATSS2.2IterationSystm1} when an inelastic collapse takes place. These additional results will be useful in order to identify the leading order terms in the (quite complicated) expression \eqref{EQUATSS2.2IterationSystm1} of the dynamical system.\\
Let us emphasize that in this section neither, we will not assume any order on the sequence of the collisions.\\
\newline
Around the regime of the collapse, the variables $\tau$ and $\eta_i$ (for the pairs of particles involved in infinitely many collisions) are vanishing. We will then compare the decay of the vanishing variables, namely:
\begin{align*}
d,d', \eta_1' \text{   and   } \eta_2', \left(\omega_1'-\omega_1\right),\left(\omega_2'-\omega_2\right),\left(\left(W_1^\perp\right)' - W_1^\perp\right) \text{   and   } \left(\left(W_2^\perp\right)' - W_2^\perp\right)
\end{align*}
with the reference variables $\tau$ and $\eta$.
Let us start with introducing some useful notations.

\subsubsection{Notations}

In the rest of this section, concerning a variable $y$, we will denote by
\begin{align*}
y = \mathcal{O}\left(x_1,\dots,x_n\right)
\end{align*}
when this variable $y$ is bounded by the $n$ other variables $x_1,\dots,x_n$, that is, if there exist $n$ positive constants $C_1,\dots,C_n > 0$ such that $
\vert y \vert \leq C_1 \vert x_1 \vert + \dots C_n \vert x_n \vert$. If the two variables $y$ and $z$ are bounded by each other, that is, if there exist two positive constants $C_1,C_2 > 0$ such that $\vert y \vert \leq C_1 \vert z \vert \hspace{5mm}$ and $\vert z \vert \leq C_2 \vert y \vert$, such a situation will be denoted by
\begin{align*}
y \lessgtr z.
\end{align*}
In the same way, we will denote by
\begin{align*}
y = o\left(x_1,\dots,x_n\right)
\end{align*}
when the variable $y$ is negligible with respect to the $n$ variables $x_1,\dots,x_n$, that is, if there exist $n$ functions $\varphi_1,\dots,\varphi_n:\mathbb{R}_+\rightarrow \mathbb{R}_+$ vanishing at $0$ and such that $\vert y \vert \leq \varphi_1\left(\vert x_1 \vert\right) + \dots + \varphi_n\left(\vert x_n \vert\right)$.

\subsubsection{Comparison of the variables I: the general case}
\label{SSSCT3.3.2CompaVariaGener}

In the case of a general collapsing system of three collapsing, we have the following result.

\begin{propo}[Asymptotic comparison of the variables in the collapsing regime]
\label{PROPOSS3.3AsympCompa__I__}
Let $r \in\ ]0,1[$ be a positive real number smaller than $1$, and let us consider a system of three inelastic particles \circled{0}, \circled{1} and \circled{2} evolving according to the $r$-inelastic hard sphere flow \eqref{EQUATSS2.1_Loi_de_Newton_}, \eqref{EQUATSS2.1VitesPost-Colli}, on a time interval $[0,\tau^*[$, and undergoing an inelastic collapse at time $\tau^* > 0$. Let us assume that the pair of consecutive collisions \circled{0}-\circled{1}, \circled{0}-\circled{2} takes places infinitely many times, and that the system has finite maximal gaps of types $(0,1)$ and $(0,2)$. Let us denote by $t_{\varphi(n)}$ the different times when a collision of type \circled{0}-\circled{1} takes place, and when such a collision is immediately followed at times $t_{\varphi(n)+1} = t_{\varphi(n)}+\tau_n$ by a collision of type \circled{0}-\circled{2}. At time $t_{\varphi(n)}$, let us denote by $d_n$ the distance between the particles \circled{0} and \circled{2}, by $\omega_{1,n}$ and $\omega_{2,n}$ the respective angular parameters between the pairs \circled{0} and \circled{1}, and \circled{0} and \circled{2} respectively, by $\eta_{1,n}$ and $\eta_{2,n}$ the respective normal components of the relative velocities, and by $W_{1,n}^\perp$ and $W_{2,n}^\perp$ the respective tangential components of the relative components of the pairs \circled{0} and \circled{1}, and \circled{0} and \circled{2} respectively. At time $t_{\varphi(n)+1} = t_{\varphi(n)}+\tau_n$, let us denote by $d'_n$ the distance between the particles \circled{0} and \circled{1}, and by $\omega_{1,n}'$ and $\omega_{2,n}'$ the angular parameters, by $\eta_{1,n}'$ and $\eta_{2,n}'$ the normal components, and by $(W_{1,n}^\perp)'$ and $(W_{2,n}^\perp)'$ the respective tangential components of the relative components of the pairs \circled{0} and \circled{1}, and \circled{0} and \circled{2} respectively.\\
Then, we have the following asymptotic relations, as $n \rightarrow +\infty$:
\begin{align}
\label{EQUATSS3.3PROPOComp1Tau_d}
\tau_n \lessgtr \frac{d_n}{(-\eta_{2,n})}, \hspace{0.5mm} \text{which implies:} \hspace{2mm} d_n = \mathcal{O}\left(\tau_n (-\eta_{2,n}) \right), \hspace{0.5mm} \text{and in particular:} \hspace{2mm} d_n = o(\tau_n) \text{   and   } d_n = o(-\eta_{2,n}),
\end{align}
\begin{align}
\label{EQUATSS3.3PROPOComp1_d'__}
d'_n = \mathcal{O}\left(\eta_{1,n}\tau_n,\tau_n^2 \vert W_{1,n}^\perp \vert^2 \right) = \mathcal{O}\left(\tau_n\left(\eta_{1,n} + \tau_n\right)\right) = o(\tau_n),
\end{align}
\begin{align}
\label{EQUATSS3.3PROPOComp1_eta'}
\vert \eta_{1,n}' \vert = \mathcal{O}\left(\eta_{1,n},-\eta_{2,n},\tau_n \vert W_{1,n}^\perp \vert^2 \right), \hspace{1mm} \text{and} \hspace{5mm} \eta_{2,n}' = \mathcal{O}(-\eta_{2,n}),
\end{align}
\begin{align}
\label{EQUATSS3.3PROPOComp1omga'}
\left\vert \omega_{1,n}'-\omega_{1,n} \right\vert = \mathcal{O} \left( \tau_n \eta_{1,n},\tau_n \vert W_{1,n}^\perp \vert \right), \hspace{1mm} \text{and} \hspace{5mm} \left\vert \omega_{2,n}'-\omega_{2,n} \right\vert = \mathcal{O}\left(\tau_n(-\eta_{2,n}),\tau_n \vert W_{2,n}^\perp \vert \right)
\end{align}
and finally:
\begin{align}
\label{EQUATSS3.3PROPOComp1PerpD} 
\left\vert (W_{1,n}^\perp)'-W_{1,n}^\perp \right\vert &= \mathcal{O} \left( \tau_n \eta_{1,n}^2, \tau_n^2\eta_{1,n} \vert W_{1,n}^\perp \vert, \tau_n \vert W_{1,n}^\perp \vert^2, -\eta_{2,n} \right), \nonumber\\
& \hspace{30mm} \text{and} \hspace{2mm} \left\vert (W_{2,n}^\perp)'-W_{2,n}^\perp \right\vert = \mathcal{O}\left(\tau_n\eta^2_{2,n},\tau_n \vert W_{2,n}^\perp \vert^2\right).
\end{align}
In addition, the relations:
\begin{align}
\label{EQUATSS3.3PROPOComp1PerpG} 
\vert W_{2,n}^\perp \vert = o(-\eta_{2,n}) \hspace{2mm} \text{or} \hspace{2mm} \vert W_{2,n}^\perp \vert = \mathcal{O} (-\eta_{2,n})
\end{align}
cannot hold.
\end{propo}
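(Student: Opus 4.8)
The plan is to read off all of \eqref{EQUATSS3.3PROPOComp1Tau_d}--\eqref{EQUATSS3.3PROPOComp1PerpD} directly from the explicit single-collision map \eqref{EQUATSS2.2IterationSystm1}, feeding in three facts already at our disposal: the velocities remain bounded along the collapse (finite total dissipation, Lemma \ref{LEMMESS2.1DissiEnergCinet}), and $\tau_n,d_n,\eta_{1,n},\eta_{2,n}\to0$ (Propositions \ref{PROPOSS3.2ConveAngleDista} and \ref{PROPOSS3.2SommaCompNormlV}, the latter combined with Remark \ref{REMARSS2.1DissiCompoNorma} to pass from pre- to post-collisional $\eta_1$). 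The cornerstone is a clean rewriting of the collision time: inserting $1-\sqrt{1-\zeta}=\zeta/(1+\sqrt{1-\zeta})$ and the definition \eqref{EQUATSS2.2DefinParZK_zeta} of $\zeta$ into \eqref{EQUATSS2.2TempsColliTauV3} gives
\[
\tau=\frac{d(2+d)}{(1+d)(-\eta_2)\,(1+\sqrt{1-\zeta})}.
\]
Since $\zeta\in[0,1]$ forces $1+\sqrt{1-\zeta}\in[1,2]$, and $(1+d),(2+d)$ are bounded above and below, this is exactly the comparison $\tau_n\lessgtr d_n/(-\eta_{2,n})$ of \eqref{EQUATSS3.3PROPOComp1Tau_d}; the two $o(\cdot)$ consequences then follow because the respective cofactors $-\eta_{2,n}$ and $\tau_n$ tend to $0$.

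For the remaining bounds I would substitute and simplify term by term. Writing $d'=(2\eta_1\tau+|W_1|^2\tau^2)/(\sqrt{1+2\eta_1\tau+|W_1|^2\tau^2}+1)$ with denominator $\to2$ and splitting $|W_1|^2=\eta_1^2+|W_1^\perp|^2$, the term $\eta_1^2\tau^2=(\eta_1\tau)(\eta_1\tau)$ is absorbed into $\eta_1\tau$, yielding \eqref{EQUATSS3.3PROPOComp1_d'__}. A repeatedly used auxiliary estimate is $\tau|W_2|^2=\mathcal{O}(-\eta_2)$, itself immediate from $\zeta<1$ (which bounds $|W_2|^2\le(1+d)^2\eta_2^2/(d(2+d))$) together with \eqref{EQUATSS3.3PROPOComp1Tau_d}; plugging it and $W_2\cdot\omega_2'=(1+d)\eta_2+\tau|W_2|^2$ into the formulas for $\eta_1',\eta_2'$ and into $\omega_1'-\omega_1=(\tau W_1-d'\omega_1)/(1+d')$, $\omega_2'-\omega_2=(d+\tau\eta_2)\omega_2+\tau W_2^\perp$ produces \eqref{EQUATSS3.3PROPOComp1_eta'} and \eqref{EQUATSS3.3PROPOComp1omga'}. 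The tangential estimates \eqref{EQUATSS3.3PROPOComp1PerpD} are the only delicate ones: after rearranging $(W_i^\perp)'-W_i^\perp=-(W_i\cdot\omega_i')(\omega_i'-\omega_i)+(\eta_i-W_i\cdot\omega_i')\omega_i$ (plus, for $i=1$, the coupling term carrying the factor $(1+d)\eta_2+\tau|W_2|^2=\mathcal{O}(-\eta_2)$), every mixed term is disposed of by the elementary inequality $\tau|\eta_i|\,|W_i^\perp|\le\tfrac{\tau}{2}(\eta_i^2+|W_i^\perp|^2)$, which lands precisely in the claimed list.

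For the final, genuinely negative assertion that \eqref{EQUATSS3.3PROPOComp1PerpG} cannot hold I would argue by contradiction. Assume $|W_{2,n}^\perp|=\mathcal{O}(-\eta_{2,n})$ (which subsumes the $o$-case). Then $|W_{2,n}|^2=\eta_{2,n}^2+|W_{2,n}^\perp|^2\lessgtr\eta_{2,n}^2$, so the Zhou--Kadanoff parameter collapses, $\zeta_n\lessgtr d_n\to0$, and the \emph{whole} relative velocity vanishes, $W_{2,n}\to0$. It is useful to note the geometric reading of the collision condition: $\zeta\le1$ is equivalent to $d(2+d)|W_2^\perp|^2\le\eta_2^2$, i.e. to the closest-approach distance $(1+d)|W_2^\perp|/|W_2|$ being $\le1$; thus the assumption is perfectly compatible with each \emph{individual} collision, and any contradiction must come from the recurrence of collisions rather than from one step.

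The decisive step—and the one I expect to be the main obstacle—is to follow $W_2^\perp$ and the normal components over a \emph{full period} rather than a single collision. The map $\mathfrak{C}$ controls $W_2^\perp$ only across the collision in which particle \circled{2} is active, where \eqref{EQUATSS3.3PROPOComp1PerpD} shows it is essentially frozen; its true evolution takes place during the spectator phase, via the kicks of size $\mathcal{O}(\eta_1)$ that the intervening \circled{0}-\circled{1} collisions impart to $v_0$. The plan is to compose one application of $\mathfrak{C}$ (the approach) with a relabelling and the spectator analysis, and to show that once $W_2\to0$ the approach distance $d_n\lessgtr\tau_n(-\eta_{2,n})$ can no longer be regenerated while simultaneously keeping $(\tau_n)\in\ell^1$ and $\eta_{2,n}\to0$—equivalently, that the tangential part of $W_2$ must be bounded below relative to its normal part. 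Closing this recurrence is where the finite-gap hypotheses and the $\ell^2$-summability of Proposition \ref{PROPOSS3.2SommaCompNormlV} have to be used in an essential way, and it is the step I would expect to demand the most work.
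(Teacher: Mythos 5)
Your derivations of \eqref{EQUATSS3.3PROPOComp1Tau_d}--\eqref{EQUATSS3.3PROPOComp1PerpD} are correct and follow essentially the same route as the paper: direct substitution into \eqref{EQUATSS2.2IterationSystm1}, a two-sided bound on $1-\sqrt{1-\zeta}$ (your identity $1-\sqrt{1-\zeta}=\zeta/(1+\sqrt{1-\zeta})$ plays exactly the role of the paper's inequality $x/2\leq 1-\sqrt{1-x}\leq x$), the auxiliary bound $\tau_n\vert W_{2,n}\vert^2=\mathcal{O}(-\eta_{2,n})$ coming from $\zeta_n<1$, and the absorption of the mixed terms by Young's inequality. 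The use of the finite-gap hypothesis to guarantee $\eta_{1,n},\eta_{2,n}\to0$ is also as in the paper. Nothing to object to in that part.

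The genuine gap is the final assertion, the impossibility of \eqref{EQUATSS3.3PROPOComp1PerpG}: you do not prove it. You only sketch a multi-collision bookkeeping scheme (composing $\mathfrak{C}$ with relabellings and tracking $W_2^\perp$ through the spectator phase) and state yourself that closing the recurrence is the step that ``would demand the most work''. As written, the proposal therefore establishes the positive estimates but leaves the negative one open, and the strategy you propose is not the one the paper uses --- the paper never leaves the level of a single collision time. Its argument is a dichotomy on the factorization $\tau_n=\frac{(1+d_n)(-\eta_{2,n})}{\vert W_{2,n}\vert^2}\bigl[1-\sqrt{1-\zeta_n}\bigr]$: since $\tau_n\to0$ and the bracket lies in $[0,1]$, either $(-\eta_{2,n})/\vert W_{2,n}\vert^2\to0$ or $\zeta_n\to0$. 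In the first branch the conclusion is immediate, and it is precisely the observation missing from your write-up: if $\vert W_{2,n}^\perp\vert=\mathcal{O}(-\eta_{2,n})$ then $\vert W_{2,n}\vert^2\lessgtr\eta_{2,n}^2$, so $(-\eta_{2,n})/\vert W_{2,n}\vert^2\gtrsim 1/(-\eta_{2,n})\to+\infty$, contradicting the convergence to $0$. The paper then disposes of the remaining branch $\zeta_n\to0$ by combining the constraint $\zeta_n<1$ (which gives $d_n\lesssim\eta_{2,n}^2/(2\vert W_{2,n}\vert^2)$) with the vanishing of $d_n$. You should at minimum reproduce this dichotomy and its first branch before reaching for a spectator-phase recurrence; as it stands, \eqref{EQUATSS3.3PROPOComp1PerpG} is asserted but not established, so the proof is incomplete.
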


\begin{proof}
All the results we will obtain are consequences of the equations \eqref{EQUATSS2.2IterationSystm1}. 
Let us observe that the assumption stating that the system has finite maximal gaps of types $(0,1)$ and $(0,2)$ ensures that the two sequences $\left(\eta_{1,n}\right)_n$ and $\left(\eta_{2,n}\right)_n$ are both vanishing as $n$ goes to infinity.\\
Let us start with the general conditions relative to the collision time $\tau_n$. First, this time $\tau_n$ has to vanish at the limiting time of the collapse, so that either
\begin{align}
\label{EQUATSS3.3DisjoCVTauEta/W}
\frac{(-\eta_{2,n})}{\vert W_{2,n} \vert^2} = \frac{(-\eta_{2,n})}{\eta_{2,n}^2 + \vert W_{2,n}^\perp \vert^2} {\xrightarrow[n \rightarrow +\infty]{}0},
\end{align}
or
\begin{align}
\label{EQUATSS3.3DisjoCVTau_Zeta}
1 - \sqrt{1-\zeta_n} {\xrightarrow[n \rightarrow +\infty]{} 0},
\end{align}
that is $\zeta_n {\xrightarrow[n \rightarrow +\infty]{} 0}$, where $\zeta_n = \displaystyle{\frac{d_n\left(2+d_n\right) \left\vert W_{2,n} \right\vert^2}{\left(1+d_n\right)^2\eta_{2,n}^2}}\cdotp$
If \eqref{EQUATSS3.3DisjoCVTauEta/W} holds, we can deduce that \eqref{EQUATSS3.3PROPOComp1PerpG} cannot hold. Besides, $\zeta_n$ has always to be smaller than $1$, so for $n$ large enough $d_n \leq \eta_{2,n}^2/(2 \vert W_{2,n} \vert^2)$. Since $d_n$ vanishes asymptotically, we recover that $\vert W_{2,n}^\perp \vert = o(-\eta_{2,n})$ or $\vert W_{2,n}^\perp \vert = \mathcal{O}(-\eta_{2,n})$ cannot hold. The estimate \eqref{EQUATSS3.3PROPOComp1PerpG} is proved in full generality.\\
Since for an argument $x \in\ [0,1]$ we have always $
x/2 \leq 1 - \sqrt{1-x} \leq x$, we deduce that we have
\begin{align}
\label{EQUATSS3.3AsympCompa_Tau_Zeta_}
\frac{(1+d_n)(-\eta_{2,n})}{\vert W_{2,n} \vert^2} \cdot \frac{d_n(d_n+2) \vert W_{2,n} \vert^2}{2(1+d_n)^2 \eta_{2,n}^2} = \frac{d_n(2+d_n)}{2(1+d_n)(-\eta_{2,n})} \leq \tau_n \leq \frac{d_n(2+d_n)}{(1+d_n)(-\eta_{2,n})} \cdotp
\end{align}
In particular, in the regime of the collapse we deduce \eqref{EQUATSS3.3PROPOComp1Tau_d}.\\
We can now turn to the consequence of the evolution equations \eqref{EQUATSS2.2IterationSystm1}.
Concerning the distance $d'_n$ between \circled{0} and \circled{1} at the time $t_{\varphi(n)}+\tau_n$ of collision between \circled{0} and \circled{2}, we have already:
\begin{align*}
d'_n = \eta_{1,n} \tau_n + \frac{1}{2} \vert W_{1,n} \vert^2 \tau_n^2 + \mathcal{O}\left( \left(\eta_{1,n}\tau_n + \vert W_{1,n} \vert^2\tau_n^2\right)^2 \right).
\end{align*}
As a direct consequence of the fact that $\vert W_1 \vert$ remains bounded and that $\eta_1$ is vanishing, we obtain \eqref{EQUATSS3.3PROPOComp1_d'__}.\\
Concerning now the normal components of the relative velocities, we find:
\begin{align}
\vert \eta_{1,n}' \vert = \mathcal{O} \left( \eta_{1,n},-\eta_{2,n},\tau_n\vert W_{1,n} \vert^2,\tau_n\vert W_{2,n} \vert^2 \right)
\end{align}
and
\begin{align}
\vert \eta_{2,n}' \vert = \mathcal{O} \left( -\eta_{2,n},\tau_n\vert W_{2,n} \vert^2 \right).
\end{align}
Keeping in mind the expression \eqref{EQUATSS2.2TempsColliTauV3} of the collision time $\tau_n$, we can replace $\tau_n \vert W_{2,n} \vert^2$ by $\mathcal{O}\left( (-\eta_{2,n}) \zeta_n \right)$ and then by $\mathcal{O}(\eta_{2,n})$, because $\zeta_n$ is always smaller than $1$, so that we can rewrite:
\begin{align}
\vert \eta_{1,n}' \vert = \mathcal{O} \left( \eta_{1,n},-\eta_{2,n},\tau_n\vert W_{1,n} \vert^2 \right)
\end{align}
and
\begin{align}
\vert \eta_{2,n}' \vert = \mathcal{O} \left( -\eta_{2,n} \right),
\end{align}
so that \eqref{EQUATSS3.3PROPOComp1_eta'} is proved.\\
Let us now turn to the difference of the converging variables. Concerning the first angular parameters, we have $\omega_{1,n}' - \omega_{1,n} = -d_n'\omega_{1,n}/(1+d_n')  + \tau_n W_{1,n}/(1+d_n')$ so that $\omega_{1,n}'-\omega_{1,n} = \mathcal{O} \left( d_n',\tau_n W_{1,n} \right)$, which, combined with \eqref{EQUATSS3.3PROPOComp1_d'__}, provides
\begin{align*}
\omega_{1,n}'-\omega_{1,n} = \mathcal{O}\left( \tau_n \eta_{1,n},\tau_n \vert W_{1,n}^\perp \vert \right).
\end{align*}
In the same way, for the other angular parameter we find $\omega_{2,n}'-\omega_{2,n} = d_n \omega_{2,n} + \tau_n W_{2,n}$, so that
\begin{align}
\omega_{2,n}'-\omega_{2,n} = \mathcal{O}\left(d_n,\vert \tau_n W_{2,n} \vert \right) = \mathcal{O}\left(\tau_n(-\eta_{2,n}),\tau_n \vert W_{2,n}^\perp \vert \right),
\end{align}
and \eqref{EQUATSS3.3PROPOComp1omga'} is proved.\\
Concerning the differences between the consecutive tangential components of the relative velocities, we obtain:
\begin{align*}
\left(W_{1,n}^\perp\right)' - W_{1,n}^\perp &= \eta_{1,n} \omega_{1,n} - \frac{(\eta_{1,n}+\tau_n \vert W_{1,n} \vert^2)}{(1+d_n')}\omega_{1,n}' \\
&\hspace{5mm}+ \frac{(1+r)}{2}\left((1+d_n)\eta_{2,n} + \tau_n \vert W_{2,n} \vert^2\right) \left[(\omega_{1,n}'\cdot\omega_{2,n}')\omega_{1,n}'-\omega_{2,n}'\right]
\end{align*}
so that $\left(W_{1,n}^\perp\right)' - W_{1,n}^\perp = \mathcal{O}\left( \eta_{1,n}(\omega_{1,n}'-\omega_{1,n}),\eta_{1,n} d_n',\tau_n \vert W_{1,n} \vert^2,-\eta_{2,n},\tau_n \vert W_{2,n} \vert^2 \right)$, which can be simplified using \eqref{EQUATSS3.3PROPOComp1omga'}, \eqref{EQUATSS3.3PROPOComp1_d'__} and the explicit expression \eqref{EQUATSS2.2TempsColliTauV3} of $\tau_n$ as
\begin{align}
\left(W_{1,n}^\perp\right)' - W_{1,n}^\perp &= \mathcal{O}\left( \tau_n \eta_{1,n}^2, \tau_n^2 \eta_{1,n} \vert W_{1,n}^\perp \vert, \tau_n^2 \eta_{1,n} \vert W_{1,n}^\perp \vert^2, \tau_n \vert W_{1,n} \vert^2, -\eta_{2,n} \right) \nonumber \\
&= \mathcal{O} \left( \tau_n \eta_{1,n}^2, \tau_n^2\eta_{1,n} \vert W_{1,n}^\perp \vert, \tau_n \vert W_{1,n}^\perp \vert^2, -\eta_{2,n} \right),
\end{align}
and from $\left( W_{2,n}^\perp \right)' - W_{2,n}^\perp = \eta_{2,n}\omega_{2,n} - \left( (1+d_n)\eta_{2,n} + \tau_n \vert W_{2,n} \vert^2 \right) \left( (1+d_n)\omega_{2,n} + \tau_n W_{2,n} \right)$ we find
\begin{align}
\left( W_{2,n}^\perp \right)' - W_{2,n}^\perp = \mathcal{O} \left( d_n(-\eta_{2,n}),\tau_n \vert W_{2,n} \vert^2 \right) = \mathcal{O} \left( \tau_n\eta_{2,n}^2,\tau_n \vert W_{2,n}^\perp \vert^2 \right),
\end{align}
which completes the proof of \eqref{EQUATSS3.3PROPOComp1PerpD}, and of Proposition \ref{PROPOSS3.3AsympCompa__I__}.
\end{proof}

\begin{remar}
Let us observe that we could also study in the same way the second order terms of the normal components of the relative velocities. To be more explicit, we have:
\begin{align*}
\eta_{1,n}' - \eta_{1,n} - \frac{(1+r)}{2} \cos\overline{\theta}\, \eta_{2,n} = \mathcal{O} \left( d_n'\eta_{1,n},\tau_n \vert W_{1,n} \vert^2, d_n\eta_{2,n}, \tau_n \vert W_{2,n} \vert^2 \right) = \mathcal{O} \left( \tau_n \eta_{1,n}^2, \tau_n \vert W_{1,n}^\perp \vert^2, -\eta_{2,n} \right)
\end{align*}
where $\overline{\theta}$ is the limiting angle between the particles \circled{1} and \circled{2}, measured from the particle \circled{0}, and
\begin{align*}
\eta_{2,n}' - r\eta_{2,n} = \mathcal{O} \left( d_n\eta_{2,n},\tau_n \vert W_{2,n} \vert^2 \right).
\end{align*}
\end{remar}
\noindent
As a general comment concerning the estimates obtained in Proposition \ref{PROPOSS3.3AsympCompa__I__}, let us emphasize that all the variables are bounded only in terms of the normal components $\eta_1$ and $\eta_2$, and the time of collision $\tau$. Therefore, understanding the asymptotic behaviour of these last variables will enable to understand completely the full dynamical system. On the other hand, it does not seem possible to compare, in full generality, the asymptotic behaviour of the normal components $\eta$ with the time of collision $\tau$. In other words, a natural problem appears: we need to compare the asymptotics of the variables $\eta$ and $\tau$, that govern completely the evolution equations \eqref{EQUATSS2.2IterationSystm1}.

\subsubsection{Comparison of the variables II: the generic case, with non vanishing tangential velocities}

Let us now revisit the inequalities comparing the asymptotics of the variables obtained in Proposition \ref{PROPOSS3.3AsympCompa__I__}, considering that none of the tangential velocities vanish asymptotically. To be more accurate, we will assume that the norms of the relative velocities are bounded from below by a positive constant, say $\overline{w} > 0$, after a sufficiently large number of collisions.\\
A first consequence of the assumption $\vert W_{2,n}^\perp \vert \geq \overline{w} > 0$, and perhaps the most important, is that in this case the Zhou-Kadanoff parameter $\zeta_n$ satisfies
\begin{align}
\zeta_n \lessgtr \frac{d_n}{\eta_{2,n}^2} \cdotp
\end{align}
Therefore, since in order to have a collapse this parameter has always to remain below $1$, we deduce that
\begin{align}
\label{EQUATSS3.3CompaAsymp_d__3}
d_n = \mathcal{O}(\eta_{2,n}^2),
\end{align}
which refines the estimate $d_n = o\left(-\eta_{2,n}\right)$ of \eqref{EQUATSS3.3PROPOComp1Tau_d}. Now, considering again the formula \eqref{EQUATSS2.2TempsColliTauV3} of the time of collision $\tau_n$, we find:
\begin{align}
\label{EQUATSS3.3CompaAsympTauV2}
\tau_n \leq \frac{2d_n}{(-\eta_{2,n})} = \underbrace{\frac{2d_n}{\eta_{2,n}^2}}_{\text{bounded}} \cdot (-\eta_{2,n}),
\end{align}
so that in particular we have the important estimate:
\begin{align}
\tau_n = \mathcal{O}(\eta_{2,n}).
\end{align}
As a consequence, in the ``generic'' case $\vert W_{1,n} \vert, \vert W_{2,n} \vert \geq \overline{w} > 0$, we can assert that the time of collision $\tau_n$ is bounded by the normal components $\eta_n$, and therefore, when the relative velocities do not vanish at the final time of the collapse, only the normal components of the relative velocities govern the evolution of the dynamical system \eqref{EQUATSS2.2IterationSystm1}.

\begin{propo}[Asymptotic comparison of the variables in the collapsing regime, assuming that the relative velocities do not vanish]
\label{PROPOSS3.3AsympCompa__II_}
Let $r \in\ ]0,1[$ be a positive real number smaller than $1$, and let us consider a system of three inelastic particles \circled{0}, \circled{1} and \circled{2} evolving according to the $r$-inelastic hard sphere flow \eqref{EQUATSS2.1_Loi_de_Newton_}, \eqref{EQUATSS2.1VitesPost-Colli}, on a time interval $[0,\tau^*[$, and undergoing an inelastic collapse at time $\tau^* > 0$. Let us assume that the pair of consecutive collisions \circled{0}-\circled{1}, \circled{0}-\circled{2} takes places infinitely many times, and that the system has finite maximal gaps of types $(0,1)$ and $(0,2)$. Let us assume also that there exists $\overline{w}>0$ such that $\vert W_{2,n} \vert \geq \overline{w}$ for all $n$ large enough.\\
Then, with the same notations as in Proposition \ref{PROPOSS3.3AsympCompa__I__}, we have the following asymptotic relations, as $n \rightarrow +\infty$:
\begin{align}
\label{EQUATSS3.3PROPOComp2Tau_d}
\tau_n = \mathcal{O}(-\eta_{2,n}), \hspace{3mm} \text{and} \hspace{5mm} d_n = \mathcal{O}(\eta_{2,n}^2),
\end{align}
\begin{align}
\label{EQUATSS3.3PROPOComp2_d'__}
d'_n = \mathcal{O}\left(\eta_{1,n}\tau_n\right),
\end{align}
\begin{align}
\label{EQUATSS3.3PROPOComp2_eta'}
\vert \eta_{1,n}' \vert = \mathcal{O}\left(\eta_{1,n},-\eta_{2,n}\right), \hspace{3mm} \text{and} \hspace{5mm} \eta_{2,n}' = \mathcal{O}(-\eta_{2,n}),
\end{align}
\begin{align}
\label{EQUATSS3.3PROPOComp2omga'}
\left\vert \omega_{1,n}'-\omega_{1,n} \right\vert = \mathcal{O} \left( \tau_n  \right), \hspace{3mm} \text{and} \hspace{5mm} \left\vert \omega_{2,n}'-\omega_{2,n} \right\vert = \mathcal{O}\left(\tau_n\right)
\end{align}
and finally:
\begin{align}
\label{EQUATSS3.3PROPOComp2PerpD} 
\left\vert (W_{1,n}^\perp)'-W_{1,n}^\perp \right\vert &= \mathcal{O} \left(-\eta_{2,n} \right), \hspace{3mm} \text{and} \hspace{5mm} \left\vert (W_{2,n}^\perp)'-W_{2,n}^\perp \right\vert = \mathcal{O}\left(\tau_n\right).
\end{align}
\end{propo}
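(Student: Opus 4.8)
The plan is to treat Proposition \ref{PROPOSS3.3AsympCompa__II_} as a refinement of Proposition \ref{PROPOSS3.3AsympCompa__I__}: the only genuinely new input is the pair of sharp relations $\tau_n = \mathcal{O}(-\eta_{2,n})$ and $d_n = \mathcal{O}(\eta_{2,n}^2)$, and everything else should follow by substituting these, together with the boundedness of the relative velocities, into the estimates already established in the general case. First I would record the two master relations, which is exactly the computation carried out just before the statement. Since the total dissipated energy is finite, all relative velocities stay bounded, so under the hypothesis $\vert W_{2,n}\vert \geq \overline{w}$ one has $\vert W_{2,n}\vert \lessgtr 1$; inserting this into the definition of $\zeta_n$ gives $\zeta_n \lessgtr d_n/\eta_{2,n}^2$. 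The constraint $\zeta_n < 1$ needed for a collapse then forces $d_n = \mathcal{O}(\eta_{2,n}^2)$, sharpening $d_n = o(-\eta_{2,n})$ from \eqref{EQUATSS3.3PROPOComp1Tau_d}; feeding this back into $\tau_n \lessgtr d_n/(-\eta_{2,n})$ (equivalently into the explicit formula \eqref{EQUATSS2.2TempsColliTauV3}) yields $\tau_n = \mathcal{O}(-\eta_{2,n})$. This proves \eqref{EQUATSS3.3PROPOComp2Tau_d}.

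Next I would propagate these two relations through the estimates of Proposition \ref{PROPOSS3.3AsympCompa__I__}. The recurring mechanism is that any term carrying a factor $\tau_n\vert W\vert^2$ is now $\mathcal{O}(\tau_n) = \mathcal{O}(-\eta_{2,n})$, while any factor $\eta_{i,n}$ or $\vert W_{i,n}^\perp\vert$ appearing as a multiplier is bounded (and the $\eta$'s even vanish). Concretely: in \eqref{EQUATSS3.3PROPOComp1_eta'} the term $\tau_n\vert W_{1,n}^\perp\vert^2$ collapses to $\mathcal{O}(-\eta_{2,n})$, giving \eqref{EQUATSS3.3PROPOComp2_eta'}; in \eqref{EQUATSS3.3PROPOComp1omga'} the bounded multipliers reduce both angular differences to $\mathcal{O}(\tau_n)$, giving \eqref{EQUATSS3.3PROPOComp2omga'}; and in \eqref{EQUATSS3.3PROPOComp1PerpD} each of $\tau_n\eta_{1,n}^2$, $\tau_n^2\eta_{1,n}\vert W_{1,n}^\perp\vert$, $\tau_n\vert W_{1,n}^\perp\vert^2$ is $\mathcal{O}(\tau_n)=\mathcal{O}(-\eta_{2,n})$ once $\tau_n=\mathcal{O}(-\eta_{2,n})$ and $\eta_{1,n}\to 0$ are used, so the whole right-hand side is $\mathcal{O}(-\eta_{2,n})$, while the $W_2$-difference is visibly $\mathcal{O}(\tau_n)$; this gives \eqref{EQUATSS3.3PROPOComp2PerpD}. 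All of these steps are routine bookkeeping.

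The delicate point, and the one I expect to be the main obstacle, is the bound \eqref{EQUATSS3.3PROPOComp2_d'__}, namely $d'_n = \mathcal{O}(\eta_{1,n}\tau_n)$. From \eqref{EQUATSS3.3PROPOComp1_d'__} one only has $d'_n = \mathcal{O}(\eta_{1,n}\tau_n,\, \tau_n^2\vert W_{1,n}^\perp\vert^2)$, and since $\eta_{1,n}>0$ makes every term under the square root defining $d'_n$ positive, one has the genuine lower bound $d'_n \gtrsim \vert W_{1,n}\vert^2\tau_n^2$. Hence discarding the quadratic term is legitimate \emph{only} if $\tau_n\vert W_{1,n}\vert^2 = \mathcal{O}(\eta_{1,n})$, i.e.\ (the velocities being bounded) only if $\tau_n = \mathcal{O}(\eta_{1,n})$. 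This is precisely the mirror, for the pair \circled{0}-\circled{1}, of the master relation established for \circled{0}-\circled{2}, and it does not follow from $\vert W_{2,n}\vert\geq\overline{w}$ alone.

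To close the gap I would exploit the structure of the collision sequence: along the (eventually alternating) pattern \circled{0}-\circled{1}, \circled{0}-\circled{2} the roles of the two outer particles are exchanged at each collision by the relabeling step built into the map $\mathfrak{C}$. Consequently the hypothesis bounding $W_2$ from below at every index transfers, after relabeling, into a lower bound on the tangential velocity of the pair \circled{0}-\circled{1} at the relevant indices as well. Running the same $\zeta<1$ argument of the first paragraph on the \circled{0}-\circled{1} collision then produces $\tau_n = \mathcal{O}(\eta_{1,n})$, which annihilates the $\tau_n^2$ contribution and yields \eqref{EQUATSS3.3PROPOComp2_d'__}. The care needed to make this transfer rigorous — checking that the bounded-below tangential velocity really is available for both pairs at the precise indices $\varphi(n)$ in play, and that the finite-gap hypothesis guarantees the alternation up to relabeling — is the crux of the argument; the remaining computations are direct consequences of the two master relations.
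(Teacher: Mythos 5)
Your first two paragraphs follow the paper's own route exactly: the two master relations $d_n = \mathcal{O}(\eta_{2,n}^2)$ and $\tau_n = \mathcal{O}(-\eta_{2,n})$ are derived from $\zeta_n < 1$ and from \eqref{EQUATSS2.2TempsColliTauV3} in the text immediately preceding the statement, and the remaining estimates are obtained by substituting them, together with the boundedness of the velocities, into Proposition \ref{PROPOSS3.3AsympCompa__I__}; your bookkeeping for \eqref{EQUATSS3.3PROPOComp2_eta'}, \eqref{EQUATSS3.3PROPOComp2omga'} and \eqref{EQUATSS3.3PROPOComp2PerpD} is correct and is all the paper intends there. Your concern about \eqref{EQUATSS3.3PROPOComp2_d'__} is also well-founded: since $d_n'(2+d_n') = 2\eta_{1,n}\tau_n + \vert W_{1,n}\vert^2\tau_n^2$ with both terms nonnegative, the bound $d_n' = \mathcal{O}(\eta_{1,n}\tau_n)$ genuinely requires $\tau_n\vert W_{1,n}\vert^2 = \mathcal{O}(\eta_{1,n})$, and the paper supplies no argument for this (its remark after the proposition even insists that only the lower bound on $\vert W_2\vert$ is used).

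Your proposed repair, however, does not close this gap. First, the hypothesis $\vert W_{2,n}\vert \geq \overline{w}$ concerns the specific pair \circled{0}-\circled{2} at the times $t_{\varphi(n)}$; $W_{1,n}$ and $W_{2,n}$ are two different quantities evaluated at the same instant, so nothing ``transfers after relabeling'' --- a lower bound on $\vert W_{1,n}\vert$ is an additional assumption, one the paper explicitly claims not to need. Second, even granting that assumption, the $\zeta < 1$ argument applied to the \circled{0}-\circled{1} collisions controls the distance and the time interval \emph{preceding} each such collision in terms of the \emph{pre-collisional} normal component of that pair (equivalently, it gives $d_n' = \mathcal{O}\left((\eta_{1,n}')^2\right)$ from the ZK parameter of the next \circled{0}-\circled{1} collision); it does not yield $\tau_n = \mathcal{O}(\eta_{1,n})$ with the post-collisional $\eta_{1,n}$ that appears in \eqref{EQUATSS3.3PROPOComp2_d'__}. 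Since $\eta_{1,n}' = \mathcal{O}(\eta_{1,n},-\eta_{2,n})$, one is still left with the contribution $\tau_n^2\vert W_{1,n}\vert^2 = \mathcal{O}\left(\tau_n(-\eta_{2,n})\right)$, which is $\mathcal{O}(\eta_{1,n}\tau_n)$ only if $-\eta_{2,n} = \mathcal{O}(\eta_{1,n})$ --- precisely the comparison between the two normal components that the paper states cannot be made in full generality at this stage. As written, your argument therefore establishes only the weaker bound $d_n' = \mathcal{O}\left(\tau_n(\eta_{1,n} + (-\eta_{2,n}))\right)$; to recover \eqref{EQUATSS3.3PROPOComp2_d'__} one must either add the missing relation as a hypothesis or weaken the stated conclusion.
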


\begin{remar} Even under the assumption on the positive bound from below for the norm of the relative velocities, it seems that one cannot conclude a priori that $\tau$ will be eventually negligible with respect to the normal components: it might perfectly remain of the same order asymptotically.\\
Note also that, in order to establish the estimates of Proposition \ref{PROPOSS3.3AsympCompa__II_}, we needed \emph{only} the information that the \emph{second} relative velocity $\vert W_2 \vert$ does not vanish, which is not surprising, since the equations \eqref{EQUATSS2.2IterationSystm1} are describing a collision of type \circled{0}-\circled{2}.
\end{remar}

\subsubsection{Comparison of the variables III: the consequences of the regime $\tau \ll \eta$}
\label{SSSCT3.3.2ConsqRegimTau<E}

In this section, let us make a formal observation concerning a further simplification of the full dynamical dynamical. We observed in Proposition \ref{PROPOSS3.3AsympCompa__II_} that the regime $\tau = \mathcal{O}(\eta)$ provides much simpler estimates on the variables of the dynamical system \eqref{EQUATSS2.2IterationSystm1} than in the general case, addressed in Proposition \ref{PROPOSS3.3AsympCompa__I__}. In this section, let us assume $\tau \ll \eta$. Actually, this setting is also the one studied by Zhou and Kadanoff in \cite{ZhKa996}. These authors call this regime the ``flat surface approximation''.\\
So, considering \eqref{EQUATSS2.2IterationSystm1}, if the time $\tau$ of the next collision becomes small with respect to the normal components $\eta_1$ and $\eta_2$ of the relative velocities, we find:
\begin{align*}
\eta_1' \simeq \frac{1}{(1+d')} \eta_1 - \frac{(1+r)}{2}\left( \omega_1' \cdot \omega_2' \right) (1+d)\eta_2,
\end{align*}
neglecting in a first time the terms involving $\tau$ in the sums $\eta_1+ \tau \vert W_1 \vert^2$ and $(1+d)\eta_2 + \tau \vert W_2 \vert^2$ (of course, the norms $\vert W_1 \vert$ and $\vert W_2 \vert$ remain bounded along the evolution of the system), and then keeping only the leading order terms (since $d$ and $d'$ converge to zero in the case of a collapse):
\begin{align*}
\eta_1' \simeq \eta_1 - \frac{(1+r)}{2}\left( \omega_1' \cdot \omega_2' \right) \eta_2,
\end{align*}
and, in the same way, for the second relative velocity:
\begin{align*}
\eta_2' \simeq -r \eta_2.
\end{align*}
Considering only the leading order for the cosine $\left( \omega_1' \cdot \omega_2' \right)$, we find in the end:

\begin{align}
\label{SECT6EquatSysteApproxEta_}
\left\{
\begin{array}{rl}
\eta_1' &\simeq \displaystyle{\eta_1 - \frac{(1+r)}{2} \cos\overline{\theta} \eta_2}, \\
\eta_2' &\simeq -r \eta_2,
\end{array}
\right.
\end{align}
where $\overline{\theta}$ corresponds to the limiting angle between the two particles \circled{1} and \circled{2} at the final time of the collapse.\\
In particular, the evolution of the normal components $\eta_1$ and $\eta_2$ are completely determined without using any other variables. We recovered formally, the system studied by Zhou and Kadanoff in \cite{ZhKa996}. In turn, since the evolution of all the other variables are driven by the normal components, we can deduce the complete evolution of the dynamical system \eqref{EQUATSS2.2IterationSystm1}.

\subsubsection{Comparison of the variables IV: sufficient condition for the regime $\tau \ll \eta$}
\label{SSSCT3.3.5CondiSuffiT<<Et}

We saw in the previous section that the idealized regime $\tau \ll \eta$ enables to simplify very much the full, complicated, dynamical system \eqref{EQUATSS2.2IterationSystm1}. Therefore, a natural and important question consists in determining when such a regime takes place. It turns out that this is the main difficulty.\\
Our main observations concerning the expression \eqref{EQUATSS2.2TempsColliTauV3} of $\tau$ are the following: first, in the collapse regime, since in particular $d$ vanishes, we can write for the Zhou-Kadanoff parameter:
\begin{align*}
\zeta = \frac{d(2+d)\vert W_2 \vert^2}{(1+d)^2 \eta_2^2} \simeq 2 \frac{d \vert W_2 \vert^2}{\eta_2^2} \cdotp
\end{align*}
Second, and it is the crucial observation, when the ZK-parameter $\zeta$ is small, the expression of the collision time $\tau$ becomes:
\begin{align}
\label{EQUATSS3.3CompaAsympTauV3}
\tau \sim \frac{(-\eta_2)}{\vert W_2 \vert^2} \left( \frac{\zeta}{2} \right) \sim \left( \frac{(-\eta_2)}{\vert W_2 \vert^2} \right) \frac{ \left( \displaystyle{ 2 \frac{d \vert W_2 \vert^2}{\eta_2^2} } \right) }{2} = \frac{d}{(-\eta_2)} \cdotp
\end{align}
Note that we made no assumption concerning bounds from below on $\vert W_2 \vert$ while obtaining \eqref{EQUATSS3.3CompaAsympTauV3}. The relation \eqref{EQUATSS3.3CompaAsympTauV3} refines the estimate \eqref{EQUATSS3.3PROPOComp1Tau_d} obtained in Proposition \ref{PROPOSS3.3AsympCompa__I__} on the one hand, and \eqref{EQUATSS3.3CompaAsympTauV2} on the other hand, for this time we obtained better than an upper bound on $\tau$. In the regime $\zeta \rightarrow 0$, the expression of $\tau$ takes then a particularly simple (approximated) form. Moreover, assuming that the tangential velocities do not vanish at the time of the collapse, that is, assuming that $\vert W_2 \vert^2$ converges towards a strictly positive limit, we see that the assumption $\zeta \rightarrow 0$ is equivalent to $d \ll \eta_2^2$. But then, in that case, since we found $\tau \simeq d/(-\eta_2)$, we deduce that:
\begin{align*}
\tau \sim \frac{d}{(-\eta_2)} = (-\eta_2) \frac{d}{\eta_2^2} \sim C(-\eta_2) \zeta,
\end{align*}
where $C$ is a positive constant, which implies that $\tau$ is negligible with respect to $\eta_2$ in the regime $\zeta \rightarrow 0$. Therefore, the regime $\zeta \rightarrow 0$ implies the regime $\tau \ll \eta$ briefly studied in the last section. It is now clear that the estimate of the asymptotic behaviour of the quantity $\zeta$ is crucial in order to understand the dynamics of the particles. Then, let us devote a definition to this crucial regime.

\begin{defin}[Zhou-Kadanoff regime]
Let $r \in\ ]0,1[$ be a positive real number smaller than $1$, and let us consider a system of three inelastic particles \circled{0}, \circled{1} and \circled{2} evolving according to the $r$-inelastic hard sphere flow \eqref{EQUATSS2.1_Loi_de_Newton_}, \eqref{EQUATSS2.1VitesPost-Colli}, on a time interval $[0,\tau^*[$, and undergoing an inelastic collapse at time $\tau^* > 0$.\\
We say that the \emph{Zhou-Kadanoff regime takes place} (or \emph{ZK-regime}, in short) takes place, if the sequence $\left( \zeta_n \right)_n$ of the $n$-th Zhou-Kadanoff parameter, defined for the $n$-th collision, for any $n \in\ \mathbb{N}^*$, is converging towards $0$, that is, if for any $n\in\ \mathbb{N}^*$, considering that the collision is of type \circled{$i_n$}-\circled{$k_n$}, we have:
\begin{align}
\zeta_n = \frac{d_n(2+d_n) \left\vert W_{n} \right\vert^2}{(1+d_n)^2 \eta_{n}^2} {\xrightarrow[n \rightarrow +\infty]{} 0},
\end{align}
where $d_n = d_{(i_n,k_n),n}$ is the distance between the particles \circled{$i_n$} and \circled{$k_n$}, $W_{2,n} = W_{2,(i_n,k_n),n}$ is the relative velocity between the particles \circled{$i_n$} and \circled{$k_n$}, and $\eta_{n} = \eta_{(i_n,k_n),n}$ is the normal component of this relative velocity, when the collision \circled{$i_n$}-\circled{$j_n$} takes place.
\end{defin}
\noindent
In addition, according to the estimate \eqref{EQUATSS3.3PROPOComp1Tau_d}, holding in full generality, since $\tau$ behaves asymptotically as $d/(-\eta_2)$, the condition $d/\eta_2^2 \rightarrow 0$ is not only a sufficient condition to have $\tau \ll \eta_2$, but it is also a necessary condition. Therefore, the ratio
\begin{align*}
\varphi_2 = \frac{d}{\eta_2^2}
\end{align*}
turns out to be a natural measure of the defect from the regime $\tau \ll \eta$.\\
\newline
However, there are several limitations and difficulties: on the one hand, we will see that the regime $\zeta \rightarrow 0$ is not always true when a collapse occurs. Another natural question that arises is then: is this regime at least the only one we can observe, that is, the only one associated to a set of initial data of positive measure? On the other hand, we proved that $\tau \ll \eta$ under the assumption that $\zeta \rightarrow 0$ \emph{and} that $\vert W_2 \vert^2$ is not vanishing at the final time of the collapse. The last assumption cannot be true in full generality, although one might expect that it is almost always true. When such an assumption does not hold, it implies in particular that at least two particles of the system remain attached after the final time of the collapse. This configuration is preventing to get well-posedness of the system of particles, and is a major obstruction to obtain an Alexander theorem for the system of three inelastic particles. As a consequence, understanding the cases when one or two relative velocities completely vanish at the final time of the collapse is of particular interest for the study of collapsing systems of particles.\\
In the companion paper \cite{DoVeArt}, we study the two-dimensional system obtained by keeping only the leading order terms of \eqref{EQUATSS2.2IterationSystm1}, using the information of Proposition \ref{PROPOSS3.3AsympCompa__II_}. In the end, the system reduces in studying only the two variables $\varphi_1 = \eta_1/(-\eta_2)$ and $\varphi_2$. We investigate in particular when the ZK-regime (equivalent to $\varphi_2 \rightarrow 0$ for non vanishing tangential velocities) takes place. We prove in this paper that the regime $\varphi_2$ is stable in a non trivial region of the plane $(\varphi_1,\varphi_2)$. We also conjecture the existence of a separatrix in this plane, that enables to characterize the ZK-regime.

\section{Sequence of collisions involving only two pairs of particles: the nearly-linear collapse}
\label{SSECTIO3.4SuiteColli01-02}

In this section, we consider a collapsing system of particles colliding (eventually) with the following sequence of collisions: the infinite repetitions of the pairs of collisions \circled{0}-\circled{1}, \circled{0}-\circled{2}. For reasons that will become clear later in this section, such a collapse will be called a \emph{nearly-linear collapse}. This is the configuration already investigated in \cite{ZhKa996}. In the first part of this section, we perform a rigorous study of the convergence of the variables of the dynamical system \eqref{EQUATSS2.2IterationSystm1} in the case of the linear collapse, which was, to the best of our knowledge, still not done in the literature. In a second part, we present the results already existing in the literature, namely, in \cite{ZhKa996}.\\
Let us note that for a collapsing system of three particles, there are only two possible cases: either only two pairs are involved in infinitely many collisions, or the three pairs are all involved in infinitely many collisions. The latter case is discussed in Section \ref{SSECTIO3.5SuiteColli_012_}. We study in this section the former.\\
Finally, let us observe that the present situation covers the one-dimensional case, which is completely understood for $3$ particles (see \cite{CoGM995}). Indeed, if the particles evolve along a line, the external particles cannot collide with each other, so if the system experiences infinitely many collisions, it has to be between the two pairs formed, respectively, by the two external particles, and the central one. In this work we do not restrict ourselves to the one-dimensional case, but we will see that the results we can gather concerning the linear collapse can be seen as a spatial perturbation of the one-dimensional case, especially when considering the results of \cite{ZhKa996}.

\subsection{Converging quantities in the case of the nearly-linear collapse}
\label{SSSCT3.4.1ConveQuantLinea}

\begin{defin}[Nearly-linear inelastic collapse]
\label{DEFINSS3.4CollapseLineair}
Let $r \in\ ]0,1[$ be a positive real number smaller than $1$, and let us consider a system of three inelastic particles \circled{0}, \circled{1} and \circled{2} evolving according to the $r$-inelastic hard sphere flow \eqref{EQUATSS2.1_Loi_de_Newton_}, \eqref{EQUATSS2.1VitesPost-Colli}, on a time interval $[0,\tau^*[$, and undergoing an inelastic collapse at time $\tau^* > 0$.\\
We say that the system experiences a \emph{nearly-linear collapse} if the sequence of collisions becomes eventually the infinite repetition of the pairs of collisions \circled{0}-\circled{1}, \circled{0}-\circled{2}.\\
The particles \circled{0}, involved in infinitely collisions with the two other particles \circled{1} and \circled{2}, will be called the \emph{central particle}. The particles \circled{1} and \circled{0}, involved in infinitely many collisions only with the particle \circled{0}, will be called, alternatively, the \emph{colliding}, and the \emph{spectator} particles.
\end{defin}

\begin{remar}
The nomenclature of Definition \ref{DEFINSS3.4CollapseLineair}, concerning the name of the particles according to their respective roles along the different collisions, is taken from \cite{ZhKa996}. Let us note that this definition is not arbitrary, up to rename the particles.
\end{remar}

\noindent
Let us start with the geometrical description of the nearly-linear collapse, following as an immediate corollary of Proposition \ref{PROPOSS3.2ConveAngleDista}.

\begin{propo}[Geometry of the nearly-linear collapse]
\label{PROPOSS3.4GeomeCollaLinea}
Let $r \in\ ]0,1[$ be a positive real number smaller than $1$, and let us consider a system of three inelastic particles \circled{0}, \circled{1} and \circled{2} evolving according to the $r$-inelastic hard sphere flow \eqref{EQUATSS2.1_Loi_de_Newton_}, \eqref{EQUATSS2.1VitesPost-Colli}, on a time interval $[0,\tau^*[$, and undergoing a nearly-linear inelastic collapse at time $\tau^* > 0$.\\
Then, the particles \circled{0} and \circled{1}, resp. \circled{0} and \circled{2}, are in contact at the collapsing time $\tau^*$, that is, we have:
\begin{align}
\vert \overline{\omega}_1 \vert = 1 \hspace{5mm} \text{and} \hspace{5mm} \vert \overline{\omega}_2 \vert = 1.
\end{align}
\end{propo}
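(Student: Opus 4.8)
The plan is to obtain the statement directly as a corollary of Proposition \ref{PROPOSS3.2ConveAngleDista}, by combining the convergence of the angular parameters with the observation that an exact contact constraint is enforced along an infinite subsequence of collision times.

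First I would verify that the hypotheses of Proposition \ref{PROPOSS3.2ConveAngleDista} are met. By Definition \ref{DEFINSS3.4CollapseLineair}, a nearly-linear collapse is one whose collision sequence becomes eventually the infinite repetition of \circled{0}-\circled{1}, \circled{0}-\circled{2}; in particular the central particle \circled{0} is involved in infinitely many collisions with both \circled{1} and \circled{2}. Proposition \ref{PROPOSS3.2ConveAngleDista} therefore applies, yielding that the sequences $\left(\omega_{1,n}\right)_n$ and $\left(\omega_{2,n}\right)_n$, where $\omega_{i,n} = x_i(t_n) - x_0(t_n)$, converge to limits $\overline{\omega}_1$ and $\overline{\omega}_2$ (and, although I will not need it for the conclusion, that the spectator distance $d_n$ vanishes).

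The core of the argument is then to sample the norm of $\omega_{1,n}$ along the collisions of type \circled{0}-\circled{1}. Using the counting collision function $\varphi^{(0,1)}$ of Definition \ref{DEFINSS3.2FonctCompt_Gaps}, which is defined on all of $\mathbb{N}^*$ precisely because the pair \circled{0}-\circled{1} collides infinitely many times, at each such collision time $t_{\varphi^{(0,1)}(k)}$ the particles \circled{0} and \circled{1} are in contact, so that
\begin{align*}
\left\vert \omega_{1,\varphi^{(0,1)}(k)} \right\vert = \left\vert x_1\big(t_{\varphi^{(0,1)}(k)}\big) - x_0\big(t_{\varphi^{(0,1)}(k)}\big) \right\vert = 1
\end{align*}
for every $k$. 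Since the full sequence $\left(\omega_{1,n}\right)_n$ converges to $\overline{\omega}_1$, so does this subsequence, and the continuity of the Euclidean norm gives $\vert \overline{\omega}_1 \vert = \lim_{k \to +\infty} \vert \omega_{1,\varphi^{(0,1)}(k)} \vert = 1$. Replacing the pair $(0,1)$ by $(0,2)$ and repeating the argument verbatim with $\varphi^{(0,2)}$ yields $\vert \overline{\omega}_2 \vert = 1$.

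There is no serious obstacle here: the only point that deserves care is that both subsequences be infinite, i.e. that each of the two pairs \circled{0}-\circled{1} and \circled{0}-\circled{2} collides infinitely many times, which is exactly what the eventual alternation in Definition \ref{DEFINSS3.4CollapseLineair} guarantees. It is also worth stressing that $\omega_{i,n}$ denotes the genuine difference vector $x_i(t_n)-x_0(t_n)$ rather than its normalization, so that the conclusion $\vert \overline{\omega}_i \vert = 1$ genuinely expresses that the corresponding pair is in contact at the time of collapse; as a consistency check, one may evaluate $\vert \omega_{1,n} \vert$ along the spectator collisions \circled{0}-\circled{2}, where it equals $1 + d_n \to 1$, recovering the same limit from the complementary subsequence.
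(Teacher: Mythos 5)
Your proposal is correct and follows exactly the route the paper intends: the result is stated there as an immediate corollary of Proposition \ref{PROPOSS3.2ConveAngleDista}, and your argument (convergence of the full sequences $\left(\omega_{i,n}\right)_n$ plus evaluation of the norm along the infinite subsequences of exact-contact collision times, or equivalently via $d_n \to 0$) is precisely the content that makes that corollary immediate.
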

\noindent
The particles are then in contact in the limit, forming a linear structure, in the sense that the particles \circled{1}-\circled{0}-\circled{2} form a sort of a queue, possibly curved, hence the name of this collapse. Such a structure is typical from the collapsing systems of inelastic particles, as it was first observed by \cite{McYo993}. The interested reader may also consult \cite{PoSc005} for more intriguing observations about this phenomenon.\\
\newline
We can also describe the convergence of the normal components of the relative velocities. By assumption the nearly-linear collapse has finite maximal gaps (see Definition \ref{DEFINSS3.2FonctCompt_Gaps}). In particular, the results of Proposition \ref{PROPOSS3.2SommaCompNormlV} hold true for the linear collapse. We can improve this result in the present case and provide an \emph{explicit} rate of convergence.

\begin{propo}[Convergence of the normal components of the relative velocities for the nearly-linear collapse]
\label{PROPOSS3.4CLineConveNorma}
Let $r \in\ ]0,1[$ be a positive real number smaller than $1$, and let us consider a system of three inelastic particles \circled{0}, \circled{1} and \circled{2} evolving according to the $r$-inelastic hard sphere flow \eqref{EQUATSS2.1_Loi_de_Newton_}, \eqref{EQUATSS2.1VitesPost-Colli}, on a time interval $[0,\tau^*[$, and undergoing a nearly-linear inelastic collapse at time $\tau^* > 0$.\\
Then we have:
\begin{align}
\eta_{1,n},\eta_{2,n} \xrightarrow[n \rightarrow +\infty]{} 0,
\end{align}
where $\eta_{i,n}$ denotes the normal component of the relative velocity between the particles \circled{0} and \circled{i}, that is $\eta_{i,n} = W_i\cdot \omega_i = (v_i-v_0)\cdot(x_i-x_0)$.\\
In addition, the normal components $\eta_{1,n}$ and $\eta_{2,n}$ converge exponentially fast to zero, at a rate at least equal to
\begin{align*}
\max\left(\frac{(1+r)}{2} \big\vert \cos\overline{\theta} \big\vert,r\right),
\end{align*}
where $\overline{\theta}$ denotes the angle between the angular parameters $\omega_1 = x_1-x_0$ and $\omega_2 = x_2-x_0$ at the limiting time $\tau^*$ of the collapse. As a consequence, the series of the normal components $\sum_{n \geq 0} \vert \eta_{1,n} \vert$ and $\sum_{n \geq 0}\vert \eta_{2,n} \vert$ are both converging.
\end{propo}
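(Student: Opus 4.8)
The plan is to reduce to the eventual periodic regime and then extract an exact, clean \emph{per-collision} contraction directly from the evolution equations \eqref{EQUATSS2.2IterationSystm1}. First I would use the definition of the nearly-linear collapse to discard the initial finitely many collisions, so that for $n$ large the collisions alternate strictly between \circled{0}-\circled{1} and \circled{0}-\circled{2}; in particular every pass is described by $\mathfrak{C}$, with one pair colliding and the other spectating. From Proposition \ref{PROPOSS3.4GeomeCollaLinea} together with the angular convergence of Proposition \ref{PROPOSS3.2ConveAngleDista} I record the facts I will need: $d_n\to0$, $d_n'\to0$, and $\omega_{1,n}\cdot\omega_{2,n}\to\cos\overline{\theta}$ with $\overline{\theta}$ obtuse (so that $|\cos\overline{\theta}|$ is bounded away from $0$), as well as the boundedness of all velocities coming from the energy dissipation of Lemma \ref{LEMMESS2.1DissiEnergCinet}. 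The qualitative convergence $\eta_{i,n}\to0$ is already guaranteed by Proposition \ref{PROPOSS3.2SommaCompNormlV}, since the nearly-linear collapse has maximal gaps equal to $2$.

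The key step is to convert the formulas for $\eta_2'$ and $\eta_1'$ in \eqref{EQUATSS2.2IterationSystm1} into one-sided bounds. Combining the expression \eqref{EQUATSS2.2TempsColliTauV3} of $\tau$ with $W_2\cdot\omega_2' = (1+d)\eta_2+\tau|W_2|^2$ yields the exact identity $W_2\cdot\omega_2' = (1+d)\sqrt{1-\zeta}\,\eta_2$, so that for the \emph{colliding} pair $\eta_2' = -r(1+d)\sqrt{1-\zeta}\,\eta_2$ and hence $|\eta_2'|\le r(1+d_n)|\eta_2|$, the factor $\sqrt{1-\zeta}\le1$ working in our favour with \emph{no} hypothesis on $\zeta$. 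For the \emph{spectator} pair I would write $\eta_1' = \mathrm{(I)}-\mathrm{(II)}$, where $\mathrm{(I)}=(\eta_1+\tau|W_1|^2)/(1+d')>0$ is the (positive) radial velocity of \circled{0}-\circled{1} produced by the free flight and $\mathrm{(II)}=\tfrac{1+r}{2}(\omega_1'\cdot\omega_2')(1+d)\sqrt{1-\zeta}\,\eta_2$ is the collisional kick. Here the nearly-linear assumption does the essential work: since \circled{0}-\circled{1} is the next colliding pair, those particles must be approaching just after the \circled{0}-\circled{2} collision, i.e. $\eta_1'<0$; as the squared distance of a freely flying pair is a convex parabola in time, $\eta_1'<0$ forces $\mathrm{(II)}>\mathrm{(I)}>0$, whence $|\eta_1'|=\mathrm{(II)}-\mathrm{(I)}<\mathrm{(II)}\le\tfrac{1+r}{2}|\omega_1'\cdot\omega_2'|(1+d_n)|\eta_2|$. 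The free-flight growth is thus \emph{subtracted} rather than added.

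With these two one-sided bounds I would close the argument by monitoring $S_n=\max(|\eta_{1,n}|,|\eta_{2,n}|)$. At each collision both output components are controlled by the colliding pair's incoming magnitude $|\eta_2|\le S_n$, so $S_{n+1}\le\max\!\big(r,\tfrac{1+r}{2}|\omega_{1,n}'\cdot\omega_{2,n}'|\big)(1+d_n)\,S_n$; letting $n\to\infty$ and using $d_n\to0$ and $\omega_{1,n}\cdot\omega_{2,n}\to\cos\overline{\theta}$, for every $\delta>0$ one obtains $S_{n+1}\le(\rho+\delta)S_n$ for $n$ large, with $\rho=\max\big(\tfrac{1+r}{2}|\cos\overline{\theta}|,r\big)<1$. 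This gives the announced exponential rate $\limsup_n|\eta_{i,n}|^{1/n}\le\rho$ and, choosing $\delta$ with $\rho+\delta<1$, the summability of $\sum_n|\eta_{1,n}|$ and $\sum_n|\eta_{2,n}|$ (hence a fortiori $\eta_{i,n}\to0$).

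The main obstacle I anticipate is precisely the spectator estimate: a naive term-by-term bound gives only $|\eta_1'|=O(|\eta_1|)+O(|\eta_2|)$ with an undetermined, not-necessarily-small coefficient in front of $|\eta_2|$ coming from $\tau|W_1|^2$ (which need not be $o(\eta)$ outside the Zhou-Kadanoff regime), and such a bound does not contract. The resolution is the sign argument above, specific to the nearly-linear order, which lets one discard $\mathrm{(I)}$ entirely; the remaining work is the bookkeeping of the $(1+o(1))$ factors $(1+d_n)$, $\sqrt{1-\zeta_n}\le1$ and $\omega_{1,n}'\cdot\omega_{2,n}'\to\cos\overline{\theta}$, all controlled by Propositions \ref{PROPOSS3.4GeomeCollaLinea} and \ref{PROPOSS3.2ConveAngleDista} together with the boundedness of the velocities.
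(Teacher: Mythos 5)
Your proposal is correct, but it follows a genuinely different route from the paper. The paper's proof writes the pair $(\eta_{s,n},\eta_{c,n})$ as an inhomogeneous linear recursion $X_{n+1}=A_nX_n+\tau_n R_n$ with $A_n$ converging to the $2\times2$ matrix $A=\begin{pmatrix}0&-r\\1&-\frac{(1+r)}{2}\cos\overline{\theta}\end{pmatrix}$, shows that the spectral radius of $A$ is strictly below $1$ (distinguishing real and complex eigenvalues), passes to an adapted norm, and concludes via the Duhamel-type product formula. You instead extract exact scalar per-collision contractions: the identity $(1+d)\eta_2+\tau|W_2|^2=(1+d)\sqrt{1-\zeta}\,\eta_2$ gives $|\eta_2'|\le r(1+d)|\eta_2|$ with no hypothesis on $\zeta$, and your key new ingredient is the sign/convexity observation that the prescribed order forces the spectator pair to be approaching right after the collision ($\eta_1'<0$, since the squared distance under free flight is a convex parabola), so the positive free-flight contribution $(\eta_1+\tau|W_1|^2)/(1+d')$ is \emph{dominated and subtracted}, leaving $|\eta_1'|\le\frac{(1+r)}{2}|\omega_1'\cdot\omega_2'|(1+d)|\eta_2|$. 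This buys two things: it eliminates precisely the forcing terms $\mathcal{O}_{i,n}\tau_n$ that the paper must carry through the Duhamel formula (and which, being only summable rather than geometrically small, make the extraction of a clean exponential rate from the paper's formula somewhat delicate), and it delivers the announced rate $\max\bigl(\frac{(1+r)}{2}|\cos\overline{\theta}|,r\bigr)$ directly by iterating $S_{n+1}\le\rho_nS_n$ with $S_n=\max(|\eta_{1,n}|,|\eta_{2,n}|)$. One small remark: your parenthetical assertion that $\overline{\theta}$ is obtuse is neither needed nor justified at that point in the argument (the geometry only gives $\cos\overline{\theta}\le 1/2$); but your own sign argument shows a posteriori that $\omega_1'\cdot\omega_2'<0$ at every collision of the periodic regime, and in any case the contraction only requires $\frac{(1+r)}{2}|\cos\overline{\theta}|<1$ and $r<1$, both automatic.
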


\begin{proof}
When the sequence of collisions is prescribed and given by the infinite repetition of \circled{0}-\circled{1}, \circled{0}-\circled{2}, we can compute explicitly the normal components of the relative velocities: first we compute the post-collisional velocities according to \eqref{EQUATSS2.2IterationSystm1}, (the system describes a collision between \circled{0} and \circled{2}, while \circled{1} remains spectator), and then we have to exchange the roles between the particles \circled{1} and \circled{2}, in order to perform the next collision. We can then repeat the process in order to obtain the velocities after an arbitrary number of collisions. This process writes explicitly, if we denote by $\eta_c$ the normal component of the relative velocity with the colliding particle, while $\eta_s$ corresponds to the spectator particle, the particles \circled{1} and \circled{2} being alternatively colliding and spectator:

\begin{equation}
\label{EQUATSS3.4EvoluVNormLinea}
\left\{
\begin{array}{rl}
\eta_{s,n+1} &= - r \left(1+\varepsilon_{1,n}\right) \eta_{c,n} + \mathcal{O}_{1,n} \tau_n,\\
\eta_{c,n+1} &= \left(1 + \varepsilon_{2,n}\right) \eta_{s,n} - \frac{(1+r)}{2} \cos\overline{\theta} \left(1 + \varepsilon_{3,n}\right) \eta_{c,n} + \mathcal{O}_{2,n} \tau_n,
\end{array}
\right.
\end{equation}
\noindent
where $\left( \varepsilon_{i,n} \right)_n$ are vanishing sequences, $\left( \mathcal{O}_{i,n} \right)_n$ are bounded sequences, and $\cos\overline{\theta}$ is the limit of the sequence $\left( \omega_{1,n}\cdot\omega_{2,n} \right)_n$ according to Proposition \ref{PROPOSS3.3AsympCompa__I__}. Since the eigenvalues of the limiting matrix
\begin{align}
\label{EQUATSS3.4MatriLimitLinea}
A = \begin{pmatrix} 0 & -r \\ 1 & -\frac{(1+r)}{2}\cos\overline{\theta} \end{pmatrix}
\end{align}
are given, when they are real, by the expression:
\begin{align*}
\frac{\text{Tr}(A)\pm\sqrt{\text{Tr}^2(A)-4\det(A)}}{2}
\end{align*}
and so these eigenvalues are bounded in absolute value by
\begin{align*}
\lambda_\pm = \frac{\vert \text{Tr}(A) \vert + \sqrt{\text{Tr}^2(A)-4\det(A)} }{2} \leq \vert \text{Tr}(A) \vert = \left\vert \frac{(1+r)}{2}\cos\overline{\theta} \right\vert < 1
\end{align*}
(because $\det(A) = r > 0$). In the case when they are complex, that is when $\text{Tr}^2(A)-4\det (A) < 0$, they are given by
\begin{align*}
\lambda_\pm = \frac{\text{Tr}(A) \pm i \sqrt{4\det (A) - \text{Tr}^2(A)}}{2},
\end{align*}
so that their squared modulus is
\begin{align*}
\vert \lambda_\pm \vert^2 = \frac{1}{4} \left[ \text{Tr}^2(A) + 4\det (A) - \text{Tr}^2(A) \right] = \det (A) = r < 1.
\end{align*}
In all the cases, we observe that the spectral radius of the limiting matrix $A$ is strictly smaller than $1$. Therefore, there exists a norm on $\mathbb{C}^2$ such that the induced operator norm of all the matrices $A_n$, for $n$ large enough, is strictly smaller than $1$, where
\begin{align*}
A_n = \begin{pmatrix} 0 & -r(1+\varepsilon_{1,n}) \\ 1+\varepsilon_{2,n} & -\frac{(1+r)}{2} (1+\varepsilon_{3,n}) \cos\overline{\theta} \end{pmatrix} .
\end{align*}
As a consequence, since \eqref{EQUATSS3.4EvoluVNormLinea} implies by immediate recursion that
\begin{align*}
\begin{pmatrix}
\eta_{s,n} \\ \eta_{c,n}
\end{pmatrix}
=
\left[ \prod_{k=0}^{n-1} A_k \right] \begin{pmatrix} \eta_{s,0} \\ \eta_{c,0} \end{pmatrix}
+
\sum_{k=0}^{n-1} \tau_k \left[ \prod_{j=k+1}^{n-1} A_j \right] \begin{pmatrix} \mathcal{O}_{1,k} \\ \mathcal{O}_{2,k} \end{pmatrix},
\end{align*}
we can deduce that the normal components $\eta_{s,n}$ and $\eta_{c,n}$ both converge to zero as $n$ goes to infinity, exponentially fast.
\end{proof}

\begin{remar}
Note that in the proof above we did not need to require any extra assumption on the spectrum of the matrix $A$ (such as the realness of the eigenvalues).
\end{remar}

\noindent
Thanks to Proposition \ref{PROPOSS3.4CLineConveNorma}, we are now in position to conclude the study of the convergence of the different variables of the dynamical system \eqref{EQUATSS2.2IterationSystm1}, in the case of a nearly-linear collapse.

\begin{propo}[Convergence of the tangential components of the relative velocities for the nearly-linear collapse]
\label{PROPOSS3.4CLineConveTangt}
Let $r \in\ ]0,1[$ be a positive real number smaller than $1$, and let us consider a system of three inelastic particles \circled{0}, \circled{1} and \circled{2} evolving according to the $r$-inelastic hard sphere flow \eqref{EQUATSS2.1_Loi_de_Newton_}, \eqref{EQUATSS2.1VitesPost-Colli}, on a time interval $[0,\tau^*[$, and undergoing a nearly-linear inelastic collapse at time $\tau^* > 0$.\\
Then the sequences of the tangential components $\left(W_1^\perp\right)_{n\in\mathbb{N}}$ and $\left(W_2^\perp\right)_{n\in\mathbb{N}}$ of the relative velocities $W_1 = v_1-v_0$ and $W_2 = v_2-v_0$ are converging.
\end{propo}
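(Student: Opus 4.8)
The plan is to show that each of the sequences $\left(W_{1,n}^\perp\right)_n$ and $\left(W_{2,n}^\perp\right)_n$ has finite total variation, that is, that the series of norms of consecutive differences $\sum_n \bigl\vert W_{i,n+1}^\perp - W_{i,n}^\perp \bigr\vert$ converges. Since these sequences take their values in the complete space $\mathbb{R}^d$, finite total variation forces a Cauchy sequence, hence convergence: for $p<q$ one has $\bigl\vert W_{i,q}^\perp - W_{i,p}^\perp\bigr\vert \leq \sum_{n=p}^{q-1}\bigl\vert W_{i,n+1}^\perp - W_{i,n}^\perp\bigr\vert$, and the right-hand side is the tail of a convergent series. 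It therefore suffices to bound, and then sum over all collisions, the increment of the tangential components produced by a single collision.

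First I would record that all the relative velocities stay bounded along the dynamics: by Lemma~\ref{LEMMESS2.1DissiEnergCinet} the kinetic energy can only decrease, so the velocities $v_i$, hence $W_i = v_i - v_0$ and their tangential parts $W_i^\perp$, are bounded by a constant depending only on the initial datum. I would also recall that $\left(\tau_n\right)_n \in \ell^1$ by Proposition~\ref{PROPOSS3.2ConveInterTemps}. The increments of the tangential components across one collision are exactly the quantities estimated in \eqref{EQUATSS3.3PROPOComp1PerpD} of Proposition~\ref{PROPOSS3.3AsympCompa__I__}. For the colliding particle (here \circled{2}, for the \circled{0}-\circled{2} collision encoded by $\mathfrak{C}$) this reads $\bigl\vert (W_{2,n}^\perp)' - W_{2,n}^\perp\bigr\vert = \mathcal{O}\bigl(\tau_n \eta_{2,n}^2,\, \tau_n \vert W_{2,n}^\perp\vert^2\bigr)$, both terms being of the form $\tau_n \times (\text{bounded})$, hence summable. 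For the spectator particle (here \circled{1}) one has $\bigl\vert (W_{1,n}^\perp)' - W_{1,n}^\perp\bigr\vert = \mathcal{O}\bigl(\tau_n \eta_{1,n}^2,\, \tau_n^2 \eta_{1,n}\vert W_{1,n}^\perp\vert,\, \tau_n \vert W_{1,n}^\perp\vert^2,\, -\eta_{2,n}\bigr)$. The first three terms are again $\tau_n \times (\text{bounded})$ and therefore summable; the only delicate term is the last one, $-\eta_{2,n}$, which originates from the change of $v_0$ produced by the \circled{0}-\circled{2} collision (see the factor $(1+d)\eta_2$ in \eqref{EQUATSS2.2W1prpPrimeVers1}) and is not controlled by $\tau_n$.

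The main point, and the only genuine obstacle, is to sum this last contribution. In the general setting of Proposition~\ref{PROPOSS3.3AsympCompa__I__} one has only $\left(\eta_{2,n}\right)_n \in \ell^2$ (Proposition~\ref{PROPOSS3.2SommaCompNormlV}), which would be insufficient; but for a nearly-linear collapse Proposition~\ref{PROPOSS3.4CLineConveNorma} yields the \emph{exponential} decay of $\eta_{1,n}$ and $\eta_{2,n}$, hence the stronger summability $\sum_n \vert \eta_{1,n}\vert < \infty$ and $\sum_n \vert \eta_{2,n}\vert < \infty$. This $\ell^1$ control is precisely what makes the term $-\eta_{2,n}$ summable, and is where the nearly-linear structure is truly used. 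Finally, since in a nearly-linear collapse the collisions eventually alternate between the types \circled{0}-\circled{1} and \circled{0}-\circled{2}, each physical particle alternately plays the role of colliding and of spectator particle; applying the two estimates above, together with the relabelling symmetry $1 \leftrightarrow 2$ of $\mathfrak{C}$ and the analogous colliding estimate from \eqref{EQUATSS2.2W2prpPrimeVers1}, every consecutive difference of $\left(W_{1,n}^\perp\right)_n$ and of $\left(W_{2,n}^\perp\right)_n$ is bounded by a summable sequence. Summing over all collisions gives finite total variation for both sequences, and the claimed convergence follows.
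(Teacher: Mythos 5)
Your proposal is correct and follows essentially the same route as the paper's proof: bound the per-collision increment of $W_{i,n}^\perp$ by a combination of $\tau_n$ (summable by Proposition~\ref{PROPOSS3.2ConveInterTemps}) and the normal components $\eta_{1,n}$, $\eta_{2,n}$ (summable by the exponential decay of Proposition~\ref{PROPOSS3.4CLineConveNorma}), then telescope to get a Cauchy sequence. Your observation that the $-\eta_{2,n}$ contribution is the only term not controlled by $\tau_n$, and that the $\ell^1$ (rather than merely $\ell^2$) decay of the normal components specific to the nearly-linear case is what makes it summable, is exactly the point the paper's argument relies on.
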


\begin{proof}
Let us prove the result for the first relative velocity. According to \eqref{EQUATSS2.2IterationSystm1}, when the collision between the particles \circled{0} and \circled{2} takes place, the tangential component of the first relative velocity is modified according to:
\begin{align*}
W_{1,n+1}^\perp &= W_{1,n} - \frac{\eta_{1,n} +\tau_n \vert W_{1,n} \vert^2}{(1+d_{n+1})} + \frac{(1+r)}{2}\left((1+d_n)\eta_{2,n} + \tau_n \vert W_{2,n} \vert^2 \right) \left[ (\omega_{1,n+1}\cdot\omega_{2,n+1}) \omega_{1,n+1} - \omega_{2,n+1} \right] \\
&= W^\perp_{1,n} + \eta_{1,n} \omega_{1,n} - \frac{\eta_{1,n} +\tau_n \vert W_{1,n} \vert^2}{(1+d_{n+1})} \\
&\hspace{40mm} +\frac{(1+r)}{2}\left((1+d_n)\eta_{2,n} + \tau_n \vert W_{2,n} \vert^2 \right) \left[ (\omega_{1,n+1}\cdot\omega_{2,n+1}) \omega_{1,n+1} - \omega_{2,n+1} \right].
\end{align*}
According to Proposition \ref{PROPOSS3.2ConveInterTemps}, the series of the time intervals $\tau_n$ between two consecutive collisions is converging. On the other hand, according to Proposition \ref{PROPOSS3.4CLineConveNorma}, the series of the normal components are also summable. Finally, according to Proposition \ref{PROPOSS3.2ConveAngleDista}, we know that the sequence of the distances $\left(d_n\right)_n$ is vanishing and that the sequences of the angular parameters $\left(\omega_{i,n}\right)_n$ are converging. Therefore, we can write the iteration satisfied by the tangential component of the first relative velocity as:
\begin{align*}
W_{1,n+1}^\perp = W_{1,n}^\perp + \delta_n,
\end{align*}
where $\sum_{n\geq0} \delta_n$ is a summable series since we have:
\begin{align*}
\vert \delta_n \vert \leq \vert \eta_{1,n} \vert + \vert \eta_{1,n} \vert + \tau_n \mathcal{E}_0 + (1+r) \left( 2 \vert \eta_{2,n} \vert + \tau_n \mathcal{E}_0 \right),
\end{align*}
where $\mathcal{E}_0$ denoted the kinetic energy at the initial time.\\
Therefore, we conclude that the series $\sum_{n \geq 0} \left( W_{1,n+1}^\perp - W_{1,n}^\perp \right)$ is summable, that is, the sequence $\left( W_{1,n}^\perp\right)_{n\geq0}$ is converging.\\
The argument is exactly the same for the tangential component of the second relative velocity, and the conclusion follows accordingly.
\end{proof}

\subsection{The results of Zhou and Kadanoff concerning the nearly-linear collapse}
\label{SSSCT3.4.2LineaZhou-Kadan}

For the sake of completeness, let us recall the results in the literature concerning the nearly-linear collapse in dimension $d \geq 2$. In \cite{ZhKa996}, Zhou and Kadanoff obtained two necessary conditions concerning the nearly-linear collapse, that write as follows.
Let $r \in\ ]0,1[$ be a positive real number smaller than $1$, and let us consider a system of three inelastic particles \circled{0}, \circled{1} and \circled{2} evolving according to the $r$-inelastic hard sphere flow \eqref{EQUATSS2.1_Loi_de_Newton_}, \eqref{EQUATSS2.1VitesPost-Colli}, on a time interval $[0,\tau^*[$, undergoing a nearly-linear inelastic collapse at time $\tau^* > 0$, and such that the angle between the pairs of particles \circled{0}-\circled{1} and \circled{0}-\circled{2} converges to the limit value $\theta$ at the collapsing time $\tau^*$.
\begin{itemize}
\item The existence of such a nearly-linear inelastic collapse is possible only if:
\begin{align}
\label{EQUATSS3.4ZKCl1CondiExist}
-\cos \theta \geq \frac{4\sqrt{r}}{1+r} \cdotp
\end{align}
\item In addition, such a nearly-linear inelastic collapse is stable (with respect to perturbations of the initial data leading to such a collapse) only if:
\begin{align}
\label{EQUATSS3.4ZKCl2CondiStabi}
-\cos \theta > \frac{2r^{1/3}(1+r^{1/3})}{1+r} \cdotp
\end{align}
\end{itemize}
\noindent
In particular, according to \eqref{EQUATSS3.4ZKCl1CondiExist}, in the limit $r \rightarrow 0$ of the very small restitution parameter (that is, in the case when the system dissipates a lot of kinetic energy), nearly-linear inelastic collapses such that the particles form any obtuse angle in the limit can  a priori exist. On the other hand, considering the constraint $\theta = \pi$, which corresponds to an inelastic collapse such that the particles are perfectly aligned (or, in other words, such that the particles, at the final time of the inelastic collapse, all lie in a one dimensional space), provides a bound from above on $r$, namely $4\sqrt{r} \leq 1+r$, or again $r^2 - 14r + 1 \geq 0$. Since $r \leq 1$, this is equivalent to $r \leq r_\text{exist.}$ with:
\begin{align}
r_\text{exist.} = 7 - 4\sqrt{3} \simeq 0.07179677.
\end{align}
Concerning the second condition \eqref{EQUATSS3.4ZKCl2CondiStabi}, note that the function $r \mapsto 2r^{1/3}(1+r^{1/3})/(1+r)$ is increasing on $[0,1]$, is equal to $0$ at $r=0$, and equal to $2$ as $r=1$. Therefore, there exists a unique critical value $r_c$ of the restitution parameter such that for $r \geq r_c$, $2r^{1/3}(1+r^{1/3})/(1+r) \geq 1$. For such restitution parameters, there exists no angle $\theta$ that could fulfill the inequality \eqref{EQUATSS3.4ZKCl2CondiStabi}, so that no elastic collapse can be stable for that range of restitution coefficients. The critical value is given by $1 = 2r_c^{1/3}(1+r^{1/3})/(1+r_c)$, or again $1 - r_c^{1/3} + r_c^{2/3} = 2r_c^{1/3}$ so that $r_\text{stabi.}^{1/3} = (3 - \sqrt{5})/2$, and then in the end:
\begin{align}
r_\text{stabi.} = \frac{1}{8}\left(27 - 27\sqrt{5} + 45 - 5\sqrt{5}\right) = \frac{72 - 32\sqrt{5}}{8} = 9 - 4\sqrt{5} \simeq 0.05572809.
\end{align}

\noindent
The collapses in \cite{ZhKa996} are obtained as self-similar solutions of a simplified, one-dimensional version of the complete dynamical system \eqref{EQUATSS2.2IterationSystm1}, corresponding to fixed points of some mapping, which turns out to be an homography \eqref{EQUATSS3.4ZKCl1CondiExist} provides the existence of such fixed points, and \eqref{EQUATSS3.4ZKCl2CondiStabi} ensures that the Ansatz $\tau \ll \eta$ holds true, which allows in turn to obtain formally the one-dimensional version of the complete dynamical system.

\subsection{Explicit construction of a stable set of initial configurations leading to the collapse}

The previous results do not describe the basin of attraction of the collapsing final configurations, with a given angle. In particular, it seems particularly delicate to deduce, considering an initial data leading to a nearly-linear collapse, what will be the limiting angle of the three particles at the collapsing time $\tau^*$. Let us address a weaker version of this problem by constructing explicitly a set of the phase space that has a non trivial interior, composed with initial configurations of systems that all experience a nearly-linear collapse, with a final angle that is prescribed, up to an error under control.\\
To the best of our knowledge, this is the first explicit description of a stable set of initial data leading to collapse, in dimension strictly larger than $1$.

\begin{theor}
\label{THEORSS4.1ExistenCollapse}
Let $r$ be a positive number such that
\begin{align}
\label{EQUATSS4.1ThExiCondi_r_1_}
0 < r < r_\text{stabi}
\end{align}
with $r_\text{stabi} = 9-4\sqrt{5}$, let $\theta_0 \in\, ]\pi/2,\pi]$ be an angle such that:
\begin{align}
\label{EQUATSS4.1ThExiCondiTheta}
0 < -\cos\theta_0 < \frac{2r^{1/3}(1+r^{1/3})}{1+r},
\end{align}
and let $\delta_\theta$ be any positive real number.\\
Then, there exists a set $\mathcal{ZK}$ of initial configurations of three inelastic particles \circled{0}, \circled{1}, \circled{2}, with a non trivial interior, such that the trajectory $Z(t)$ issued from any initial configuration $Z_0$ of the set $\mathcal{ZK}$ experiences a nearly-linear collapse in finite time, with \circled{0} as a central particle. In addition, the final angle $\overline{\theta}(Z_0)$, at the time of the inelastic collapse, formed by the external particles \circled{1} and \circled{2} around \circled{0}, is such that:
\begin{align}
\left\vert (-\cos\overline{\theta}) - (-\cos\theta_0) \right\vert \leq \delta_\theta.
\end{align}
\end{theor}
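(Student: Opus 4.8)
The plan is to realize the target collapse as a small perturbation of the self-similar solution underlying the Zhou--Kadanoff reduction \eqref{SECT6EquatSysteApproxEta_}, and to show that an entire open neighbourhood of initial data inherits the same qualitative behaviour. Recall that once the ZK-regime holds the evolution of the normal components is governed, at leading order, by the constant matrix $A$ of \eqref{EQUATSS3.4MatriLimitLinea} with $\cos\overline\theta$ replaced by $\cos\theta_0$, whose spectral radius is strictly less than $1$ (Proposition \ref{PROPOSS3.4CLineConveNorma}). First I would fix the dominant eigenvalue $\lambda_+$ of $A$ and the associated eigendirection; this pins down the natural asymptotic ratio $\varphi_1^\ast=\lim \eta_1/(-\eta_2)=r/\lambda_+$ of the two normal components along a self-similar collapse. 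I then introduce the two scalar order parameters $\varphi_1=\eta_1/(-\eta_2)$ and $\varphi_2=d/\eta_2^2$ already singled out at the end of Section \ref{SSSCT3.3.5CondiSuffiT<<Et}, in terms of which the ZK-regime is exactly $\varphi_2\to 0$ and $\tau\ll\eta$ holds via \eqref{EQUATSS3.3CompaAsympTauV3}.

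The core of the argument is to exhibit an invariant region for the full period map (one collision \circled{0}-\circled{1} followed by one collision \circled{0}-\circled{2}, followed by the relabelling). Concretely I would look for constants such that the box $\{|\varphi_1-\varphi_1^\ast|\le\epsilon,\ 0<\varphi_2\le\epsilon',\ |{-}\cos\theta+\cos\theta_0|\le\epsilon'',\ \underline w\le|W_i^\perp|\le\overline w\}$ is mapped into itself, and moreover $\varphi_2$ strictly contracts on it. The contraction of $\varphi_2$ is the heart of the matter: combining $d_{n+1}\approx\eta_{1,n}\tau_n\approx\varphi_{1,n}\,d_n$ (from \eqref{EQUATSS3.3PROPOComp1_d'__} and \eqref{EQUATSS3.3CompaAsympTauV3}) with the geometric decay $\eta_n\sim\lambda_+^n$ gives $\varphi_{2,n+1}/\varphi_{2,n}\approx\varphi_1^\ast/\lambda_+^2$, so the ZK-regime is self-maintaining precisely when
\begin{align}
\varphi_1^\ast=\frac{r}{\lambda_+}<\lambda_+^2,\qquad\text{i.e.}\qquad \lambda_+>r^{1/3}. \nonumber
\end{align}
A direct computation with the characteristic polynomial $\lambda^2-\tfrac{1+r}{2}(-\cos\theta_0)\lambda+r$ shows that $\lambda_+>r^{1/3}$ is equivalent to the stability threshold \eqref{EQUATSS3.4ZKCl2CondiStabi}; this is exactly where the hypotheses \eqref{EQUATSS4.1ThExiCondi_r_1_}--\eqref{EQUATSS4.1ThExiCondiTheta} on $r$ and $\theta_0$ enter, and they also place $A$ in the regime of real eigenvalues so that the decay is monotone and no spurious sign change of the $\eta$'s (which would destroy the alternating collision order) occurs. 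Controlling the genuinely nonlinear remainder terms $\varepsilon_{i,n},\mathcal O_{i,n}$ of \eqref{EQUATSS3.4EvoluVNormLinea} uniformly on the box, so that the strict contraction survives, is the step I expect to be the main obstacle.

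Granting the invariant box, the remaining conclusions follow from the convergence machinery of Section \ref{SSECTIO3.2ConvergingQuant}. On the box the $\eta$'s decay geometrically (Proposition \ref{PROPOSS3.4CLineConveNorma}), hence by \eqref{EQUATSS3.3CompaAsympTauV3} so do the $\tau_n\approx d_n/(-\eta_{2,n})$, which are therefore summable; Proposition \ref{PROPOSS3.2ConveInterTemps} then yields a finite collapse time $\tau^\ast=\sum_n\tau_n$, while Proposition \ref{PROPOSS3.2ConveAngleDista} yields $d_n\to0$ and the contact relations $|\overline\omega_1|=|\overline\omega_2|=1$, so the collapse is nearly-linear with \circled{0} central. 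I would separately verify that the orbit never produces a \circled{1}-\circled{2} collision: since the configuration stays within $\epsilon''$ of the obtuse angle $\theta_0$ and the external particles keep a mutual distance bounded below, the pair \circled{1}-\circled{2} remains strictly separated, so the only admissible collisions are the alternating \circled{0}-\circled{1}, \circled{0}-\circled{2} ones, as required by Definition \ref{DEFINSS3.4CollapseLineair}.

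For the prescribed final angle, the angular parameters satisfy $|\omega_{i,n+1}-\omega_{i,n}|=\mathcal O(\tau_n)$ by \eqref{EQUATSS3.3PROPOComp1omga'}, so the total angular drift is bounded by $C\sum_n\tau_n$. Since on the box $\tau_n\lesssim d_n/(-\eta_{2,n})\approx(\varphi_1^\ast/\lambda_+)^n\,d_0/(-\eta_{2,0})$ and $\varphi_1^\ast/\lambda_+=r/\lambda_+^2<1$, the series $\sum_n\tau_n$ is $\mathcal O(d_0)$; hence the drift is $\mathcal O(d_0)$ and choosing the initial defect $d_0$ small enough (equivalently $\varphi_{2,0}=d_0/\eta_{2,0}^2$ small enough) forces $|{-}\cos\overline\theta-({-}\cos\theta_0)|\le\delta_\theta$. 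Finally, every constraint defining the box is an open condition and the single-collision map $\mathfrak C$ is continuous in the initial data, so the set $\mathcal{ZK}$ of admissible initial configurations has nonempty interior and the collapse persists under perturbations within it, which simultaneously gives the asserted stability. The only delicate quantitative point throughout remains the uniform control of the error terms guaranteeing the strict contraction of $\varphi_2$ on the invariant box.
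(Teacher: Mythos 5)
Your proposal is correct and follows essentially the same route as the paper: the paper's proof is precisely an invariant-region/induction argument built around the stable fixed point $\varphi_{1,\alpha_0}^- = r/\lambda_+$ of the homography $\varphi_1 \mapsto r/(\alpha_0-\varphi_1)$ (your $\varphi_1^\ast$), the contraction of the Zhou--Kadanoff parameter $\zeta_{n+1}\leq(1-h_5/4)\zeta_n$ at the rate $\varphi_{1,\alpha_0}^-/(\alpha_0-\varphi_{1,\alpha_0}^-)^2 = r/\lambda_+^3<1$ (your $\varphi_2$-contraction and the $\lambda_+>r^{1/3}$ threshold), the exclusion of \circled{1}-\circled{2} collisions via the obtuse-angle lower bound $\vert\omega_1-\omega_2\vert\geq\sqrt2$, and the angle drift controlled by $\sum_n\tau_n$. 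The step you flag as the main obstacle --- uniform control of the nonlinear remainders so that the contraction survives --- is exactly what occupies the bulk of the paper's proof, carried out there as an explicit recursion on ten propagated conditions with carefully pre-chosen constants $\delta_x,\delta_y,h_4,h_5,\overline\eta,\overline\zeta,\overline d$.
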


\begin{remar}
It is possible to describe explicitly the set of initial configurations mentioned in the theorem.
\end{remar}

\begin{proof}[Proof of Theorem \ref{THEORSS4.1ExistenCollapse}]
Before starting the construction of an initial datum leading to a collapse, let us first make some preliminary observations.\\
We define $\alpha_0$ as the positive quantity:
\begin{align}
\alpha_0 = \frac{(1+r)}{2}\left( - \cos\theta_0 \right),
\end{align}
smaller than $1$. The assumption \eqref{EQUATSS4.1ThExiCondi_r_1_} ensures that the mapping:
\begin{align}
\varphi_1 \mapsto \frac{r}{\alpha_0 - \varphi_1}
\end{align}
has the two positive fixed points
\begin{align}
F_1:\varphi_{1,\alpha_0}^\pm = \frac{\alpha_0 \pm \sqrt{\alpha_0^2 - 4r}}{2},
\end{align}
where only $\varphi_{1,\alpha_0}^-$, the smaller, is stable, that is, we have:
\begin{align}
\label{EQUATSS4.1DerivFonctF_1P-}
\left\vert F_1' \left( \varphi_{1,\alpha_0}^- \right) \right\vert = \left\vert \frac{r}{\left( \alpha_0 - \varphi_{1,\alpha_0}^- \right)^2} \right\vert < 1. 
\end{align}
By definition, $\varphi_{1,\alpha_0}^- < \alpha_0/2$, let us denote
\begin{align}
\delta_1 = \frac{\alpha_0}{2} - \varphi_{1,\alpha_0}^- > 0.
\end{align}
Since $\alpha_0 < 1$ and $r > 0$, let us observe in particular that
\begin{align}
-1 < \varphi_{1,\alpha_0}^- - \alpha_0 < 0.
\end{align}
We denote by $\delta_2$ and $\delta_3$ the two positive quantities defined as:
\begin{align}
\delta_2 = \delta_2(r,\theta_0) = \frac{1}{2} \left( \alpha_0 - \varphi_{1,\alpha_0}^- \right) \hspace{3mm} \text{and} \hspace{3mm} \delta_3 = \delta_3(r,\theta_0) = \frac{1}{2} \left( 1 - \left(\alpha_0 - \varphi_{1,\alpha_0}^- \right) \right).
\end{align}
We introduce also the positive real number $0 < C_\eta < 1$, defined by means of:
\begin{align}
\label{EQUATSS4.1Quantite_C_eta_}
C_\eta = 1 - \delta_3.
\end{align}
According to \eqref{EQUATSS4.1DerivFonctF_1P-}, let us denote by $h_4$ the positive quantity defined as:
\begin{align}
h_4 = h_4(r,\theta_0) = \frac{1}{2} \left( 1 - \frac{r}{\left( \alpha_0 - \varphi_{1,\alpha_0}^- \right)^2} \right),
\end{align}
and let us define the positive quantity $\delta_4 = \delta_4(r,\theta_0)$ as:
\begin{align}
\frac{r}{\left( \alpha_0 - (\varphi_{1,\alpha_0}^-+\delta_4) \right)^2} = 1 - h_4.
\end{align}
In the same way, the second assumption \eqref{EQUATSS4.1ThExiCondiTheta} ensures that the positive quantity:
\begin{align}
\frac{ \varphi_{1,\alpha_0}^- }{ \left( \alpha_0 - \varphi_{1,\alpha_0}^- \right)^2}
\end{align}
is strictly smaller than $1$. Let us denote by $h_5$ the positive quantity defined as:
\begin{align}
h_5 = h_5(r,\theta_0) = \frac{1}{2} \left( 1 - \frac{ \varphi_{1,\alpha_0}^- }{ \left( \alpha_0 - \varphi_{1,\alpha_0}^- \right)^2} \right),
\end{align}
and let us define the positive quantity $\delta_5 = \delta_5(r,\theta_0)$ as:
\begin{align}
\frac{ \varphi_{1,\alpha_0}^- }{ \left( \alpha_0 - (\varphi_{1,\alpha_0}^- + \delta_5) \right)^2} = 1 - h_5.
\end{align}
Finally, we choose two positive numbers $0 < V_0 < V_1$ satisfying:
\begin{align}
\frac{V_1}{V_0} = \frac{1-h_5/2}{1-h_5} \cdotp
\end{align}

\noindent
Now, let us introduce some notation. We assume that the initial configuration of the system is such that the particles \circled{0} and \circled{1} are in contact, and just collided. The next collision to take place is between \circled{0} and \circled{2}, and between \circled{0} and \circled{1} right after, and so on. Therefore \circled{0} is the central particle (that is, involved in each collision), while the respective roles of \circled{1} and \circled{2} are alternating between spectator and colliding.\\
If we write the configuration of the system at the time of the $n$-th collision (assuming that the first collision to take place at a positive time is between \circled{0} and \circled{2}), such that the time between the $n$-th and the $(n+1)$-th collision is denoted by $\tau_n$, and if we use the subscripts $*_{s,n}$ to denote the particle that just collided at the $n$-th collision (so that this particle will be spectator for the next collision) and $*_{c,n}$ to denote the particle that will collide just after the $n$-th collision, we have for the initial configuration:
\begin{align}
\label{EQUATSS4.1DonneInitiColSp}
\left\{
\begin{array}{rl}
W_{1,0} &= W_{s,0},\\
W_{2,0} &= W_{c,0},\\
x_{1,0} - x_{0,0} &= \omega_{1,0} = \omega_{s,0},\\
x_{2,0} - x_{0,0} &= (1+d_0) \omega_{2,0} = (1+d_0) \omega_{c,0}.
\end{array}
\right.
\end{align}
and we can rewrite the equations \eqref{EQUATSS2.2IterationSystm1} of the dynamics as:
\begin{align}
\label{EQUATSS4.1SysteDynamColSp} 
\left\{
\begin{array}{rl}
W_{c,n+1} &= W_{s,n} - \frac{(1+r)}{2} W_{c,n} \cdot \omega_{s,n+1} \omega_{s,n+1},\\
W_{s,n+1} &= W_{c,n} - (1+r) W_{c,n} \cdot \omega_{s,n+1} \omega_{s,n+1},\\
\omega_{c,n+1} &= \frac{1}{1+d_{n+1}} \left( \omega_{s,n} + \tau_n W_{s,n} \right),\\
\omega_{s,n+1} &= (1+d_n) \omega_{c,n} + \tau_n W_{c,n},\\
d_{n+1} &= \sqrt{1 + 2 \eta_{s,n} \tau_n + \vert W_{s,n} \vert^2 \tau_n^2 } - 1,
\end{array}
\right.
\end{align}
with
\begin{align}
\tau_n = \frac{(1+d_n)}{\vert W_{c,n} \vert^2} (-\eta_{c,n}) \left[ 1 - \sqrt{1-\zeta_n} \right] \hspace{3mm} \text{and} \hspace{3mm} \zeta_n = \frac{d_n(2+d_n)}{(1+d_n)^2} \frac{\vert W_{c,n} \vert^2}{\eta_{c,n}^2},
\end{align}
and where $\eta_{c,n}$ and $\eta_{s,n}$ denote respectively:
\begin{align}
\eta_{c,n} = W_{c,n} \cdot \omega_{c,n} \hspace{3mm} \text{and} \hspace{3mm} \eta_{s,n} = W_{s,n} \cdot \omega_{s,n}.
\end{align}
With these notations, the necessary and sufficient conditions for the collapse to take place are:
\begin{align}
\forall n\in\mathbb{N}, \hspace{3mm} \eta_{c,n} < 0 \hspace{3mm} \text{and} \hspace{3mm} \zeta_n < 1
\end{align}
(which ensures that the particles are initially approaching after the $n$-th collision, and that they will indeed collide), together with:
\begin{align}
\forall n\in\mathbb{N}, \forall t \in [0,\tau_n], \hspace{3mm} \vert (1+d_n)\omega_{c,n} - \omega_{s,n} + t\left( W_{c,n} - W_{s,n} \right) \vert > 1,
\end{align}
which ensures that just after a collision between \circled{0} and \circled{1} (respectively \circled{0} and \circled{2}), a collision between \circled{1} and \circled{2} cannot happen before the next scheduled collision, between \circled{0} and \circled{2} (respectively \circled{0} and \circled{1}).\\
We need now to compute some evolution laws. Let us start with the normal component $\eta_{c,n}$ of the colliding pair. For all $n \geq 1$ we have:
\begin{align}
\label{EQUATSS4.1EvoluCompoNormC}
\eta_{c,n+1} &= W_{c,n+1} \cdot \omega_{c,n+1} \nonumber\\
&= \frac{1}{1+d_{n+1}} \eta_{s,n} - \frac{(1+r)}{2}(1+d_n)\sqrt{1-\zeta_n} \left( -\omega_{s,n+1}\cdot\omega_{c,n+1} \right)(-\eta_{c,n}) + \frac{1}{1+d_{n+1}} \tau_n \vert W_{s,n} \vert^2.
\end{align}
For the spectator pair we have:
\begin{align}
\eta_{s,n+1} &= W_{s,n+1}\cdot\omega_{s,n+1} \nonumber\\
&= r(1+d_n) \sqrt{1-\zeta_n} (-\eta_{c,n}).
\end{align}
The square of the norms of the relative velocities evolve as:
\begin{align}
\label{EQUATSS4.1EvoluNormeVitRl} 
\vert W_{c,n+1} \vert^2 &= \vert W_{s,n} \vert^2 - (1+r) \left( W_{c,n}\cdot \omega_{s,n+1} \right)\left( W_{s,n}\cdot\omega_{s,n+1} \right) + \frac{(1+r)^2}{2} \left( W_{c,n}\cdot\omega_{s,n+1} \right)^2,
\end{align}
and
\begin{align}
\vert W_{s,n+1} \vert^2 &= \vert W_{c,n} \vert^2 + (r^2-1)\left( W_{c,n}\cdot \omega_{s,n+1} \right)^2.
\end{align}
Finally, let us write the evolution of the angle $\theta_n$ between $\omega_{c,n}$ and $\omega_{s,n}$. We have:
\begin{align}
\label{EQUATSS4.1EvoluCosinAngle}
-\cos\theta_{n+1} &= \frac{1+d_n}{1+d_{n+1}} \left(-\cos\theta_n\right) \nonumber\\
&\hspace{5mm} - \frac{1}{1+d_{n+1}} \tau_n \left(\omega_{s,n}\cdot W_{c,n}\right) - \frac{1+d_n}{1+d_{n+1}} \tau_n \left(\omega_{c,n} \cdot W_{s,n}\right) - \frac{1}{1+d_{n+1}} \tau_n^2 \left(W_{s,n}\cdot W_{c,n}\right).
\end{align}
Let us now describe explicitly initial data leading to the collapse. We define first:
\begin{align}
\label{EQUATSS4.1DefinDeltxDelty}
\delta_x = \delta_y = \text{min} \left( \delta_1,\delta_2,\delta_3,\delta_4,\delta_5 \right),
\end{align}
then we introduce:
\begin{align}
\vert x_0 \vert = \text{min} \left( (1-h_4)\delta_x, \frac{V_0}{V_1}\frac{\delta_2^2 h_5}{12} \right),
\end{align}
and for any positive real number $\delta_\theta > 0$, which will be used to measure the maximal variation of the angles during the collapse, we define:
\begin{align}
\vert \Delta \theta \vert = \text{min} \left( \frac{\vert \cos\theta_0 \vert}{2}, \frac{h_4}{8}\delta_x, \delta_\theta \right).
\end{align}
We are now in position to describe the initial data. We introduce:
\begin{align}
\label{EQUATSS4.1DefinBorne_Eta_}
\overline{\eta} = \text{min} \left( 1, \frac{(\sqrt{2}-1)}{16}\frac{V_0}{V_1^{1/2}}, \frac{V_0}{33V_1^{1/2}} \vert \Delta\theta \vert, \frac{(1-C_\eta)(V_1-V_0)}{12(V_1^{1/2}+2)}, \frac{V_0}{V_1^{1/2}}\left( \frac{2}{15\alpha_0} + \frac{h_4\delta_2}{6} + \frac{h_4}{12\alpha_0} + \frac{h_4}{24} \right)\delta_x \right),
\end{align}
\begin{align}
\label{EQUATSS4.1DefinBorne_Zeta}
\overline{\zeta} = \text{min} \left( 2(\sqrt{2}-1), \frac{V_0 h_5}{3V_1(\alpha_0+8V_1/V_0)}\delta_2^2, \frac{1}{16}\left( 1 + 4\delta_2 + \frac{V_0}{V_1} \right)h_4\delta_x \right),
\end{align}
and finally:
\begin{align}
\label{EQUATSS4.1DefinBorne_d_0_}
\overline{d} = \overline{d}\left((-\eta_{c,0})\right) = \text{min} \left( \frac{\sqrt{2}-1}{4}, \frac{\vert \Delta\theta \vert}{3}, \frac{1}{5}(1+\delta_2 h_4)\delta_x, \frac{1}{4}\left(1 + 2\delta_2\right)h_4\delta_x, \frac{\overline{\zeta}}{2V_1}(\eta_{c,0})^2 \right),
\end{align}
noting the bound $\overline{d}$ is a function of the initial normal component $-\eta_{c,0}$, which corresponds to a uniform bound on the initial Zhou-Kadanoff parameter $\zeta_0$.\\
We will then consider initial data such that:
\begin{align}
\label{EQUATSS4.1DescrDoIniColla}
\left\{
\begin{array}{ll}
&\text{the particles \circled{0} and \circled{1} are in contact, in a post-collisional configuration},\\
&\eta_{2,0} = \eta_{c,0} < 0,\\
&\vert \eta_{2,0} \vert \leq \overline{\eta},\\
&\left\vert \frac{\eta_{s,0}}{(-\eta_{c,0})} - \varphi_{1,\alpha_0}^- \right\vert \leq \vert x_0 \vert,\\
&d_0 \leq \overline{d}\left((-\eta_{c,0})\right),\\
&V_0 + \frac{1}{3}(V_1-V_0) \leq \vert W_{c,0} \vert^2,\vert W_{s,0} \vert^2 \leq V_1 - \frac{1}{3}(V_1-V_0),\\
&\omega_{c,0} \cdot \omega_{s,0} = \cos\theta_0.
\end{array}
\right.
\end{align}
\noindent
Let us now show that such initial configurations generate trajectories that eventually collapse, proceeding by recursion. More precisely, we will show that the following properties:
\begin{itemize}
\item the negativity of the normal component of the relative velocity of the colliding pair:
\begin{align}
\label{EQUATSS4.1Recurrence_Cnd1}
\eta_{c,n} < 0,
\end{align}
\item for all $k \leq n$:
\begin{align}
\label{EQUATSS4.1Recurrence_Cnd2}
\vert \eta_{c,k} \vert \leq \overline{\eta} \hspace{3mm} \text{and if} \hspace{3mm} n\geq 1, \hspace{3mm} \vert \eta_{c,k} \vert \leq C_\eta \vert \eta_{c,k-1} \vert,
\end{align}
where $\overline{\eta}$ is defined in \eqref{EQUATSS4.1DefinBorne_Eta_} and $C_\eta$ in \eqref{EQUATSS4.1Quantite_C_eta_},
\item the smallness of the Zhou-Kadanoff parameter:
\begin{align}
\label{EQUATSS4.1Recurrence_Cnd3}
\zeta_n < 1,
\end{align}
\item and in addition
\begin{align}
\label{EQUATSS4.1Recurrence_Cnd4}
\zeta_n \leq \overline{\zeta},
\end{align}
where $\overline{\zeta}$ is defined in \eqref{EQUATSS4.1DefinBorne_Zeta},
\item the condition preventing a collision between the particles \circled{1} and \circled{2}:
\begin{align}
\label{EQUATSS4.1Recurrence_CndT}
\forall \tau \in [0,\tau_n], \hspace{3mm} \left\vert (1+d_n)\omega_{c,n} - \omega_{s,n} + \tau(W_{c,n}-W_{s,n}) \right\vert > 1,
\end{align}
\item the boundedness, from above \emph{and} below, of the norms of the relative velocities:
\begin{align}
\label{EQUATSS4.1Recurrence_Cnd5}
V_0 \leq \vert W_{1,n} \vert^2, \vert W_{2,n} \vert^2 \leq V_1,
\end{align}
\item and in addition, if $n$ is odd:
\begin{align}
\label{EQUATSS4.1Recurrence_Cnd6}
\vert W_{c,n} \vert^2 - \vert W_{s,0} \vert^2 \leq 4(V_1^{1/2}+2) \left( \sum_{k=0}^{n-1} C_\eta^k \right)(-\eta_{2,0}) \hspace{0.5mm} \text{and} \hspace{0.5mm} \vert W_{s,n} \vert^2 - \vert W_{c,0} \vert^2 \leq 4(V_1^{1/2}+2) \left( \sum_{k=0}^{n-1} C_\eta^k \right)(-\eta_{2,0})
\end{align}
and if $n$ is even:
\begin{align}
\label{EQUATSS4.1Recurrence_Cnd7}
\vert W_{c,n} \vert^2 - \vert W_{c,0} \vert^2 \leq 4(V_1^{1/2}+2) \left( \sum_{k=0}^{n-1} C_\eta^k \right)(-\eta_{2,0}) \hspace{0.5mm} \text{and} \hspace{0.5mm} \vert W_{s,n} \vert^2 - \vert W_{s,0} \vert^2 \leq 4(V_1^{1/2}+2) \left( \sum_{k=0}^{n-1} C_\eta^k \right)(-\eta_{2,0})
\end{align}
\item the boundedness of the variation of the angle between the particles:
\begin{align}
\label{EQUATSS4.1Recurrence_Cnd8}
\left\vert \left(-\cos\theta_{n+1}\right) - \left(-\cos\theta_0\right) \right\vert \leq d_0 + 11\frac{V_1^{1/2}}{V_0}\left(-\eta_{2,0}\right) + \frac{V_1^{1/2}}{V_0 C_\eta} \left[ 3+11C_\eta \right] \left( \sum_{k=1}^n C_\eta^k \right) (-\eta_{2,0}),
\end{align}
\item and finally, the explicit description of the stable region, namely, with the bound on the distance between the ratio of the normal components and the unstable fixed point $\varphi_{1,\alpha_0}^-$ on the one hand:
\begin{align}
\label{EQUATSS4.1Recurrence_Cnd9}
\left\vert \frac{\eta_{s,n}}{\left(-\eta_{c,n}\right)} - \varphi_{1,\alpha_0}^- \right\vert \leq \delta_x,
\end{align}
where $\delta_x$ is defined in \eqref{EQUATSS4.1DefinDeltxDelty},
\item and on the other hand, the bound on the distance between the ratio of the consecutive normal components of the colliding pairs and the quantity $\alpha_0 - \varphi_{1,\alpha_0}^-$:
\begin{align}
\label{EQUATSS4.1Recurrence_Cnd10}
\left\vert \frac{\eta_{c,n+1}}{\eta_{c,n}} - \left( \alpha_0 - \varphi_{1,\alpha_0}^- \right) \right\vert \leq \delta_y,
\end{align}
where $\delta_y$ is defined in \eqref{EQUATSS4.1DefinDeltxDelty},
\end{itemize}
holding for the $n$-th collision, propagate and hold true for the $(n+1)$-th collision when the initial datum is chosen carefully. These properties allow in particular to deduce that the collapse takes place.\\
\newline
We proceed first to check that the conditions \eqref{EQUATSS4.1Recurrence_Cnd1}-\eqref{EQUATSS4.1Recurrence_Cnd10} hold for the initial datum.\\
By assumption, $\eta_{2,0} = \eta_{c,0} < 0$, so \eqref{EQUATSS4.1Recurrence_Cnd1} holds. By construction, the first inequality in \eqref{EQUATSS4.1Recurrence_Cnd2} holds true, while the second is an empty condition.\\
Concerning the Zhou-Kadanoff parameter $\zeta_0$, by definition we have:
\begin{align}
\label{EQUATSS4.1BorneParam_ZK0_}
\zeta_0 = \frac{d_0(2+d_0)}{(1+d_0)^2}\frac{\vert W_{c,0} \vert^2}{(\eta_{c,0})^2} \leq 2d_0 \frac{\vert W_{c,0} \vert^2}{(\eta_{c,0})^2},
\end{align}
using the fact that the quantity $x(2+x)/(1+x)^2$ is always bounded from above by $2$ for $x \geq 0$. In the end, since we assumed that $\vert W_{c,0} \vert^2 \leq V_1 - \frac{1}{3}(V_1-V_0) \leq V_1$, we deduce that \eqref{EQUATSS4.1Recurrence_Cnd3} and \eqref{EQUATSS4.1Recurrence_Cnd4} hold concerning the Zhou-Kadanoff parameter $\zeta_0$ according to our assumptions on $\eta_{2,0}$ and $d_0$.\\
Concerning the conditions on the norms of the relative velocities, \eqref{EQUATSS4.1Recurrence_Cnd5} holds by construction, and \eqref{EQUATSS4.1Recurrence_Cnd6} is trivial since $n=0$.\\
Concerning the condition \eqref{EQUATSS4.1Recurrence_CndT}, we start with observing that the following crude bound on $\tau_0$ holds:
\begin{align*}
\tau_0 \leq 2 \frac{(-\eta_{c,0})}{\vert W_{c,0} \vert^2} \left[ 1 - \sqrt{1-\zeta_0} \right] \leq 2\frac{(-\eta_{c,0})}{V_0} \zeta_0.
\end{align*}
Since by assumption $\cos\theta_0 < 0$, the angle formed by the particles \circled{1} and \circled{2} around \circled{0} is obtuse, so that:
\begin{align*}
\vert \omega_{1,0} - \omega_{2,0} \vert \geq \sqrt{2},
\end{align*}
and then for all $\tau \leq \tau_0$:
\begin{align*}
\left\vert (1+d_0)\omega_{c,0} - \omega_{s,0} + \tau(W_{c,0}-W_{s,0}) \right\vert \geq \sqrt{2} - d_0 - 2V_1^{1/2}\tau_0,
\end{align*}
which is strictly larger than $1$ according to the assumptions on $d_0$ and $(-\eta_{c,0})$.\\
The condition \eqref{EQUATSS4.1Recurrence_Cnd8} on the variation of the angle comes from the evolution law \eqref{EQUATSS4.1EvoluCosinAngle}, together with the a priori bound on the distance $d_1$ between the particles after the first collision:
\begin{align}
\label{EQUATSS4.1BorneDista_d_1_}
d_1 = \sqrt{1 + 2\eta_{s,0}\tau_0 + \vert W_{s,0} \vert^2 \tau_0^2} - 1 &\leq \eta_{s,0} \tau_0 + \frac{1}{2} \vert W_{s,0} \vert^2 \tau_0^2 \nonumber\\
&\leq \frac{3}{2} V_1^{1/2} \tau_0 \leq 3 \frac{V_1^{1/2}}{V_0}(-\eta_{c,0})
\end{align}
(using in particular the upper and lower bounds \eqref{EQUATSS4.1Recurrence_Cnd5} on the norms of the relative velocities, and that $(-\eta_{c,0})$ is small enough to have $V_1 \tau_0^2 \leq V_1^{1/2}\tau_0 \leq 1$) which provides:
\begin{align*}
\left\vert (-\cos\theta_1) - (-\cos\theta_0) \right\vert &\leq \left\vert \frac{1+d_0}{1+d_1} - 1 \right\vert (-\cos\theta_0) + \tau_0 \vert W_{c,0} \vert + 2 \tau_0 \vert W_{s,0} \vert + \tau_0^2 \vert W_{s,0} \vert \cdot \vert W_{c,0} \vert \\
&\leq d_0 + d_1 + 4 V_1^{1/2} \tau_0 \\
&\leq d_0 + 11 \frac{V_1^{1/2}}{V_0}(-\eta_{c,0}).
\end{align*}
Finally, the difference between the ratio of the normal components \eqref{EQUATSS4.1Recurrence_Cnd9} holds true by assumption for $n=0$. As for the ratio of the consecutive normal components of the colliding pairs, according to the evolution law \eqref{EQUATSS4.1EvoluCompoNormC} we find:
\begin{align*}
\frac{\eta_{c,1}}{\eta_{c,0}} = \frac{1}{1+d_1}\frac{\eta_{s,0}}{\eta_{c,0}} + \frac{(1+r)}{2}(1+d_0)\sqrt{1-\zeta_0}\left(-\omega_{s,1}\cdot\omega_{c,1}\right) + \frac{1}{1+d_1} \frac{\tau_0}{\eta_{c,0}} \vert W_{s,0} \vert^2.
\end{align*}
Rearranging we find:
\begin{align*}
\frac{\eta_{c,1}}{\eta_{c,0}} - \alpha_0 + \varphi_{1,\alpha_0}^- &= \left[ \frac{1}{1+d_1} - 1 \right] \frac{\eta_{s,0}}{\eta_{c,0}} + \left[ \varphi_{1,\alpha_0}^- - \frac{\eta_{s,0}}{(-\eta_{c,0})} \right] + \left[ (1+d_0)\sqrt{1-\zeta_0} - 1 \right] \frac{(1+r)}{2} (-\omega_{s,1}\cdot\omega_{c,1}) \\
&\hspace{15mm} + \left( \frac{(1+r)}{2} (-\omega_{s,1}\cdot\omega_{c,1}) - \alpha_0 \right) + \frac{1}{1+d_1} \frac{\tau_0}{\eta_{c,0}} \vert W_{s,0} \vert^2.
\end{align*}
By assumption on $x_0$, we have $\vert \eta_{s,0}/\eta_{c,0} \vert \leq \alpha_0/2$, and so we can estimate the first term using the bound \eqref{EQUATSS4.1BorneDista_d_1_} for $d_1$. The second term corresponds exactly to $x_0$, which is smaller in absolute value than $(1-h_4)\delta_x$, the third term is bounded from above by $\text{Max}(d_0,2\zeta_0)$, and the fourth term is estimated using the bound on the variation of the angle that we just obtained. Concerning the fifth term, we need a finer estimate on $\tau_0$ than the one we used above, namely we write:
\begin{align*}
\tau_0 \leq 2 \frac{(-\eta_{c,0})}{\vert W_{c,0} \vert^2} \zeta_0 \leq 4\frac{d_0}{(-\eta_{c,0})},
\end{align*}
where we used the upper bound \eqref{EQUATSS4.1BorneParam_ZK0_} on $\zeta_0$. Since all the five terms of the decomposition, except $\vert x_0 \vert$, are smaller than $h_4 \delta_x/4$, we deduce that the bound \eqref{EQUATSS4.1Recurrence_Cnd10} holds initially.\\
All the conditions \eqref{EQUATSS4.1Recurrence_Cnd1}-\eqref{EQUATSS4.1Recurrence_Cnd10} are satisfied for $n=0$, the initialization step of our proof by recursion is complete.\\
\newline
Let us now prove by recursion that the conditions \eqref{EQUATSS4.1Recurrence_Cnd1}-\eqref{EQUATSS4.1Recurrence_Cnd8} hold for all the collisions.\\
Let us then assume that the conditions \eqref{EQUATSS4.1Recurrence_Cnd1}-\eqref{EQUATSS4.1Recurrence_Cnd8} hold for a certain integer $n$, corresponding to the $n$-th collision of the particles starting from the initial configuration $\left(\omega_{1,0},W_{1,0},d_0,\omega_{2,0},W_{2,0}\right)$. In particular, let us observe that the conditions \eqref{EQUATSS4.1Recurrence_Cnd1}, \eqref{EQUATSS4.1Recurrence_Cnd3} ensure that a collision takes place after the $n$-th collision, and the condition \eqref{EQUATSS4.1Recurrence_CndT} implies in addition that the next collision involves the central particle \circled{0}. We can then compute the post-collisional configuration of the system right after the $(n+1)$-th collision, using the equations \eqref{EQUATSS4.1SysteDynamColSp}.\\
Let us start to show that the estimates \eqref{EQUATSS4.1Recurrence_Cnd6} and \eqref{EQUATSS4.1Recurrence_Cnd7} on the norms of the relative velocities hold for $n+1$. Since the quantity $W_{c,n}\cdot\omega_{s,n+1}$ can be written as
\begin{align*}
W_{c,n} \cdot \omega_{s,n+1} = (1+d_n) \eta_{c,n} + \tau_n \vert W_{c,n} \vert^2 = -(1+d_n) \sqrt{1-\zeta_n}(-\eta_{c,n}),
\end{align*}
according to \eqref{EQUATSS4.1EvoluNormeVitRl} we find:
\begin{align*}
\vert W_{c,n+1} \vert^2 - \vert W_{s,n} \vert^2 &\leq 2 \vert W_{s,n} \vert \cdot \vert W_{c,n} \cdot \omega_{s,n+1} \vert + 2 \vert W_{c,n} \cdot \omega_{s,n+1} \vert^2 \\
&\leq 2 \vert V_1 \vert^{1/2} (1+d_n) (-\eta_{c,n}) + 2 (1+d_n)^2 (-\eta_{c,n})^2 \\
&\leq 4 \left( \vert V_1 \vert^{1/2} + 2 \right) (-\eta_{c,n}),
\end{align*}
using the bound \eqref{EQUATSS4.1Recurrence_Cnd5} on the norm of $W_{s,n}$, and \eqref{EQUATSS4.1Recurrence_Cnd2}, implying on the one hand $(-\eta_{c,n})^2 \leq (-\eta_{c,n}) \leq 1$, and on the other hand, for $n\geq1$:
\begin{align*}
d_n \leq \vert \eta_{s,n} \vert \tau_n + \frac{1}{2} \vert W_{s,n} \vert^2 \tau_n^2 \leq V_1^{1/2} \tau_n + \frac{1}{2} V_1 \tau_n^2 \leq \frac{3}{2} V_1^{1/2} \tau_n \leq 3 \frac{V_1^{1/2}}{V_0} (-\eta_{c,n})
\end{align*}
because
\begin{align}
\label{EQUATSS4.1_Controle_Tau_n}
\tau_n \leq 2 \frac{(-\eta_{c,n})}{\vert W_{c,n} \vert^2} \zeta_n \leq 2 \frac{(-\eta_{c,n})}{V_0}
\end{align}
so that $V_1 \tau_n^2 \leq V_1^{1/2} \tau_n \leq 1$. In the same way, for $\vert W_{s,n+1} \vert^2$ we find
\begin{align*}
\vert W_{s,n+1} \vert^2 - \vert W_{c,n} \vert^2 \leq (1+d_n)^2 (-\eta_{c,n})^2 \leq 4 (-\eta_{c,n}).
\end{align*}
Let us assume that $n+1$ is even, and let us consider $\vert W_{c,n+1} \vert^2$: we obtain together with \eqref{EQUATSS4.1Recurrence_Cnd6} and \eqref{EQUATSS4.1Recurrence_Cnd2}:
\begin{align*}
\vert W_{c,n+1} \vert^2 - \vert W_{c,0} \vert^2 &= \left( \vert W_{c,n+1} \vert^2 - \vert W_{s,n} \vert^2 \right) + \left( \vert W_{s,n} \vert^2 - \vert W_{c,0} \vert^2 \right) \\
&\leq 4\left( V_1^{1/2} + 2 \right) (-\eta_{c,n}) + 4\left(V_1^{1/2} + 2 \right) \left( \sum_{k=0}^{n-1} C_\eta^k \right)(-\eta_{2,0}) \\
&\leq 4\left(V_1^{1/2} + 2 \right) \left( \sum_{k=0}^{n} C_\eta^k \right)(-\eta_{2,0}),
\end{align*}
so that \eqref{EQUATSS4.1Recurrence_Cnd7} holds for $n+1$. The cases concerning $W_{s,n+1}$ or $n+1$ odd are identical.\\
In addition, since $C_\eta < 1$, we have $\sum_{k=0}^n C_\eta^k < (1-C_\eta)^{-1} < +\infty$ for all $n \in \mathbb{N}$, the initial norms $\vert W_{1,0} \vert^2 = \vert W_{s,0} \vert^2$ and $\vert W_{2,0} \vert^2 = \vert W_{c,0} \vert^2$ satisfy $V_0 + \frac{1}{3}(V_1-V_0) \leq \vert W_{c,0} \vert^2, \vert W_{s,0} \vert^2 \leq V_1 - \frac{1}{3}(V_1-V_0)$, and since $(-\eta_{2,0})$ is chosen in particular smaller than $(1-C_\eta)(V_1-V_0)\cdot\left(12 (V_1^{1/2} + 2)\right)^{-1}$, we deduce that the inequalities \eqref{EQUATSS4.1Recurrence_Cnd5} holds also for $n+1$.\\
\newline
We can now consider the first key estimates \eqref{EQUATSS4.1Recurrence_Cnd1} and \eqref{EQUATSS4.1Recurrence_Cnd2} concerning the normal component $\eta_{c,n+1}$ of the colliding pair right after the $(n+1)$-th collision. The equation \eqref{EQUATSS4.1EvoluCompoNormC} can be rewritten as:
\begin{align}
\label{EQUATSS4.1DefinReste_y_n_}
\frac{\eta_{c,n+1}}{-\eta_{c,n}} = \varphi_{1,\alpha_0}^- - \alpha_0 + y_n,
\end{align}
where $y_n$ is a remainder term, expected to be small at each iteration, and so we have the identity:
\begin{align}
\label{EQUATSS4.1Lien_x_n_&_y_n_}
y_n &= \left( \frac{1}{1+d_{n+1}} - 1 \right) \left( \frac{\eta_{s,n}}{(-\eta_{c,n})} \right)
- \frac{(1+r)}{2} \left[ (1+d_n)\sqrt{1-\zeta_n} - 1 \right] (-\cos\theta_{n+1}) \\
&\hspace{10mm} - \frac{(1+r)}{2} \left[ (-\cos\theta_{n+1}) - (-\cos\theta_0) \right]
+ \frac{1}{1+d_{n+1}} \vert W_{s,n} \vert^2 \frac{\tau_n}{(-\eta_{c,n})}
+ x_n,
\end{align}
where we define $x_n$ as:
\begin{align}
\label{EQUATSS4.1DefinReste_x_n_}
\frac{\eta_{s,n}}{(-\eta_{c,n})} = \varphi_{1,\alpha_0}^- + x_n,
\end{align}
also expected to remain small. Let us note though that we cannot expect to have $x_n$, and then $y_n$, vanishing.\\
By \eqref{EQUATSS4.1Recurrence_Cnd10}, we have in particular that
\begin{align*}
\vert y_n \vert \leq \delta_2 \hspace{3mm} \text{and} \hspace{3mm} \vert y_n \vert \leq \delta_3.
\end{align*}
Therefore, we have:
\begin{align*}
-1 < -1 + \delta_3 \leq \varphi_{1,\alpha_0}^- - \alpha_0 + y_n \leq - \delta_2 < 0,
\end{align*}
so that, first, $\eta_{c,n+1}$ is negative, and we have in addition $\vert \eta_{c,n+1} \vert \leq C_\eta \vert \eta_{c,n} \vert$: the estimates \eqref{EQUATSS4.1Recurrence_Cnd1} and \eqref{EQUATSS4.1Recurrence_Cnd2} are proved for $n+1$. In particular, we know that right after the $(n+1)$-th collision, the colliding pair of particles is indeed in a pre-collisional configuration. It remains to check, on the one hand that $\zeta_{n+1} < 1$ to be sure that a collision will eventually take place, and on the other hand that the collision cannot take place between \circled{1} and \circled{2}.\\
Let us consider the Zhou-Kadanoff parameter $\zeta_{n+1}$, computed from the post-collisional configuration of the system, right after the $(n+1)$-th collision. We start with a universal a priori bound on $\zeta_{n+1}$:
\begin{align*}
\zeta_{n+1} &= \frac{d_{n+1}(2+d_{n+1})}{(1+d_{n+1})^2} \frac{\vert W_{c,n+1} \vert^2}{(\eta_{c,n+1})^2} \\
&\leq 2 \vert W_{c,n+1} \vert^2 \frac{d_{n+1}}{(\eta_{c,n+1})^2} \\
&\leq 2 \vert W_{c,n+1} \vert^2 \frac{\vert \eta_{s,n} \vert \tau_n}{(\eta_{c,n+1})^2} + \vert W_{c,n+1} \vert^2 \vert W_{s,n} \vert^2 \frac{\tau_n^2}{(\eta_{c,n+1})^2} \\
&\leq 2 \frac{\vert W_{c,n+1} \vert^2}{\vert W_{c,n} \vert^2} \frac{\eta_{s,n}(-\eta_{c,n})\left[1-\sqrt{1-\zeta_n}\right]}{(\eta_{c,n+1})^2}(1+d_n) + \vert W_{c,n+1} \vert^2 \vert W_{s,n} \vert^2 \frac{\tau_n^2}{(\eta_{c,n+1})^2}
\end{align*}
using in the end the explicit expression of the collision time $\tau_n$. Rearranging and bounding $\tau/(-\eta_{c,n})$ by $4\zeta_n^2/\vert W_{c,n} \vert^4$ in the second term we find:
\begin{align*}
\zeta_{n+1} \leq 2 \frac{\vert W_{c,n+1} \vert^2}{\vert W_{c,n} \vert^2} \frac{\left( \frac{\eta_{s,n}}{(-\eta_{c,n})} \right)}{\left( \frac{(-\eta_{c,n+1})}{(-\eta_{c,n})} \right)^2} \left[1-\sqrt{1-\zeta_n}\right](1+d_n) + \frac{\vert W_{c,n+1} \vert^2 \vert W_{s,n} \vert^2}{\vert W_{c,n} \vert^4} \frac{4}{\left( \frac{(-\eta_{c,n+1})}{(-\eta_{c,n})} \right)^2} \zeta_n^2
\end{align*}
which suggest the following decomposition, using the inequality
\begin{align*}
\left[1 - \sqrt{1-\zeta_n}\right] \leq \frac{1}{2}\zeta_n + \frac{1}{4}\zeta_n^2,
\end{align*}
that holds as soon as $\zeta_n \leq 2(\sqrt{2} - 1)$:
\begin{align*}
\zeta_{n+1} &\leq \frac{\vert W_{c,n+1} \vert^2}{\vert W_{c,n} \vert^2} \frac{\varphi_{1,\alpha_0}^-}{\left( \varphi_{1,\alpha_0}^- - \alpha_0 + y_n \right)^2} \zeta_n + \frac{\vert W_{c,n+1} \vert^2}{\vert W_{c,n} \vert^2} \frac{1}{\left( \varphi_{1,\alpha_0}^- - \alpha_0 + y_n \right)^2} x_n \zeta_n \\
& \hspace{5mm} + \frac{\vert W_{c,n+1} \vert^2}{\vert W_{c,n} \vert^2} \frac{\left( \frac{\eta_{s,n}}{(-\eta_{c,n})} \right)}{\left( \frac{(-\eta_{c,n+1})}{(-\eta_{c,n})} \right)^2} \zeta_n^2 + \frac{\vert W_{c,n+1} \vert^2 \vert W_{s,n} \vert^2}{\vert W_{c,n} \vert^4} \frac{4}{\left( \frac{(-\eta_{c,n+1})}{(-\eta_{c,n})} \right)^2} \zeta_n^2,
\end{align*}
where only the first term is expected not to be a small factor in front of $\zeta_n$. Indeed, \eqref{EQUATSS4.1Recurrence_Cnd9} implies in particular that $\vert x_n \vert \leq \left(V_0/V_1\right)\left(\delta_2^2 h_5\right)/12$, so that we have for the second term:
\begin{align*}
\left\vert \frac{\vert W_{c,n+1} \vert^2}{\vert W_{c,n} \vert^2} \frac{1}{\left( \varphi_{1,\alpha_0}^- - \alpha_0 + y_n \right)^2} x_n \right\vert \leq \frac{V_1}{V_0} \frac{1}{\delta_2^2} x_n \leq \frac{1}{12}h_5,
\end{align*}
using the estimates \eqref{EQUATSS4.1Recurrence_Cnd6}, \eqref{EQUATSS4.1Recurrence_Cnd7} on the norms $\vert W_{c,n+1} \vert^2$, $\vert W_{c,n} \vert^2$ and \eqref{EQUATSS4.1Recurrence_Cnd10}, implying in particular, since $\vert y_n \vert$ is small enough, that $\vert \varphi_{1,\alpha_0}^- - \alpha_0 + y_n \vert \geq \delta_2$.\\
Concerning the third term, we have:
\begin{align*}
\left\vert \frac{\vert W_{c,n+1} \vert^2}{\vert W_{c,n} \vert^2} \frac{\left( \frac{\eta_{s,n}}{(-\eta_{c,n})} \right)}{\left( \frac{(-\eta_{c,n+1})}{(-\eta_{c,n})} \right)^2} \right\vert \leq \frac{V_1}{V_0} \frac{\alpha_0/2}{\delta_2^2},
\end{align*}
using in addition $\vert x_n \vert \leq \delta_1$, so that $\eta_{s,n}/(-\eta_{c,n}) \leq \alpha_0/2$, while for the fourth term:
\begin{align*}
\left\vert \frac{\vert W_{c,n+1} \vert^2 \vert W_{s,n} \vert^2}{\vert W_{c,n} \vert^4} \frac{4}{\left( \frac{(-\eta_{c,n+1})}{(-\eta_{c,n})} \right)^2} \right\vert \leq \frac{V_1^2}{V_0^2} \frac{4}{\delta_2^2},
\end{align*}
so that we see that for:
\begin{align*}
\frac{V_1}{2 V_0 \delta_2^2} \left( \alpha_0 + \frac{8 V_1}{V_0} \right) \zeta_n \leq \frac{h_5}{6},
\end{align*}
which is implied by \eqref{EQUATSS4.1Recurrence_Cnd4}, we obtain for the three remainder terms:
\begin{align*}
\left\vert \frac{\vert W_{c,n+1} \vert^2}{\vert W_{c,n} \vert^2} \frac{1}{\left( \varphi_{1,\alpha_0}^- - \alpha_0 + y_n \right)^2} x_n + \frac{\vert W_{c,n+1} \vert^2}{\vert W_{c,n} \vert^2 \left( \frac{(-\eta_{c,n+1})}{(-\eta_{c,n})} \right)^2} \left[ \frac{\eta_{s,n}}{(-\eta_{c,n})} + 4\frac{\vert W_{s,n} \vert^2}{\vert W_{c,n} \vert^2} \right] \zeta_n \right\vert \leq \frac{h_5}{12} + \frac{h_5}{6} = \frac{h_5}{4} \cdotp
\end{align*}
Finally, for the main term, by construction of the bounds $V_0$ and $V_1$ on the one hand, and by the estimate \eqref{EQUATSS4.1Recurrence_Cnd10} on $\vert y_n \vert$ on the other hand, we have:
\begin{align*}
\frac{\vert W_{c,n+1} \vert^2}{\vert W_{c,n} \vert^2} \frac{\varphi_{1,\alpha_0}^-}{\left( \varphi_{1,\alpha_0}^- - \alpha_0 + y_n \right)^2} \leq \frac{V_1}{V_0}(1-h_5) \leq 1- \frac{h_5}{2}.
\end{align*}
In summary we find:
\begin{align*}
\zeta_{n+1} \leq \left( 1 - \frac{h_5}{2} \right) \zeta_n + \frac{h_5}{4} \zeta_n = \left( 1 - \frac{h_5}{4} \right) \zeta_n,
\end{align*}
so that we obtained \eqref{EQUATSS4.1Recurrence_Cnd3} and \eqref{EQUATSS4.1Recurrence_Cnd4} for $n+1$. In particular, together with the condition \eqref{EQUATSS4.1Recurrence_Cnd1} on $\eta_{c,n+1}$, after the $(n+1)$-th collision, we are now certain that another collision will take place. Let us now verify that the $(n+2)$-th collision cannot be between \circled{1} and \circled{2}, that is, let us check that \eqref{EQUATSS4.1Recurrence_CndT} holds also for $n+1$. The distance between these two particles, right after the $(n+1)$-th collision, and until any further collision, is given by:
\begin{align*}
\left\vert (1+d_{n+1})\omega_{c,n+1} - \omega_{s,n+1} + \tau(W_{c,n+1}-W_{s,n+1}) \right\vert,
\end{align*}
where $\tau$ denotes the time measured from the $(n+1)$-th collision. But then, for $\tau \leq \tau_{n+1}$ we have:
\begin{align*}
\left\vert (1+d_{n+1})\omega_{c,n+1} - \omega_{s,n+1} + \tau(W_{c,n+1}-W_{s,n+1}) \right\vert &\geq \left\vert \omega_{c,n+1} - \omega_{s,n+1} \right\vert - d_{n+1} - 2 V_1^{1/2} \tau \\
&\geq \left\vert \omega_{c,n+1} - \omega_{s,n+1} \right\vert - \frac{3}{2} V_1^{1/2} \tau_n - 2 V_1^{1/2} \tau_{n+1} \\
&\geq \left\vert \omega_{c,n+1} - \omega_{s,n+1} \right\vert - 3 \frac{V_1^{1/2}}{V_0}(-\eta_{c,n}) - 4 \frac{V_1^{1/2}}{V_0}(-\eta_{c,n+1}).
\end{align*}
Since the estimate \eqref{EQUATSS4.1Recurrence_Cnd8}, together with the upper bound \eqref{EQUATSS4.1Recurrence_Cnd2} on the normal component $(-\eta_{c,n})$, imply that $\cos\theta_{n+1} < 0$, we deduce that:
\begin{align*}
\vert \omega_{c,n+1} - \omega_{s,n+1} \vert \geq \sqrt{2},
\end{align*}
so that the upper bound \eqref{EQUATSS4.1Recurrence_Cnd2} implies that the distance between the particles \circled{1} and \circled{2}, between the $(n+1)$-th collision and the next, is always larger than $\frac{1}{2}\left( \sqrt{2} - 1 \right)$, that is, the $(n+2)$-th collision cannot involve the pair \circled{1}-\circled{2}. As a consequence, the system will experiment a $(n+2)$-th collision, involving the central particle \circled{0}.\\
\newline
Now that the $(n+2)$-th collision is certain, involving the right pair of particles, we know that the time interval $\tau_{n+1}$ between the $(n+1)$th and $(n+2)$-th collisions is well-defined. Let us now turn to the estimate \eqref{EQUATSS4.1Recurrence_Cnd8} of the variation of the angles $\theta_n$.\\
Our starting point is the evolution law \eqref{EQUATSS4.1EvoluCosinAngle}. We find:
\begin{align*}
\left\vert \left(-\cos\theta_{n+2}\right) - \left(-\cos\theta_{n+1}\right) \right\vert &= \left\vert \frac{1+d_{n+1}}{1+d_{n+2}} - 1 \right\vert (-\cos\theta_{n+1}) + \frac{1}{1+d_{n+2}} \vert \omega_{s,n+1} \cdot W_{c,n+1} \vert \tau_{n+1} \\
&\hspace{5mm} + \frac{1+d_{n+1}}{1+d_{n+2}} \vert \omega_{c,n+1} \cdot W_{s,n+1} \vert \tau_{n+1} - \frac{1}{1+d_{n+2}} \vert W_{s,n+1} \cdot W_{c,n+1} \vert \tau_{n+1}^2 \\
&\leq d_{n+1} + d_{n+2} + V_1^{1/2}\tau_{n+1} + 2 V_1^{1/2} \tau_{n+1} + V_1 \tau_{n+1}^2.
\end{align*}
$\tau_{n+1}$ being bounded from above by $2(-\eta_{c,n+1})/V_0$ (because of \eqref{EQUATSS4.1_Controle_Tau_n}), which is small enough, we deduce:
\begin{align*}
\left\vert \left(-\cos\theta_{n+2}\right) - \left(-\cos\theta_{n+1}\right) \right\vert 
&\leq 3 \frac{V_1^{1/2}}{V_0}(-\eta_{c,n}) + 3 \frac{V_1^{1/2}}{V_0}(-\eta_{c,n+1}) + 8 \frac{V_1^{1/2}}{V_0} (-\eta_{c,n+1}).
\end{align*}
Using finally the bound \eqref{EQUATSS4.1Recurrence_Cnd2}, holding for $n+1$, and \eqref{EQUATSS4.1Recurrence_Cnd8} holding for $n$, we deduce that this last estimate holds also for $n+1$.\\
\newline
To complete the investigation of the estimates \eqref{EQUATSS4.1Recurrence_Cnd1}-\eqref{EQUATSS4.1Recurrence_Cnd10} for the $(n+1)$-th collision, it remains only to check \eqref{EQUATSS4.1Recurrence_Cnd9} and \eqref{EQUATSS4.1Recurrence_Cnd10}.\\
Let us recall that $x_n$ is defined in \eqref{EQUATSS4.1DefinReste_x_n_}. Therefore we find:
\begin{align*}
\vert x_{n+1} \vert &= \left\vert \frac{\eta_{s,n+1}}{(-\eta_{c,n+1})} - \varphi_{1,\alpha_0}^- \right\vert = \left\vert \frac{r(1+d_n) \sqrt{1-\zeta_n}}{\frac{(-\eta_{c,n+1})}{(-\eta_{c,n})}} - \varphi_{1,\alpha_0}^- \right\vert \\
&\leq \left\vert \frac{r\left[(1+d_n) \sqrt{1-\zeta_n} - 1\right]}{\frac{(-\eta_{c,n+1})}{(-\eta_{c,n})}} \right\vert + \left\vert \frac{r}{\alpha_0-\varphi_{1,\alpha_0}^--y_n} - \varphi_{1,\alpha_0}^- \right\vert.
\end{align*}
Since $\vert y_n \vert \leq \delta_2$, the first term is bounded by:
\begin{align*}
\left\vert \frac{r\left[(1+d_n) \sqrt{1-\zeta_n} - 1\right]}{\frac{(-\eta_{c,n+1})}{(-\eta_{c,n})}} \right\vert \leq \frac{\text{Max}(d_n,2\zeta_n)}{\delta_2},
\end{align*}
and the bound is itself smaller than $(h_4 \delta_x)/2$ by the assumptions on the initial $(-\eta_{2,0})$ and $\zeta_0$ (and here it is crucial to have $(h_4\delta_x)/2$, and not only $h_4 \delta_x$), together with the uniform estimates \eqref{EQUATSS4.1Recurrence_Cnd2} and \eqref{EQUATSS4.1Recurrence_Cnd4} on $(-\eta_{c,n})$ and $\zeta_n$.\\
Concerning the main term, which is the second one, since we have also $\vert y_n \vert \leq \delta_4$, we deduce:
\begin{align*}
\left\vert \frac{r}{\alpha_0-\varphi_{1,\alpha_0}^--y_n} - \varphi_{1,\alpha_0}^- \right\vert \leq (1-h_4) \vert y_n \vert \leq (1-h_4) \delta_x.
\end{align*}
In the end, we have:
\begin{align}
\label{EQUATSS4.1Cntrlx_n+1Ameli} 
\vert x_{n+1} \vert \leq \frac{1}{2} h_4 \delta_x + (1-h_4)\delta_x = \left( 1 - \frac{h_4}{2} \right) \delta_x,
\end{align}
that is, the inequality \eqref{EQUATSS4.1Recurrence_Cnd9} propagates to the rank $n+1$. Concerning the estimate \eqref{EQUATSS4.1Recurrence_Cnd10} on $\vert y_{n+1} \vert$, let us recall that we proved that the $(n+2)$-th collision is takes place, so in particular $d_{n+2}$ and $\cos\theta_{n+2}$ are well-defined, and we have by a direct triangular inequality:
\begin{align*}
\vert y_{n+1} \vert \leq d_{n+2} \frac{\eta_{s,n+1}}{(-\eta_{c,n+1})} + \text{Max}(d_{n+1},2\zeta_{n+1}) + \vert (-\cos\theta_{n+2}) - (-\cos\theta_0) \vert + V_1 \frac{\tau_{n+1}}{(-\eta_{c,n+1})} + \vert x_{n+1} \vert.
\end{align*}
In particular, since we just obtained that $\vert x_{n+1} \vert$ is smaller than $\delta_x$, we have $\eta_{s,n+1}/(-\eta_{c,n+1}) \leq \alpha_0/2$, the four first terms are bounded by $(h_4\delta_x)/8$ using the estimates on $-\eta_{c,n}$ and $\zeta_n$, and using to conclude the improved inequality \eqref{EQUATSS4.1Cntrlx_n+1Ameli}, we obtain:
\begin{align*}
\vert y_{n+1} \vert \leq \frac{h_4}{8}\delta_x + \frac{h_4}{8}\delta_x + \frac{h_4}{8}\delta_x + \frac{h_4}{8}\delta_x + \left( 1 - \frac{h_4}{2} \right) \delta_x = \delta_x,
\end{align*}
and so the estimate \eqref{EQUATSS4.1Recurrence_Cnd10} is also true for $n+1$.
\newline
In summary, assuming that the $n$-th collision takes place, involving the central particle \circled{0}, and that the conditions \eqref{EQUATSS4.1Recurrence_Cnd1}-\eqref{EQUATSS4.1Recurrence_Cnd10} hold for the $n$-th collision, then the $(n+1)$-th collision takes place, involving again the central particle \circled{0}, and the estimates \eqref{EQUATSS4.1Recurrence_Cnd1}-\eqref{EQUATSS4.1Recurrence_Cnd10} hold true for the $(n+1)$-th collision. Since these estimates were initially verified, we deduce that infinitely many collisions take place, all of them involving the central particle \circled{0}, and we have therefore exhibited an explicit trajectory performing an inelastic collapse.
\end{proof}

\begin{remar}
In the previous proof we had to use the fact that the ratio of the norms $\vert W_{c,n} \vert/\vert W_{s,n} \vert$ is not much larger than $1$. This restriction simplified the proof and enabled to consider a single collision at each iteration of the recursion. Most likely, it is possible to remove such an assumption, but the proof becomes more intricate.\\
The restriction on $\vert W_{c,n} \vert/\vert W_{s,n} \vert$ looks natural at the first glance, because if the relative velocities of the two external particles are too different, one would expect that the system of particles separates after few collisions. However it is not the case in general: the proof indicates that such a ratio can be arbitrarily, up to choose initial distances between the particles small enough. We recover the justification of the ``flat surface approximation'' described in \cite{ZhKa996}. In such a regime, the particles are close enough, so that the geometry (i.e., the curvature) of the particles plays no role, neither as the tangential components of the relative velocities, and the only relevant information turns out to be the normal components, describing how fast the tangent planes of the particles, orthogonal to the contact lines, are approaching each other.
\end{remar}

\section{Sequence of collisions involving the three pairs of particles: the triangular collapse}
\label{SSECTIO3.5SuiteColli_012_}

In this section we will consider a system of three inelastic particles performing an inelastic collapse, and we will assume that all the three pairs of particles \circled{0}-\circled{1}, \circled{0}-\circled{2} and \circled{1}-\circled{2} are involved infinitely many times in the collisions. To the best of our knowledge, this is the first study of a collapse with such a collision order. Such a collapse is a priori possible, and we will see that it is essentially different from the nearly-linear collapse, concerning its geometrical characteristics.

\begin{defin}[Triangular inelastic collapse]
\label{DEFINSS3.5CollapseTriangu}
Let $r \in\ ]0,1[$ be a positive real number smaller than $1$, and let us consider a system of three inelastic particles \circled{0}, \circled{1} and \circled{2} evolving according to the $r$-inelastic hard sphere flow \eqref{EQUATSS2.1_Loi_de_Newton_}, \eqref{EQUATSS2.1VitesPost-Colli}, on a time interval $[0,\tau^*[$, and undergoing an inelastic collapse at time $\tau^* > 0$.\\
We say that the system experiences a \emph{triangular collapse} if the sequence of collisions presents infinitely many collisions between all the three pairs of particles \circled{0}-\circled{1}, \circled{0}-\circled{2}, and \circled{1}-\circled{2}.
\end{defin}

\begin{remar}
At this step, the order of collisions is not prescribed. As for the nearly-linear collapse, the name ``triangular collapse'' will become self-evident in what follows.
\end{remar}
\noindent
First, we will obtain the final geometry of the system of particles, at the time of the inelastic collapse. From that geometry, we will deduce that the eventual order of the collisions is prescribed. Finally, from this prescribed order, we will present a formal argument strongly suggesting that such an inelastic collapse is not stable, that is, not stable under perturbation of the initial data in the phase space (if it exists).

\subsection{Geometry of the triangular collapse}
\label{SSSCT3.5.1GeomeTriagColla}

Considering a system of three inelastic particles performing an inelastic collapse such that all the three pairs are colliding infinitely many times before the collapsing time $\tau^*$, we see that Proposition \ref{PROPOSS3.2ConveAngleDista} enables immediately to deduce the final geometry of the system of particles at $\tau^*$.

\begin{propo}[Geometry of the triangular collapse]
\label{PROPOSS3.5GeomeTriagColla}
Let $r \in\ ]0,1[$ be a positive real number smaller than $1$, and let us consider a system of three inelastic particles \circled{0}, \circled{1} and \circled{2} evolving according to the $r$-inelastic hard sphere flow \eqref{EQUATSS2.1_Loi_de_Newton_}, \eqref{EQUATSS2.1VitesPost-Colli}, on a time interval $[0,\tau^*[$, and undergoing a triangular inelastic collapse at time $\tau^* > 0$.\\
Then, the three particles \circled{0}, \circled{1} and \circled{2} of the system are in contact at the collapsing time $\tau^*$, that is, we have:
\begin{align}
\vert \overline{\omega}_1 \vert = 1, \hspace{5mm} \vert \overline{\omega}_2 \vert = 1 \hspace{5mm} \text{and} \hspace{5mm} \vert \overline{\omega}_2 - \overline{\omega}_1 \vert = 1.
\end{align}
\end{propo}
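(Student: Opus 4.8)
The plan is to derive the statement as an essentially immediate consequence of Proposition \ref{PROPOSS3.2ConveAngleDista}, exploiting the fact that a triangular collapse forces \emph{all three} pairs of particles to collide infinitely often. First I would observe that, by Definition \ref{DEFINSS3.5CollapseTriangu}, each of the three counting collision functions $\varphi^{(0,1)}$, $\varphi^{(0,2)}$ and $\varphi^{(1,2)}$ of Definition \ref{DEFINSS3.2FonctCompt_Gaps} is defined on all of $\mathbb{N}^*$; in particular the particle \circled{0} is involved in infinitely many collisions with both \circled{1} and \circled{2}. The hypotheses of Proposition \ref{PROPOSS3.2ConveAngleDista} are therefore met, so the angular parameters $\omega_{1,n} = x_1(t_n) - x_0(t_n)$ and $\omega_{2,n} = x_2(t_n) - x_0(t_n)$ converge, respectively, to $\overline{\omega}_1$ and $\overline{\omega}_2$ as $n \to +\infty$. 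The convergence of the \emph{whole} sequences (and not merely of subsequences) is the property I will lean on throughout.

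Next I would pass to the limit along the pair-specific collision times. At every collision of type \circled{0}-\circled{1}, that is, at each index $n = \varphi^{(0,1)}(k)$, the particles \circled{0} and \circled{1} are in contact, so $|\omega_{1,n}| = |x_1(t_n) - x_0(t_n)| = 1$. Since $(\omega_{1,n})_n$ converges and the subsequence indexed by $\varphi^{(0,1)}(k)$ is infinite, the limit inherits this value, giving $|\overline{\omega}_1| = \lim_{k \to +\infty} |\omega_{1,\varphi^{(0,1)}(k)}| = 1$. The identical argument applied to the infinitely many collisions of type \circled{0}-\circled{2} yields $|\overline{\omega}_2| = 1$.

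For the last equality I would handle the pair \circled{1}-\circled{2} by writing its relative position in the frame attached to \circled{0}, namely $x_2(t_n) - x_1(t_n) = \omega_{2,n} - \omega_{1,n}$, which therefore converges to $\overline{\omega}_2 - \overline{\omega}_1$. Along the infinite subsequence $n = \varphi^{(1,2)}(k)$ of collisions between \circled{1} and \circled{2} one has $|x_2(t_n) - x_1(t_n)| = 1$, and passing to the limit gives $|\overline{\omega}_2 - \overline{\omega}_1| = 1$. I do not expect a genuine obstacle here: the entire content is carried by the convergence supplied by Proposition \ref{PROPOSS3.2ConveAngleDista}, and the only point requiring care is precisely that this convergence holds for the full sequences $(\omega_{1,n})_n$ and $(\omega_{2,n})_n$, so that their common limits may be evaluated along each of the three distinct collision subsequences. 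Once this is granted, the three contact conditions hold simultaneously, and the triangular configuration — the three particles mutually tangent at the collapsing time $\tau^*$ — is established.
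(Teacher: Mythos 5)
Your proof is correct and follows essentially the same route as the paper, which presents this proposition as an immediate corollary of Proposition \ref{PROPOSS3.2ConveAngleDista}: the positions converge at $\tau^*$, and evaluating the (convergent) relative positions along each of the three infinite collision subsequences, where the corresponding distance equals $1$, yields the three contact conditions. Your only addition is to spell out this limiting argument explicitly, including the pair \circled{1}-\circled{2} via $x_2(t_n)-x_1(t_n)=\omega_{2,n}-\omega_{1,n}$, which is exactly the intended reasoning.
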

\noindent
In the present case, the three particles of the system are in contact at the collapsing time $\tau^*$. Therefore, the centers of the particles form an equilateral triangle, hence the name of the collapse given in Definition \ref{DEFINSS3.5CollapseTriangu}. We will see that this geometry has important implications.

\subsection{Order of the collisions of the triangular collapse}
\label{SSSCT3.4.2OrdreTriagColla}

In this section, we will see that the final geometry of a system of particles experiencing a triangular collapse will eventually experience a periodic sequence of collisions.

\begin{theor}[Order of collisions of the triangular collapse]
\label{PROPOSS3.5OrdreTriagColla}
Let $r \in\ ]0,1[$ be a positive real number smaller than $1$, and let us consider a system of three inelastic particles \circled{0}, \circled{1} and \circled{2} evolving according to the $r$-inelastic hard sphere flow \eqref{EQUATSS2.1_Loi_de_Newton_}, \eqref{EQUATSS2.1VitesPost-Colli}, on a time interval $[0,\tau^*[$, and undergoing a triangular inelastic collapse at time $\tau^* > 0$.\\
Then, up to relabelling the particles, the sequence of collisions before the collapsing time $\tau^*$ becomes eventually the infinite repetition of the three collisions \circled{0}-\circled{1}, \circled{0}-\circled{2}, \circled{1}-\circled{2}, in that order.
\end{theor}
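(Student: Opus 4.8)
The plan is to combine the rigid geometry of the collapse configuration with a sign analysis of how individual collisions act on the normal components of the two non-colliding pairs. I would begin by recording the limiting geometry supplied by Proposition \ref{PROPOSS3.5GeomeTriagColla}: the three centres converge to the vertices of an equilateral triangle, so the limiting contact directions $\overline{\omega}_1,\overline{\omega}_2$ and $\overline{\omega}_{12}=\overline{\omega}_2-\overline{\omega}_1$ are unit vectors with the fixed, nonzero cosines $\overline{\omega}_1\cdot\overline{\omega}_2=\overline{\omega}_2\cdot\overline{\omega}_{12}=1/2$ and $\overline{\omega}_1\cdot\overline{\omega}_{12}=-1/2$. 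Since a triangular collapse has all three pairs colliding infinitely often, I would next argue that the system has finite maximal gaps of all three types (using that the directions and distances converge by Proposition \ref{PROPOSS3.2ConveAngleDista} and that the per-collision dissipation is summable), so that Proposition \ref{PROPOSS3.2SommaCompNormlV} applies and yields $\eta_{(i,j),n}\to 0$ for each pair. Because every normal component vanishes in the limit, the limiting relative velocities are all tangential to their respective contact lines; in the plane of the triangle this is precisely the velocity field of a rigid rotation of the triangle, and the \emph{sense} of that rotation is what should fix the cyclic order of contact closings. This rotational picture is the intuition behind the periodic order; the rigorous version is a sign analysis.

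For the core argument I would track, for a fixed pair $(i,j)$, the unnormalised normal component $\tilde\eta_{ij}=(v_j-v_i)\cdot(x_j-x_i)$. Along free flight it satisfies $\frac{\dd}{\dd t}\tilde\eta_{ij}=|v_j-v_i|^2\ge 0$; at an $(i,j)$-collision it is sent to $-r\,\tilde\eta_{ij}$ by \eqref{EQUATSS2.1VitesPost-Colli}; and at a collision of a neighbouring pair $(k,l)$ it is incremented by a quantity proportional to $(-\eta_{(k,l)})\,(\omega_{kl}\cdot\omega_{ij})$, whose sign is dictated by the fixed limiting cosines together with whether the shared particle is the first or second index in each of the two pairs. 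Writing out this increment for every ordered pair of collision types produces a complete sign table. Since a recollision of $(i,j)$ can occur only after $\tilde\eta_{ij}$ has returned from positive (just after its own collision) to negative, the pattern of intervening collisions is constrained to supply increments of the correct sign; it is this comparison between required signs and the fixed table that restricts the admissible orders.

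I would then convert these constraints into the periodicity statement by elimination. No pair can collide twice consecutively, because its normal component is positive immediately after its own collision; combining this with the sign constraints above, I would show that between two consecutive collisions of any fixed pair the other two pairs must each collide exactly once and in a forced cyclic order, so that iterating forces the entire eventual sequence to be, up to relabelling of \circled{0}, \circled{1}, \circled{2}, the infinite repetition of \circled{0}-\circled{1}, \circled{0}-\circled{2}, \circled{1}-\circled{2}. The two orientations of the near-rotation are exactly the freedom absorbed by the relabelling.

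The step I expect to be the main obstacle is the sign bookkeeping underlying the second and third paragraphs. The increments contributed by neighbouring collisions and by free flight are all of the same order as the $\eta$'s themselves, so ruling out an exotic, non-eventually-periodic order cannot rest on the limiting signs alone: it requires quantitative control of how these competing contributions balance as $n\to\infty$, together with care that the contact directions only \emph{converge} to their equilateral values, so the cosines carry the decisive sign only for $n$ sufficiently large. A secondary technical point, logically prior to everything else, is establishing the finite-maximal-gap property before the order of collisions is known, so that the convergence results of Section \ref{SECTION__3ProprGenerColla} are genuinely available in this setting.
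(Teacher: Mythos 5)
Your overall strategy --- a sign analysis of the normal components of the non-colliding pairs, combined with the observation that a pair cannot recollide immediately because its normal component is positive right after its own collision --- is essentially the one the paper uses. But the step you yourself flag as ``the main obstacle'' is a genuine gap, and it is precisely where the paper's proof does something you have not found. You assert that ruling out exotic orders ``cannot rest on the limiting signs alone'' because the free-flight and neighbouring-collision increments are of the same order as the $\eta$'s and must be balanced quantitatively. In fact no balancing is needed. The paper evaluates the normal component of the pair \circled{0}-\circled{1} at one decisive moment: immediately after a collision \circled{0}-\circled{2} that itself immediately follows a collision \circled{0}-\circled{1} (since any two consecutive collisions share exactly one particle, this configuration recurs infinitely often up to relabelling). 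At that moment, by \eqref{EQUATSS3.5Eta1_Explicite_}, $\eta_1'$ is the sum of $\frac{1}{1+d'}\left(\eta_1+\tau\vert W_1\vert^2\right)$, which is positive because $\eta_1>0$ post-collisionally and free flight only increases it, and of $-\frac{(1+r)}{2}\left((1+d)\eta_2+\tau\vert W_2\vert^2\right)\omega_1'\cdot\omega_2'$. The key identity you are missing is that the explicit formula \eqref{EQUATSS2.2TempsColliTauV3} for $\tau$ gives exactly $(1+d)\eta_2+\tau\vert W_2\vert^2=(1+d)\sqrt{1-\zeta}\,\eta_2<0$: the free-flight growth of the colliding pair's normal component can never flip its sign before impact, because the particles must still be approaching when they touch. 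Combined with $\omega_1'\cdot\omega_2'\rightarrow 1/2>0$ (convergence of the contact directions to the equilateral configuration, Proposition \ref{PROPOSS3.2ConveAngleDista}), both terms are positive for $n$ large, hence $\eta_1'>0$, and since \circled{0}-\circled{2} just collided as well, the next collision is forced to be \circled{1}-\circled{2}. The ``eventually'' in the statement absorbs only the convergence of the cosine, not any quantitative competition between terms; your proposed sign table over all ordered pairs of collision types is not needed once the argument is anchored at this particular post-collisional instant.

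A secondary point: your first paragraph routes the argument through the finite-maximal-gap property and Proposition \ref{PROPOSS3.2SommaCompNormlV} in order to conclude that all three normal components vanish, and then appeals to a ``rigid rotation'' picture. As you suspect, this is circular at this stage: in the paper the finite maximal gaps and the vanishing of the three normal components (Proposition \ref{PROPOSS3.5CTriaConveNorma}) are \emph{consequences} of the theorem, deduced from the eventual periodicity. The proof of the theorem itself uses only the convergence of the positions and contact directions, which holds with no a priori knowledge of the order of collisions; neither the vanishing of the $\eta$'s nor the rotational picture enters the argument.
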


\begin{proof}
We consider a collapsing system of three particles, so without loss of generality, we can assume that a collision between the particles \circled{0} and \circled{1} takes place infinitely many times, and such that such a collision is followed immediately after by a collision between the particles \circled{0} and \circled{2}. We have then infinitely many pairs of consecutive collisions of type \circled{0}-\circled{1},\circled{0}-\circled{2}. Let us index by $\varphi(n)$ the collision of type \circled{0}-\circled{1}, taking place at $t_{\varphi(n)}$, followed immediately by the collision $\varphi(n)+1$, of type \circled{0}-\circled{2}, at time $t_{\varphi(n)+1}$.\\
Using the notations introduced in Section \ref{SECTION__2EcrirSystmDynam}, let us denote by $W_1$ and $W_2$ the relative velocities of the respective pairs of particles \circled{0}-\circled{1} and \circled{0}-\circled{2} at time $t_{\varphi(n)}^+$, so that in particular the pair \circled{0}-\circled{1} is in a post-collisional configuration. We denote also $x_1\left(t_{\varphi(n)}\right) - x_0\left(t_{\varphi(n)}\right) = \omega_1\left(t_{\varphi(n)}\right)$, in short, by $\omega_1$, and similarly:
\begin{align*}
\displaystyle{ \frac{x_2\left(t_{\varphi(n)}\right) - x_0\left(t_{\varphi(n)}\right)}{\left\vert x_2\left(t_{\varphi(n)}\right) - x_0\left(t_{\varphi(n)}\right) \right\vert}} = \omega_2\left(t_{\varphi(n)}\right) = \omega_2.
\end{align*}
As before, the normal components of $W_1$ and $W_2$, respectively along $\omega_1$ and $\omega_2$, are denoted by $\eta_1$ and $\eta_2$. Let us denote by $W_1'$ and $W_2'$ the corresponding relative velocities at time $t_{\varphi(n)+1}^+$, just after the collision $\varphi(n)+1$ takes place. Accordingly, we denote
\begin{align*}
\frac{x_1\left(t_{\varphi(n)+1}\right) - x_0\left(t_{\varphi(n)+1}\right)}{\left\vert x_1\left(t_{\varphi(n)+1}\right) - x_0\left(t_{\varphi(n)+1}\right) \right\vert} = \omega_1\left(t_{\varphi(n)+1}\right)
\end{align*}
by $\omega_1'$ and $x_2\left(t_{\varphi(n)+1}\right) - x_0\left(t_{\varphi(n)+1}\right) = \omega_2\left(t_{\varphi(n)+1}\right)$ by $\omega_2'$. Then, we denote by $\eta'_1$, $\eta'_2$ the respective normal components of $W_1'$ and $W_2'$ (along the directions $\omega'_1$ and $\omega_2'$ respectively).\\
After introducing these notations, we can compute $\eta_1'$. In particular, in order to have a collision of type \circled{0}-\circled{1} following immediately the $(\varphi(n)+1)$-th collision (of type \circled{0}-\circled{2}), it is necessary to have
\begin{align*}
\eta_1' < 0.
\end{align*}
But using the expression of the post-collisional velocities obtained after a collision of type \circled{0}-\circled{2}, we obtain:
\begin{align*}
\eta_1' = W_1'\cdot \omega_1' = \left( W_1 - \frac{(1+r)}{2} \left( W_2\cdot\omega_2'\right) \omega_2' \right) \cdot \omega_1',
\end{align*}
and using now the explicit expressions of the angular parameters at the time $t_{\varphi(n)+1}$ of the $\varphi(n)+1$-th collision:
\begin{align*}
\omega_1' = \frac{x_1\left(t_{\varphi(n)+1}\right) - x_0\left(t_{\varphi(n)+1}\right)}{\vert x_1\left(t_{\varphi(n)+1}\right) - x_0\left(t_{\varphi(n)+1}\right) \vert} = \frac{x_1\left(t_{\varphi(n)}\right) - x_0\left(t_{\varphi(n)} \right) + \tau W_1}{\vert x_1\left(t_{\varphi(n)}\right) - x_0\left(t_{\varphi(n)} \right) + \tau W_1 \vert} = \frac{\omega_1 + \tau W_1}{1 + d'},
\end{align*}
and
\begin{align*}
\omega_2' = \frac{x_2\left(t_{\varphi(n)+1}\right) - x_0\left(t_{\varphi(n)+1}\right)}{\vert x_2\left(t_{\varphi(n)+1}\right) - x_0\left(t_{\varphi(n)+1}\right) \vert} = x_2\left(t_{\varphi(n)}\right) - x_0\left(t_{\varphi(n)} \right) + \tau W_2 = (1+d)\omega_2 + \tau W_2,
\end{align*}
(where we denoted by $\tau$ the difference $\tau_{\varphi(n)}$ between the times $t_{\varphi(n)}$ and $t_{\varphi(n)+1}$ of the $\varphi(n)$-th and $\left(\varphi(n)+1\right)$-th collision), we get:
\begin{align}
\label{EQUATSS3.5Eta1_Explicite_}
\eta_1' = \frac{1}{1+d'}\left( \eta_1 + \tau \vert W_1 \vert^2 \right) - \frac{(1+r)}{2} \left( (1+d)\eta_2 + \tau \vert W_2 \vert^2 \right) \omega_1'\cdot\omega_2'.
\end{align}
Now, the term $\eta_1 + \tau \vert W_1 \vert^2$ is positive, because at time $t_{\varphi(n)}$ a collision of type \circled{0}-\circled{1} took place, and so $\eta_1$ is positive. As for the second term, using also the explicit expression \eqref{EQUATSS2.2TempsColliTauV3} of $\tau$, which is the key observation, we find:
\begin{align*}
(1+d)\eta_2 + \tau \vert W_2 \vert^2 &= (1+d)\eta_2 + \frac{(1+d)(-\eta_2)}{\vert W_2 \vert^2}\left[ 1 - \sqrt{1 - \zeta} \right] \vert W_2 \vert^2 \\
&= + (1+d) \sqrt{1-\zeta} \eta_2.
\end{align*}
In the end, we need to conclude about the sign of the term:
\begin{align*}
- \frac{(1+r)}{2} (1+d) \sqrt{-\zeta} \left(\omega_1'\cdot\omega_2'\right) \eta_2.
\end{align*}
But by assumption at time $t_{\varphi(n)+1}^+$ the system is in a pre-collisional configuration, in the sense that the next collision that will take place is of type \circled{0}-\circled{2}, so that $-\eta_2 > 0$. Now, according to Proposition \ref{PROPOSS3.2ConveAngleDista}, the sequences $\left( \omega_1\left(t_{\varphi(n)}\right)\right)_n$ and $\left( \omega_2\left(t_{\varphi(n)}\right)\right)_n$ are converging such that $\omega_1\left(t_{\varphi(n)}\right) \cdot \omega_2\left(t_{\varphi(n)}\right)$ converges towards $1/2$. Therefore, for $n$ large enough we have $\omega_1'\cdot\omega_2' \simeq \frac{1}{2}$, and so in particular $\omega_1'\cdot\omega_2' > 0$, which gives:
\begin{align}
\label{EQUATSS3.5ConclusionEta1>}
\eta_1' > 0,
\end{align}
and in conclusion if a collision takes place after the $\left(\varphi(n)+1\right)$-th collision, it cannot be of type \circled{0}-\circled{1}. Then, since in addition at this time the particles \circled{0} and \circled{2} are in a pre-collisional configuration, the next collision is necessarily of type \circled{1}-\circled{2}.\\
Since the labelling of the particles we used throughout the proof was arbitrary, the result of Theorem \ref{PROPOSS3.5OrdreTriagColla} follows.

\end{proof}

\begin{remar}
The inequality \eqref{EQUATSS3.5ConclusionEta1>} is obtained from the explicit expression \eqref{EQUATSS3.5Eta1_Explicite_}, which is completely exact. Therefore, the conclusion holds in full generality.
\end{remar}
\noindent
Theorem \ref{PROPOSS3.5OrdreTriagColla} has several direct consequences, that we will now describe. First of all, we can deduce directly that the normal components of the relative velocities of the three pairs of particles are all vanishing in the regime of the collapse, because Theorem \ref{PROPOSS3.5OrdreTriagColla} implies that the triangular collapse has finite maximal gaps in the sense of Definition \ref{DEFINSS3.2FonctCompt_Gaps} (because eventually, for any type, the gap between two consecutive collisions of the same type is exactly $3$).

\begin{propo}[Convergence of the normal components of the relative velocities for the triangular collapse]
\label{PROPOSS3.5CTriaConveNorma}
Let $r \in\ ]0,1[$ be a positive real number smaller than $1$, and let us consider a system of three inelastic particles \circled{0}, \circled{1} and \circled{2} evolving according to the $r$-inelastic hard sphere flow \eqref{EQUATSS2.1_Loi_de_Newton_}, \eqref{EQUATSS2.1VitesPost-Colli}, on a time interval $[0,\tau^*[$, and undergoing a triangular inelastic collapse at time $\tau^* > 0$.\\
Then, for all $(i,j) \in\ \left\{(0,1),(0,2),(1,2)\right\}$ we have:
\begin{align}
\eta_{(i,j),n} \xrightarrow[n \rightarrow +\infty]{} 0,
\end{align}
where $\eta_{(i,j),n}$ denotes the normal component of the relative velocity between the particles \circled{i} and \circled{j}, that is $\eta_{(i,j),n} = W_{(i,j)}\cdot \omega_{(i,j)} = (v_j-v_i)\cdot\frac{\left( x_j-x_i \right)}{\vert x_j-x_i \vert}$.
\end{propo}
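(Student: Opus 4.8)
The plan is to reduce the statement to the general summability result of Proposition \ref{PROPOSS3.2SommaCompNormlV}, whose hypotheses are now all met thanks to the collision order established in Theorem \ref{PROPOSS3.5OrdreTriagColla}. The key point is that, for a triangular collapse, each of the three pairs collides infinitely many times (by Definition \ref{DEFINSS3.5CollapseTriangu}) and, crucially, does so with a finite maximal gap in the sense of Definition \ref{DEFINSS3.2FonctCompt_Gaps}. Once these two facts are in place, the convergence to zero follows immediately from the energy-dissipation argument already carried out in Proposition \ref{PROPOSS3.2SommaCompNormlV}.

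First I would invoke Theorem \ref{PROPOSS3.5OrdreTriagColla}: up to relabelling the particles, there exists an index $N_0$ beyond which the sequence of collisions is exactly the infinite repetition of the period \circled{0}-\circled{1}, \circled{0}-\circled{2}, \circled{1}-\circled{2}. As a consequence, for each pair $(i,j) \in \left\{(0,1),(0,2),(1,2)\right\}$ and for every $n$ large enough, two consecutive collisions of type \circled{i}-\circled{j} are separated by exactly three indices, that is $\varphi^{(i,j)}(n+1) - \varphi^{(i,j)}(n) = 3$, where $\varphi^{(i,j)}$ is the counting collision function of Definition \ref{DEFINSS3.2FonctCompt_Gaps}.

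Next I would verify that the maximal gap $K_{(i,j)}$ is finite for each of the three types. Since the differences $\varphi^{(i,j)}(n+1) - \varphi^{(i,j)}(n)$ are eventually equal to $3$, only finitely many of them (namely those associated with collisions of index below $N_0$) can differ from $3$; each of these finitely many differences is a finite positive integer, so the supremum defining $K_{(i,j)}$ is the maximum of a finite set of integers together with the value $3$, hence finite. Therefore the triangular collapse has finite maximal gaps of all three types $(0,1)$, $(0,2)$ and $(1,2)$.

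Finally, with both hypotheses of Proposition \ref{PROPOSS3.2SommaCompNormlV} verified for each pair --- infinitely many collisions on the one hand, a finite maximal gap on the other --- I would apply that proposition successively to the three pairs to conclude that each sequence $\left( \eta_{(i,j),n} \right)_n$ is square summable, and in particular tends to zero as $n \rightarrow +\infty$. I expect no substantial obstacle here: the geometric input (the eventual periodicity of the collisions, with period exactly three) was already secured in Theorem \ref{PROPOSS3.5OrdreTriagColla}, and the dissipation estimate furnishing the summability was proved in Proposition \ref{PROPOSS3.2SommaCompNormlV}. The only mildly delicate point to handle with care is that the maximal gap is defined as a supremum over \emph{all} collisions, so one must explicitly argue that the finite transient preceding the onset of the periodic regime at index $N_0$ does not destroy its finiteness.
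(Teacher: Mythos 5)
Your proposal is correct and follows essentially the same route as the paper: the text immediately preceding Proposition \ref{PROPOSS3.5CTriaConveNorma} derives it exactly by noting that Theorem \ref{PROPOSS3.5OrdreTriagColla} forces the gap between consecutive collisions of any fixed type to equal $3$ eventually, hence finite maximal gaps, and then invoking Proposition \ref{PROPOSS3.2SommaCompNormlV}. Your extra remark about the finite transient before the periodic regime is a small but sound point of care that the paper leaves implicit.
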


\subsection{Formal study of the triangular collapse with the matrix of collisions}

Since we know that the order of the collisions is eventually prescribed for the triangular collapse, we can also compute how the relative velocities evolve asymptotically with the help of the collision matrices.\\
Let us be more explicit: we can assume without loss of generality that the sequence of collisions eventually becomes \circled{0}-\circled{1}, \circled{0}-\circled{2}, \circled{1}-\circled{2},$\dots$. We will write the matrices corresponding to these three collisions, and consider their product.\\
\newline
Starting from the relative velocities $\left( W_1(t_n),W_2(t_n) \right) = \left(W_1,W_2\right) = \left(v_1-v_0,v_2-v_0\right)$, let us assume that the first collision, of type \circled{0}-\circled{1}, takes place at time $t_n$, with angular parameters $\omega_1 = \omega_1(n) = \frac{\left(x_1(t_n)-x_0(t_n)\right)}{\vert x_1(t_n)-x_0(t_n) \vert}$ and $\omega_2 = \omega_2(n) = \frac{\left(x_2(t_n)-x_0(t_n)\right)}{\vert x_2(t_n)-x_0(t_n) \vert}$. The post-collisional relative velocities $\left( W_1(t_{n+1}),W_2(t_{n+1}) \right) = \left( W_1',W_2' \right)$ write:
\begin{align}
\label{EQUATSS3.5VitesColli_0-1_}
\left\{
\begin{array}{rl}
W_1' &= W_1 -(1+r) W_1\cdot\omega_1\omega_1, \\
W_2' &= W_2 - \frac{(1+r)}{2} W_1\cdot\omega_1\omega_1,
\end{array}
\right.
\end{align}
which can be rewritten with the help of a $2d\times 2d$ matrix as:
\begin{align*}
\begin{pmatrix} W_1' \\ W_2' \end{pmatrix} = A_{\mathfrak{a},n} \begin{pmatrix} W_1 \\ W_2 \end{pmatrix}
\end{align*}
with
\begin{align}
\label{EQUATSS3.5DefinMatrix_A_a}
A_{\mathfrak{a},n} = A_{\mathfrak{a},n}(\omega_1) = \begin{pmatrix} I_d - (1+r) \omega_1\otimes\omega_1 & 0 \\ -\frac{(1+r)}{2} \omega_1\otimes\omega_1 & I_d \end{pmatrix},
\end{align}
where $I_d$ denotes the $d\times d$ identity matrix, and where $u \otimes v$ denotes the linear mapping (and by extension, its matrix) defined as $ \left( u\otimes v\right)(x) = \left(u \cdot x\right) v$, for two fixed vectors $u,v \in\ \mathbb{R}^d$.\\
In the same way, at time $t_{n+1}$, the second collision, of type \circled{0}-\circled{2}, takes place with angular parameters $\omega_1' = \omega_1(n+1) = \frac{\left(x_1(t_{n+1})-x_0(t_{n+1})\right)}{\vert x_1(t_{n+1})-x_0(t_{n+1}) \vert}$ and $\omega_2' = \omega_2(n+1) = \frac{\left(x_2(t_{n+1})-x_0(t_{n+1})\right)}{\vert x_2(t_{n+1})-x_0(t_{n+1}) \vert}$. After this collision the post-collisional relative velocities $\left( W_1(t_{n+2}),W_2(t_{n+2}) \right) = \left( W_1'',W_2'' \right)$ write:
\begin{align*}
\begin{pmatrix} W_1'' \\ W_2'' \end{pmatrix} = A_{\mathfrak{b},n+1} \begin{pmatrix} W_1' \\ W_2' \end{pmatrix}
\end{align*}
with
\begin{align}
\label{EQUATSS3.5DefinMatrix_A_b}
A_{\mathfrak{b},n+1} = A_{\mathfrak{b},n+1}(\omega_2') = \begin{pmatrix} I_d & - \frac{(1+r)}{2} \omega_2'\otimes\omega_2' \\ 0 & I_d - (1+r) \omega_2'\otimes\omega_2' \end{pmatrix}.
\end{align}
Finally, the third collision, of type \circled{1}-\circled{2}, takes place at time $t_{n+2}$ with angular parameter $\omega_3'' = \omega_3(n+2) = \frac{\left(x_2(t_{n+2})-x_1(t_{n+2})\right)}{\vert x_2(t_{n+2})-x_1(t_{n+2}) \vert}$. In this case, the post-collisional velocities write:
\begin{align*}
\left\{
\begin{array}{rl}
v_0''' &= v_0,\\
v_1''' &= v_1 -\frac{(1+r)}{2} \left(v_1''-v_2''\right)\cdot\omega_3''\omega_3'', \\
v_2''' &= v_2 + \frac{(1+r)}{2} \left(v_1''-v_2''\right)\cdot\omega_3''\omega_3'',
\end{array}
\right.
\end{align*}
so that in this case
\begin{align*}
\left\{
\begin{array}{rcl}
W_1''' &= v_1'''-v_0''' &= W_1'' - \frac{(1+r)}{2} \left( W_1''-W_2'' \right)\cdot\omega_3'' \omega_3'',\\
W_2''' &= v_2'''-v_0''' &= W_2'' + \frac{(1+r)}{2} \left(W_1''-W_2''\right)\cdot\omega_3'' \omega_3'',
\end{array}
\right.
\end{align*}
which provides the matrix equation:
\begin{align*}
\begin{pmatrix} W_1''' \\ W_2''' \end{pmatrix} = A_{\mathfrak{c},n+2} \begin{pmatrix} W_1'' \\ W_2'' \end{pmatrix}
\end{align*}
with
\begin{align}
\label{EQUATSS3.5DefinMatrix_A_c}
A_{\mathfrak{c},n+2} = A_{\mathfrak{c},n+2}(\omega_3'') = \begin{pmatrix} I_d - \frac{(1+r)}{2}\omega_3''\otimes\omega_3'' & \frac{(1+r)}{2} \omega_3''\otimes\omega_3'' \\ \frac{(1+r)}{2} \omega_3''\otimes\omega_3'' & I_d - \frac{(1+r)}{2} \omega_3''\otimes\omega_3'' \end{pmatrix}.
\end{align}
Therefore, the relative velocities $\left(W_{1,3m+n},W_{2,3m+n}\right)$ are obtained as:
\begin{align}
\begin{pmatrix} W_1(t_{3m+n}) \\ W_2(t_{3m+n}) \end{pmatrix} = \prod_{k=0}^{m-1} \left[A_{\frak{c},3k+2}A_{\frak{b},3k+1}A_{\frak{a},3k}\right] \begin{pmatrix} W_1(t_n) \\ W_2(t_n) \end{pmatrix},
\end{align}
and then it is clear that studying the matrix:
\begin{align}
\mathfrak{A}_k = A_{\frak{c},3k+2}A_{\frak{b},3k+1}A_{\frak{a},3k}
\end{align}
is of central importance to understand the behaviour of the velocities of the system of particles at the regime of collapse. In particular, we will seek the spectrum of this matrix, following the method already adopted, for instance, in \cite{McYo991} and \cite{ZhKa996}.\\
However, there is a difficulty coming from the fact that these matrices depend on the index $k \in\ \mathbb{N}$, through the angular parameters $\omega_1$, $\omega_2'$ and $\omega_3''$. The behaviour of these angular parameters encodes the non linearity of the problem, and therefore it would require a careful study to obtain rigorous results. We will not perform such a study in the present work, but according to Proposition \ref{PROPOSS3.5GeomeTriagColla}, the angular parameters are converging as $k$ goes to infinity, and we have:
\begin{align*}
\omega_1(3k+n) \xrightarrow[k \rightarrow +\infty]{} \omega_{1,\infty},\hspace{3mm} \omega_2(3k+n+1) \xrightarrow[k \rightarrow +\infty]{} \omega_{2,\infty},\hspace{3mm} \omega_3(3k+2+n) \xrightarrow[k \rightarrow +\infty]{} \omega_{3,\infty}
\end{align*}
for some unitary vectors $\omega_{1,\infty}$, $\omega_{2,\infty}$ and $\omega_{3,\infty}$ such that:
\begin{align}
\omega_{1,\infty} + \omega_{3,\infty} = \omega_{2,\infty},
\end{align}
translating the fact that the geometry of the collapsing system of particles is an equilateral triangle. Therefore, the sequence of matrices $\left(\mathfrak{A}_k\right)_k$ converges towards a limiting matrix, that we will now study in full detail.

\begin{defin}[Limiting velocity matrix $\mathfrak{A}_\infty$ of the triangular collapse]
\label{DEFINSS3.5MatriLimitVites}
Let $r \in\ ]0,1[$ be a positive real number smaller than $1$, and let us consider a system of three inelastic particles \circled{0}, \circled{1} and \circled{2} evolving according to the $r$-inelastic hard sphere flow \eqref{EQUATSS2.1_Loi_de_Newton_}, \eqref{EQUATSS2.1VitesPost-Colli}, on a time interval $[0,\tau^*[$, and undergoing a triangular inelastic collapse at time $\tau^* > 0$, and such that the order of the collisions becomes eventually the infinite repetition of the triple \circled{0}-\circled{1}, \circled{0}-\circled{2}, \circled{1}-\circled{2}, with the $(3n + n_0)$-th collisions of type \circled{0}-\circled{1} taking place at $t_{3n+n_0}$, the $(3n + 1 + n_0)$-th collisions of type \circled{0}-\circled{2} at $t_{3n+1+n_0}$ and the $(3n + 2 + n_0)$-th collisions of type \circled{1}-\circled{2} at $t_{3n+2+n_0}$, for some $n_0 \in\ \mathbb{N}$, and all $n \in\ \mathbb{N}$.\\
Let us denote by $\omega_{1,\infty}, \omega_{2,\infty}, \omega_{3,\infty} \in\ \mathbb{S}^{d-1}$ the three respective limits of the sequences $\left(x_1(t_{3n+n_0}) - x_0(t_{3n+n_0})\right)_n$, $\left(x_2(t_{3n+1+n_0}) - x_0(t_{3n+1+n_0})\right)_n$ and $\left(x_2(t_{3n+2+n_0}) - x_1(t_{3n+2+n_0})\right)_n$.\\
We define the matrix $\mathfrak{A}_\infty$ as:
\begin{align}
\mathfrak{A}_\infty = A_\mathfrak{c}(\omega_{3,\infty}) A_\mathfrak{b}(\omega_{2,\infty}) A_\mathfrak{a}(\omega_{1,\infty}),
\end{align}
where $A_\mathfrak{a}$, $A_\mathfrak{b}$ and $A_\mathfrak{c}$ are defined using respectively \eqref{EQUATSS3.5DefinMatrix_A_a}, \eqref{EQUATSS3.5DefinMatrix_A_b} and \eqref{EQUATSS3.5DefinMatrix_A_c}, replacing respectively $\omega_1$, $\omega_2'$ and $\omega_3''$ by $\omega_{1,\infty}$, $\omega_{2,\infty}$ and $\omega_{3,\infty}$.\\
The matrix $\mathfrak{A}_\infty$ will be called the \emph{limiting velocity matrix} of the system.
\end{defin}

\noindent
There is already a first property of the limiting velocity matrix that allows to reduce drastically the complexity of the study of the spectrum.

\begin{propo}[Restriction of the limiting velocity matrix to the plane $\text{span}\left(\omega_{1,\infty},\omega_{2,\infty}\right)$]
Let us consider the sub-vector space $E$ of $\mathbb{R}^d\times\mathbb{R}^d$ defined as:
\begin{align}
E = \left\{ (x,y) \in\ \mathbb{R}^d\times\mathbb{R}^d\ /\ x\cdot \omega_{1,\infty} = x\cdot\omega_{2,\infty} = 0,y\cdot \omega_{1,\infty} = y\cdot\omega_{2,\infty} = 0 \right\},
\end{align}
Then, the matrix $\mathfrak{A}_\infty$ restricted to $E$ is the identity matrix.
\end{propo}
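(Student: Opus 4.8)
The plan is to exploit the fact that each of the three collision matrices $A_\mathfrak{a}$, $A_\mathfrak{b}$, $A_\mathfrak{c}$ differs from the $2d\times 2d$ identity only by rank-one blocks of the form $\omega\otimes\omega$, which act as $(\omega\otimes\omega)(z) = (\omega\cdot z)\,\omega$ and therefore \emph{annihilate} any vector orthogonal to $\omega$. The essential geometric input is the equilateral relation $\omega_{3,\infty} = \omega_{2,\infty} - \omega_{1,\infty}$, coming from Proposition \ref{PROPOSS3.5GeomeTriagColla}: it guarantees that $\omega_{3,\infty} \in \text{span}(\omega_{1,\infty},\omega_{2,\infty})$, so that any vector orthogonal to both $\omega_{1,\infty}$ and $\omega_{2,\infty}$ is automatically orthogonal to $\omega_{3,\infty}$ as well. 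Consequently, for every pair $(x,y) \in E$, each of the three directions $\omega_{1,\infty}$, $\omega_{2,\infty}$, $\omega_{3,\infty}$ is orthogonal to both $x$ and $y$.

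The key steps I would carry out are the following. First I would record the orthogonality consequence above: if $(x,y)\in E$ then $\omega_{k,\infty}\cdot x = \omega_{k,\infty}\cdot y = 0$ for $k=1,2,3$. Then I would apply the three factors of $\mathfrak{A}_\infty$ successively to $(x,y)$. Applying $A_\mathfrak{a}(\omega_{1,\infty})$, the upper-left block contributes $x - (1+r)(\omega_{1,\infty}\cdot x)\,\omega_{1,\infty} = x$ and the lower-left block contributes $-\tfrac{(1+r)}{2}(\omega_{1,\infty}\cdot x)\,\omega_{1,\infty} = 0$, so $A_\mathfrak{a}(\omega_{1,\infty})(x,y) = (x,y)$. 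The same elementary computation, using $\omega_{2,\infty}\cdot x = \omega_{2,\infty}\cdot y = 0$, gives $A_\mathfrak{b}(\omega_{2,\infty})(x,y) = (x,y)$; and using $\omega_{3,\infty}\cdot x = \omega_{3,\infty}\cdot y = 0$, gives $A_\mathfrak{c}(\omega_{3,\infty})(x,y) = (x,y)$. Since each factor fixes $(x,y)$ (and in particular leaves it inside $E$, so the next factor may be applied), their composition fixes $(x,y)$ as well, i.e. $\mathfrak{A}_\infty(x,y) = (x,y)$ for all $(x,y)\in E$. This is exactly the claim that $\mathfrak{A}_\infty|_E = \mathrm{Id}$.

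I do not anticipate a genuine obstacle here: the whole argument is a direct verification, and the only point requiring care is the reduction via $\omega_{3,\infty}=\omega_{2,\infty}-\omega_{1,\infty}$, which ensures that the third factor $A_\mathfrak{c}$ also acts trivially even though $E$ was defined using only $\omega_{1,\infty}$ and $\omega_{2,\infty}$. Without this relation one would need to include orthogonality to $\omega_{3,\infty}$ in the definition of $E$; the equilateral geometry is precisely what makes the two-constraint definition sufficient. The mild subtlety worth stating explicitly is that each factor maps $E$ into itself (immediate, since each factor is the identity on $E$), which legitimizes composing them one after another on the same vector.
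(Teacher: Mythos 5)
Your verification is correct and is exactly the direct computation the paper leaves implicit (the proposition is stated without proof there): each factor $A_\mathfrak{a}$, $A_\mathfrak{b}$, $A_\mathfrak{c}$ deviates from the identity only through blocks $\omega\otimes\omega$, which vanish on vectors orthogonal to $\omega$, and the relation $\omega_{3,\infty}=\omega_{2,\infty}-\omega_{1,\infty}$ from the equilateral limiting geometry is precisely the point that makes orthogonality to $\omega_{1,\infty}$ and $\omega_{2,\infty}$ suffice for the third factor. Your remark that each factor fixes $E$ pointwise, so the composition may be applied factor by factor, closes the argument.
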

\noindent
We can therefore study the matrix $\mathfrak{A}_\infty$, restricted to the following vector space:
\begin{align}
\label{EQUATSS3.5DefinSs-Ev__F__}
F = \text{span}\left(\omega_{1,\infty},\omega_{2,\infty}\right) \times \text{span}\left(\omega_{1,\infty},\omega_{2,\infty}\right) \subset \mathbb{R}^d\times\mathbb{R}^d,
\end{align}
which is a four-dimensional space.

\begin{theor}[Spectrum of the limiting velocity matrix $\overline{\mathfrak{A}}_\infty$]
\label{THEORSS3.5SpectMatricVitLm}
Let us consider the restriction $\overline{\mathfrak{A}}_\infty$ of the limiting velocity matrix $\mathfrak{A}_\infty$ to the space $F \subset \mathbb{R}^d\times\mathbb{R}^d$, defined in \eqref{EQUATSS3.5DefinSs-Ev__F__}.\\
The image of the space $F$ by the matrix $\mathfrak{A}_\infty$ is contained into $F$ itself. $\overline{\mathfrak{A}}_\infty$ is therefore a $4\times 4$ matrix, that writes
\begin{align}
\label{EQUATSS3.5MatrixAlim_ResF}
\overline{\mathfrak{A}}_\infty = \begin{pmatrix}
\frac{(-r^3+5r^2-59r-1)}{64} & \frac{\sqrt{3}(r+1)}{8} & \frac{(r^2-4r-5)}{16} & - \frac{\sqrt{3}(r+1)}{8} \\
\frac{\sqrt{3}(r^3+3r^2+11r+9)}{64} & \frac{(5-3r)}{8} & -\frac{\sqrt{3}(r^2+4r+3)}{16} & \frac{(3r+3)}{8} \\
\frac{(-r^3+17r^2+13r-5)}{64} & \frac{\sqrt{3}(r+1)}{8} & \frac{(r^2-16r-1)}{16} & -\frac{\sqrt{3}(r+1)}{8} \\
\frac{\sqrt{3}(-r^3+r^2+13r+11)}{64} & \frac{(3r+3)}{8} & \frac{\sqrt{3}(r^2-1)}{16} & \frac{(5-3r)}{8}
\end{pmatrix}
\end{align}
in the basis $(\omega_{1,\infty},0),(\omega_{1,\infty}^\perp,0),(0,\omega_{2,\infty}),(0,\omega_{2,\infty}^\perp)$, where $\omega_{1,\infty}^\perp$ is the unitary vector orthogonal to $\omega_{1,\infty}$ in the plane $\text{span}\left(\omega_{1,\infty},\omega_{2,\infty}\right)$ and such that $\omega_{1,\infty}^\perp \cdot \omega_{2,\infty} > 0$, and $\omega_{2,\infty}^\perp$ is the unitary vector orthogonal to $\omega_{2,\infty}$ in the plane $\text{span}\left(\omega_{1,\infty},\omega_{2,\infty}\right)$ and such that $\omega_{2,\infty}^\perp \cdot \omega_{1,\infty} > 0$.\\
The polynomial characteristic $\chi_{\overline{\mathfrak{A}}_\infty}(\lambda)$ of the matrix $\overline{\mathfrak{A}}_\infty$ writes:
\begin{align}
\label{EQUATSS3.5Alim_PolynCarac}
\chi_{\overline{\mathfrak{A}}_\infty}(\lambda) = (\lambda-1) Q_{\overline{\mathfrak{A}}_\infty}(\lambda),
\end{align}
where:
\begin{align}
\label{EQUATSS3.5Alim_DetaiPCarQ}
Q_{\overline{\mathfrak{A}}_\infty}(\lambda) = \lambda^3 + \frac{(r^3-9r^2+171r-11)}{64} \lambda^2 + \frac{-11r^3 + 171 r^2 -9r +1}{64} \lambda + r^3.
\end{align}
For any $r \in\ ]0,1[$, the polynomial $Q_{\overline{\mathfrak{A}}_\infty}$ has a single real root $\lambda_0$ which satisfies:
\begin{align}
\label{EQUATSS3.5Alim_ContrRacRL} 
-1 < -r < \lambda_0 < -r^3 < 0.
\end{align}
The polynomial $Q_{\overline{\mathfrak{A}}_\infty}$ has in addition two complex conjugated roots $\lambda_i^\pm$ which satisfy:
\begin{align}
\label{EQUATSS3.5Alim_ContrRacCx} 
0 < \vert \lambda_0 \vert < \vert \lambda_i^\pm \vert < 1. 
\end{align}
In particular, the spectrum of the restricted limiting velocity matrix $\overline{\mathfrak{A}}_\infty$ is composed with the four eigenvalues:
\begin{align}
1,\lambda_0,\lambda_i^+,\lambda_i^-,
\end{align}
all of geometric multiplicity equal to $1$, and all except $1$ contained in the open unit disk of the complex plane. In addition, the eigenspace $E_1$ associated to the eigenvalue $\lambda = 1$ is:
\begin{align}
\label{EQUATSS3.5Alim_EspPp_L=1_} 
\text{span}\left(0,1,0,1\right)
\end{align}
(written in the basis $(\omega_{1,\infty},0),(\omega_{1,\infty}^\perp,0),(0,\omega_{2,\infty}),(0,\omega_{2,\infty}^\perp)$).
\end{theor}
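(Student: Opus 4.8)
The plan is to reduce everything to the explicit $4\times4$ matrix $\overline{\mathfrak{A}}_\infty$ and to exploit the rigid equilateral geometry furnished by Proposition \ref{PROPOSS3.5GeomeTriagColla}. First I would record that, since $\omega_{3,\infty}=\omega_{2,\infty}-\omega_{1,\infty}$, the three contact directions all lie in the plane $P=\mathrm{span}(\omega_{1,\infty},\omega_{2,\infty})$, and each factor $A_{\mathfrak a},A_{\mathfrak b},A_{\mathfrak c}$ alters the velocities only by vectors proportional to one of these directions; hence $F=P\times P$ is invariant and $\overline{\mathfrak A}_\infty$ is a genuine $4\times4$ matrix (the behaviour on the complement $E$ being the identity by the preceding proposition). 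To obtain the entries in \eqref{EQUATSS3.5MatrixAlim_ResF}, the equilateral configuration supplies every scalar product I need, starting from $\omega_{1,\infty}\cdot\omega_{2,\infty}=1/2$ and the resulting pairings of $\omega_{1,\infty},\omega_{2,\infty},\omega_{3,\infty}$ with $\omega_{1,\infty}^\perp,\omega_{2,\infty}^\perp$. I would express each rank-one operator $\omega\otimes\omega$ in the basis $(\omega_{1,\infty},0),(\omega_{1,\infty}^\perp,0),(0,\omega_{2,\infty}),(0,\omega_{2,\infty}^\perp)$ and multiply the three factors; this is direct but lengthy, and I would not grind through it.

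With the matrix in hand, the eigenvalue $1$ is immediate: summing the second and fourth columns of \eqref{EQUATSS3.5MatrixAlim_ResF} returns exactly $(0,1,0,1)^T$, so this vector is fixed and, once distinctness of the spectrum is established, spans $E_1$, giving \eqref{EQUATSS3.5Alim_EspPp_L=1_}. Expanding $\det(\lambda I-\overline{\mathfrak A}_\infty)$ and factoring off $(\lambda-1)$ then yields \eqref{EQUATSS3.5Alim_PolynCarac} with the cubic $Q_{\overline{\mathfrak A}_\infty}$ of \eqref{EQUATSS3.5Alim_DetaiPCarQ}. As an independent check I would note that the product of the four eigenvalues equals $\det\overline{\mathfrak A}_\infty=(-r)^3=-r^3$: each collision factor restricts to $P\times P$ with determinant $-r$, because $I-(1+r)\,\omega\otimes\omega$ has eigenvalues $-r$ and $1$ on $P$. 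This agrees with $-Q_{\overline{\mathfrak A}_\infty}(0)=-r^3$.

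The substantive part is the spectral localisation for $Q:=Q_{\overline{\mathfrak A}_\infty}$. I would first show $Q$ is strictly increasing, hence has exactly one real root: its derivative $3\lambda^2+2a\lambda+b$ (with $a=\tfrac{A}{64}$, $b=\tfrac{B}{64}$, $A=r^3-9r^2+171r-11$, $B=-11r^3+171r^2-9r+1$) has negative discriminant iff $A^2<192B$; expanding, discarding the terms that are manifestly nonpositive on $[0,1]$, and using $r^6,r^4\le r^2$ collapses this to an evidently negative quadratic bound. Next I locate the real root by the two evaluations, which factor cleanly as $Q(-r)=\tfrac{r(r-1)(r+1)^3}{64}<0$ and $Q(-r^3)=\tfrac{r^3}{64}\bigl[-9(r^2-1)(7r^4+r^3-12r^2+r+7)\bigr]>0$, the quartic being positive via the substitution $u=r+1/r\ge 2$; therefore the unique real root satisfies $\lambda_0\in(-r,-r^3)$, which is \eqref{EQUATSS3.5Alim_ContrRacRL}. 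Finally the modulus bounds \eqref{EQUATSS3.5Alim_ContrRacCx} follow from Vieta alone: $|\lambda_i^\pm|^2=\lambda_i^+\lambda_i^-=-r^3/\lambda_0=r^3/|\lambda_0|$, and $|\lambda_0|<r$ together with $|\lambda_0|>r^3$ give at once $|\lambda_0|^2<|\lambda_i^\pm|^2<1$.

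The four eigenvalues $1,\lambda_0,\lambda_i^+,\lambda_i^-$ are then pairwise distinct — one positive real, one negative real, and a genuine complex conjugate pair — so $\overline{\mathfrak A}_\infty$ is diagonalisable and each eigenspace is one-dimensional, confirming the geometric multiplicities and that all eigenvalues but $1$ sit in the open unit disk. I expect the main obstacle to be controlling the non-trivial roots without solving the cubic, namely showing the conjugate pair has modulus strictly between $|\lambda_0|$ and $1$; the Vieta computation resolves this cleanly, so the only genuine work reduces to the monotonicity estimate $A^2<192B$ and the two factorisations of $Q(-r)$ and $Q(-r^3)$.
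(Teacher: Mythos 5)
Your proposal is correct and follows essentially the same route as the paper: explicit computation of the $4\times4$ matrix and its characteristic polynomial, monotonicity of $Q_{\overline{\mathfrak{A}}_\infty}$ via positivity of its derivative (your discriminant condition $A^2<192B$ is literally the paper's condition $Q'_{\overline{\mathfrak{A}}_\infty}(\lambda_c)>0$, and your bound holds: $192B-A^2=-r^6+18r^5-423r^4+988r^3+3393r^2+2034r+71>0$ on $[0,1]$), the sign evaluations $Q(-r)<0$ and $Q(-r^3)>0$ with the same factorisations (your quartic $7r^4+r^3-12r^2+r+7$ is the paper's $(r+1)^2(7r^2-13r+7)$), and Vi\`ete's formula for the complex pair. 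The only additions are cosmetic — the determinant sanity check $\det\overline{\mathfrak{A}}_\infty=-r^3$ and the column-sum identification of the fixed vector $(0,1,0,1)$ — so this matches the paper's argument.
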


\begin{proof}
Let us start with proving \eqref{EQUATSS3.5MatrixAlim_ResF}. In the basis $(\omega_{1,\infty},0),(\omega_{1,\infty}^\perp,0),(0,\omega_{2,\infty}),(0,\omega_{2,\infty}^\perp)$ of the vector space $F$, the matrices $A_\mathfrak{a}(\omega_{1,\infty})$, $A_\mathfrak{b}(\omega_{2,\infty})$ and $A_\mathfrak{c}(\omega_{3,\infty})$  can be written as:
\begin{align}
A_\mathfrak{a}(\omega_{1,\infty}) = \begin{pmatrix}
-r & 0 & 0 & 0 \\
0 & 1 & 0 & 0 \\
-\frac{(1+r)}{4} & 0 & 1 & 0 \\
\frac{\sqrt{3}(1+r)}{4} & 0 & 0 & 1
\end{pmatrix},\hspace{5mm}
A_\mathfrak{b}(\omega_{2,\infty}) = \begin{pmatrix}
1 & 0 & -\frac{(1+r)}{4} & 0 \\
0 & 1 & -\frac{\sqrt{3}(1+r)}{4} & 0 \\
0 & 0 & -r & 0 \\
0 & 0 & 0 & 1
\end{pmatrix}
\end{align}
and
\begin{align}
A_\mathfrak{a}(\omega_{3,\infty}) = \begin{pmatrix}
\frac{(7-r)}{8} & \frac{\sqrt{3}(1+r)}{8} & -\frac{(1+r)}{8} & -\frac{\sqrt{3}(1+r)}{8} \\
\frac{\sqrt{3}(1+r)}{8} & \frac{(5-3r)}{8} & \frac{\sqrt{3}(1+r)}{8} & \frac{3(1+r)}{8} \\
-\frac{1+r}{8} & \frac{\sqrt{3}(1+r)}{8} & \frac{(7-r)}{8} & -\frac{\sqrt{3}(1+r)}{8} \\
-\frac{\sqrt{3}(1+r)}{8} & \frac{3(1+r)}{8} & -\frac{\sqrt{3}(1+r)}{8} & \frac{(5-3r)}{8}
\end{pmatrix}.
\end{align}
We can then perform directly the computation of the product $A_\mathfrak{c}(\omega_{3,\infty}) A_\mathfrak{b}(\omega_{2,\infty}) A_\mathfrak{a}(\omega_{1,\infty})$, which provides \eqref{EQUATSS3.5MatrixAlim_ResF}.\\
\newline
The computation of the characteristic polynomial of $\overline{\mathfrak{A}}_\infty$ is presented in the Appendix (see Appendix \ref{APPENSS___PolynChara_Alim}).\\
\newline
The fact that $\lambda = 1$ is an eigenvalue is clear, either as a consequence of the characteristic polynomial (that one can obtain by a brute force method), of the form $(\lambda-1)Q_{\overline{\mathfrak{A}}_\infty}$, or by direct inspection, by computing the image of $\left(0,1,0,1\right)$ by the matrix, which enables also to deduce that this vector is contained in the eigenspace associated to $\lambda = 1$.\\
\newline
Let us now turn to the study of the polynomial of third degree $Q_{\overline{\mathfrak{A}}_\infty}$.\\
First, the second derivative of $Q_{\overline{\mathfrak{A}}_\infty}$ is zero only for $\lambda = \lambda_c$, where
\begin{align}
\lambda_c = \frac{-r^3+9r^2-171r+11}{6\cdot32}\cdotp
\end{align}
Since the graph of $\lambda \mapsto Q_{\overline{\mathfrak{A}}_\infty}(\lambda)$ is a cubic with limits $\pm\infty$ when $\lambda \rightarrow \pm\infty$, the first derivative $Q'_{\overline{\mathfrak{A}}_\infty}$ of $Q_{\overline{\mathfrak{A}}_\infty}$ is a convex parabola, that reaches its minimum at $\lambda_c$. The minimum of the derivative is:
\begin{align}
Q'_{\overline{\mathfrak{A}}_\infty}(\lambda_c) &= - \frac{r^6}{12288} + \frac{3r^5}{2048} - \frac{141r^4}{4096} + \frac{247r^3}{3072} + \frac{1131r^2}{4096} + \frac{339r}{2048} + \frac{71}{12288} \nonumber\\
&= - \frac{(r+1)^2}{12288} \left[ (r+1)^4 - 24(r+1)^3 + 528(r+1)^2 - 2880(r+1) +2304 \right].
\end{align}
It is then possible to show that the function $q:s \mapsto s^4 -24s^3 +528s^2 -2880s +2304$ is behaving as a convex parabola, that is, with limits $+\infty$ as $s$ goes to $\pm\infty$, and such that the function is first strictly decreasing on an interval of the form $]-\infty,s_0]$, and then becomes strictly increasing on $[s_0,+\infty[$. Indeed, its second derivative is the quadratic function:
\begin{align}
q'':s \mapsto 12s^2 - 6\cdot24 s + 2\cdot 528,
\end{align}
of discriminant:
\begin{align}
\Delta = (6\cdot 24)^2 -4\cdot12\cdot2\cdot528 = -29952 < 0.
\end{align}
But now since:
\begin{align}
q(1) = 1-24+528-2880+2304 = -71 \hspace{3mm} \text{and} \hspace{3mm} q(2) = 2^4 -24\cdot 8 + 528\cdot 4 -2880\cdot 2 +2304 = -1520,
\end{align}
we can deduce that
\begin{align}
q(s) < 0 \ \forall s \in\ [1,2],
\end{align}
or again
\begin{align}
(r+1)^4 - 24(r+1)^3 + 528(r+1)^2 - 2880(r+1) + 2304 < 0 \ \forall r \in\ [0,1],
\end{align}
and so we deduce that the minimum $Q'_{\overline{\mathfrak{A}}_\infty}(\lambda_c)$ of the derivative of $Q_{\overline{\mathfrak{A}}_\infty}$ is strictly positive, and therefore the cubic $\lambda \mapsto Q_{\overline{\mathfrak{A}}_\infty}(\lambda)$ is always strictly increasing, and so it has a unique real root. The proof of the uniqueness of $\lambda_0$ is complete. We deduce in addition that there exist also two complex conjugated roots of $Q_{\overline{\mathfrak{A}}_\infty}$.\\
\newline
Let us now prove the inequalities \eqref{EQUATSS3.5Alim_ContrRacRL} on $\lambda_0$.\\
We start with the proof that $-r < \lambda_0$. This comes from the fact that
\begin{align}
Q_{\overline{\mathfrak{A}}_\infty}(-r) = r\frac{(r^4+2r^3-2r-1)}{64}\cdotp
\end{align}
Observing that $r^4+2r^3-2r-1$ vanishes for $r=1$ we can factorize to get:
\begin{align}
Q_{\overline{\mathfrak{A}}_\infty}(-r) = \frac{r}{64}(r-1)(r+1)^3 < 0 \ \forall r \in\ ]0,1[.
\end{align}
Since $Q_{\overline{\mathfrak{A}}_\infty}$ is strictly increasing we deduce that $-r < \lambda_0$.\\
Let us now turn to the estimate $\lambda_0 < -r^3$ of \eqref{EQUATSS3.5Alim_ContrRacRL}. We have:
\begin{align}
Q_{\overline{\mathfrak{A}}_\infty}(-r^3) = \frac{r^3}{64}\left(-63r^6 -9r^5 +171r^4 -171r^2 +9r +63\right).
\end{align}
Observing again that $-63r^6 -9r^5 +171r^4 -171r^2 +9r +63$ is vanishing for $r=1$, as well as for $r=-1$, we can factorize to get:
\begin{align}
Q_{\overline{\mathfrak{A}}_\infty}(-r^3) = -\frac{9r^3}{64}(r-1)(r+1)^3(7r^2-13r+7)
\end{align}
where the quadratic polynomial is always strictly positive. Therefore $Q_{\overline{\mathfrak{A}}_\infty}(-r^3) > 0$, which proves $\lambda_0 < -r^3$, and so the proof of \eqref{EQUATSS3.5Alim_ContrRacRL} is complete.\\
\newline
Let us now turn to \eqref{EQUATSS3.5Alim_ContrRacCx}. This estimate is obtained using the Vi\`{e}te's formula linking the roots of a polynomial with its coefficients. In particular, denoting by $\lambda_i^\pm$ the two complex conjugated roots of $Q_{\overline{\mathfrak{A}}_\infty}$, we have:
\begin{align}
(-\lambda_0) \cdot \lambda_i^+ \cdot \lambda_i^- = \vert \lambda_0 \vert \cdot \vert \lambda_i^\pm \vert^2 = r^3,
\end{align}
which provides, combined with \eqref{EQUATSS3.5Alim_ContrRacRL}, first:
\begin{align}
\vert \lambda_i^\pm \vert^2 = \frac{r^3}{\vert \lambda_0 \vert} < 1,
\end{align}
and second:
\begin{align}
\frac{\vert \lambda_i^\pm \vert^2}{\vert \lambda_0 \vert^2} = \frac{r^3}{\vert \lambda_0 \vert^3} > 1,
\end{align}
so that \eqref{EQUATSS3.5Alim_ContrRacCx} is proved.\\
\newline
Finally, since the $4\times4$ matrix $\overline{\mathfrak{A}}_\infty$ has four distinct eigenvalues, each of the associated eigenspaces has dimension $1$, which enables to obtain \eqref{EQUATSS3.5Alim_EspPp_L=1_}, determining in particular completely the eigenspace associated to $\lambda = 1$.
The proof of Theorem \ref{THEORSS3.5SpectMatricVitLm} is complete.
\end{proof}

\begin{remar}
It is interesting to note the similarity between the polynomial $Q_{\overline{\mathfrak{A}}_\infty}$, described in \eqref{EQUATSS3.5Alim_DetaiPCarQ}, and the different polynomials obtained in \cite{CDKK999}, where the authors study in much detail the inelastic collapse of a one-dimensional system of four inelastic particles. In particular, as in \cite{CDKK999}, note the intriguing symmetry linking the coefficients $a_1(r)$, $a_2(r)$ of first and second degrees of $Q_{\overline{\mathfrak{A}}_\infty}$, that are themselves polynomials in $r$, and such that
\begin{align*}
a_1(1/r) = \frac{-11+171r-9r^2+r^3}{64r^3} = \frac{a_2(r)}{r^3} \cdotp
\end{align*}
\end{remar}
\noindent
We assumed that we have a system of particles experiencing a triangular collapse, such that the order of the collisions becomes eventually the infinite repetition of \circled{0}-\circled{1}, \circled{0}-\circled{2}, \circled{1}-\circled{2}. Let us denote, as above, by $n_0$ the index of the collision such that this periodic sequence starts. In order to have the existence of such a triangular collapse, we need to have a configuration of the system, just before time $t_{n_0}$, such that the particles \circled{0} and \circled{1} are in a pre-collisonal configuration. Then, at time $t_{n_0}$, when the collision \circled{0}-\circled{1}, the particles \circled{0} and \circled{2} have now to be in a pre-collisional configuration. These two conditions are imposing:
\begin{align*}
\overline{\eta}_1 &< 0,\\
\overline{\eta}_2 &< \frac{(1+r)}{4} \eta_1,
\end{align*}
where $\overline{\eta}_1$ and $\overline{\eta}_2$ are the normal components of the relative velocities between \circled{0} and \circled{1}, and \circled{0} and \circled{2}, at time $t_{n_0}^-$.\\
Therefore, these two conditions define a cone $\mathcal{C}^1$ in the four-dimensional space $F$ on which the matrix $\overline{\mathfrak{A}}_\infty$ is acting, and since we need to be able to iterate this matrix and still recover a configuration of particles that remains in a pre-collisional configuration for the next triple of collisions, the image of the initial datum $(\eta_1,w_1,\eta_2,w_2)$ (written in the basis $\left(\omega_{1,\infty},0),(\omega_{1,\infty}^\perp,0),(0,\omega_{2,\infty}),(0,\omega_{2,\infty}^\perp)\right)$ by $\overline{\mathfrak{A}}_\infty$, as well as all its iterates, have to remain in this cone. This idea is already in \cite{McYo991} and \cite{ZhKa996}, where other limiting velocity matrices, associated to the linear collapse, are considered.\\
\newline
In our case, we can also provide another condition, which restricts even more the cone of admissible velocities.\\
Indeed, in the basis $(\omega_{1,\infty},0),(\omega_{1,\infty}^\perp,0),(0,\omega_{2,\infty}),(0,\omega_{2,\infty}^\perp)$, let us denote by $(x,y,z,t)$ an initial datum written in coordinates, and $(x',y',z',t')$ its image by the matrix $\overline{\mathfrak{A}}_\infty$. Assuming that $(x,y,z,t)$ is in the cone $\mathcal{C}_1$, for $r$ small enough we have
\begin{align}
\frac{r^2-4r-5}{16} < 0 \hspace{3mm} \text{and} \hspace{3mm} \frac{r^2-16r-1}{16} < 0
\end{align}
and so
\begin{align}
x' &> \frac{(-r^3+5r^2-59r-1)}{64}x + \frac{(1+r)}{64}(r^2-4r-5)x + \frac{\sqrt{3}}{8}(r+1)(y-t)\\
&> \frac{2r^2-68r-6}{64}x + \frac{\sqrt{3}}{8}(r+1)(y-t),
\end{align}
and in the same way
\begin{align}
\label{EQUATSS3.5Minor_z'__Triag}
z' &> \frac{(-r^3+17r^2+13r-5)}{64}x + \sqrt{3} \frac{(r+1)}{8}(y-t) + \frac{(1+r)}{64}(r^2-16r-1)x\\
&> \frac{2r^2-4r-6}{64}x + \sqrt{3}\frac{(r+1)}{8}(y-t).
\end{align}
In particular, for $r$ small enough, \eqref{EQUATSS3.5Minor_z'__Triag} imposes \begin{align}
y-t < 0
\end{align}
in order to have $z' < 0$. This defines then a restricted cone $\mathcal{C}_2$ as:
\begin{align}
(x,y,z,t) \in \mathcal{C}_2 \Longleftrightarrow
\left\{
\begin{array}{rl}
x &< 0,\\
z &< \frac{(1+r)}{4} x,\\
y &< t.
\end{array}
\right.
\end{align}
Then, we see that the initial datum $(x,y,z,t)$, as well as all of its iterates under multiplication by $\overline{\mathfrak{A}}_\infty$, have to remain in this cone $\mathcal{C}_2$.\\
We can now conclude formally about the stability of the triangular collapse: any sequence of iterations $\left(x_n,y_n,z_n,t_n\right)_n = \left( \overline{\mathfrak{A}}_\infty(x_0,y_0,z_0,t_0)\right)_n$ converges to a point $X_\infty \in \text{span}(0,1,0,1)$, exponentially fast. In addition, among the eigenvalues in the unit disk, the complex conjugated $\lambda_i^\pm$ are dominating, these complex eigenvalues are inducing a rotation-dilatation is some plane, so that eventually the iterations are spiraling around $X_\infty$. It is possible to show that the plan in which the spiral is taking place has to be transverse to the boundary of the cone $\mathcal{C}_2$, so that, considering \emph{only the limiting matrix}, that is, considering only the \emph{linearization} of the full dynamics of the velocities of the triangular collapse, there is no chance to find a sequence of iterations $\left(x_n,y_n,z_n,t_n\right)_n$ that starts and remains in the restricted cone $\mathcal{C}_2$.\\
\newline
These last considerations suggest strongly that the triangular collapse, if it exists, should be extremely unstable. In particular, Theorem \ref{THEORSS3.5SpectMatricVitLm} implies that if the triangular collapse exists, it has to be associated to sequences of iterations $\left(x_n,y_n,z_n,t_n\right)_n$ constrained to evolve in a fixed manifold, that has to be of dimension $1$.\\
However, the whole discussion of this last paragraph can only be considered as formal. To prove or disprove the last statements concerning the iterations of the relative velocities, one would need to study the non-linearity of the complete dynamics, that is, to take into account the effect of the variation of the angular parameters, that induces perturbations of the collision matrices $A_\mathfrak{a}$, $A_\mathfrak{b}$ and $A_\mathfrak{c}$. Actually, the question if such a triangular collapse actually exists is not tackled neither: we only provided necessary conditions that such a triangular collapse has to fulfil, if ever it exists. The study of the full dynamical system should also allow to conclude about the existence of such a collapse, and most probably one should be able to provide an explicit construction of self-similar data corresponding to a system of inelastic particles performing a triangular collapse.

\section{Conclusion and perspective}

In this work we revisited the classical problem of the collapse of inelastic particles. We proved new general results concerning the behaviour of systems of three inelastic particles in the collapsing regime, in particular about the converging quantities and the possible order of collisions between the particles. These new results on converging quantities enabled us to produce a rigorous study of the system of particles in the case of the nearly-linear collapse. As an application, we constructed a set of positive measure in the phase space, composed with initial configurations of three particles, leading all to the nearly-linear collapse, in a stable way. We discovered also another possible order of collisions, leading to the hypothetical triangular collapse, and we studied completely the limiting velocity matrix of such a collapse. In addition, we proved that there is no other possible order of collisions for a system of three collapsing particles.\\ 
\newline
However, an important number of open questions remain. In the companion paper \cite{DoVeArt} we derived formally the evolution equations of the leading order terms of the full dynamical system, for which we proved the stability of the ZK-regime in a non trivial region, and we studied numerically the orbits of the system of the leading order terms. The behaviour of these orbits allowed us to state conjectures, and we were able to prove these conjectures in a particular regime. Proving these conjectures in the general case would achieve the characterization of the ZK-regime. We hope to solve these conjectures in the future, and prove that way an Alexander's theorem for systems of three particles.\\
On a more general ground, the results concerning the converging quantities and orders of collisions we obtained for the system of three inelastic particles can most probably be extended to a larger number of particles. It would be for example interesting to study the possible geometries at the time of the collapse, and the eventual order of collisions, for systems composed with four or more particles. In turn, we hope that such results would allow to obtain an Alexander's theorem for systems composed with an arbitrary number of particles.\\
Concerning the model, we considered only \emph{strictly} positive restitution coefficients $r \in\ ]0,1[$. The limit $r=0$ seems to be completely unstudied so far. There is no mathematical result concerning inelastic collapse in this case. It might be possible to obtain an Alexander theorem more easily for this model. Nevertheless, the case $r=0$ is known to be a singular limit of the cases $r>0$.\\
Finally, concerning the triangular collapse, we studied in some detail necessary conditions for such a collapse to happen. It would be interesting to have an explicit description of a system of three inelastic particles realizing a triangular collapse, if it ever can be realized, for it would be the first purely non-linear (that is, which cannot be realized as a perturbation of a one-dimensional collapsing figure) collapsing figure of inelastic particles ever observed.

\begin{appendices}

\section{Characteristic polynomial of the matrix $\overline{\mathfrak{A}}_\infty$}
\label{APPENSS___PolynChara_Alim}

This section is devoted to describe how to obtain the characteristic polynomial of the limiting velocity matrix $\overline{\mathfrak{A}}_\infty$ (\eqref{EQUATSS3.5Alim_PolynCarac} and \eqref{EQUATSS3.5Alim_DetaiPCarQ} page \pageref{EQUATSS3.5Alim_PolynCarac}), associated to the triangular collapse, introduced in Definition \ref{DEFINSS3.5MatriLimitVites}, and written in coordinates in \eqref{EQUATSS3.5MatrixAlim_ResF} page \pageref{EQUATSS3.5MatrixAlim_ResF}.\\
\newline
Starting from:
\begin{align*}
\overline{\mathfrak{A}}_\infty = \begin{pmatrix}
\frac{(-r^3+5r^2-59r-1)}{64} & \frac{\sqrt{3}(r+1)}{8} & \frac{(r^2-4r-5)}{16} & - \frac{\sqrt{3}(r+1)}{8} \\
\frac{\sqrt{3}(r^3+3r^2+11r+9)}{64} & \frac{(5-3r)}{8} & -\frac{\sqrt{3}(r^2+4r+3)}{16} & \frac{(3r+3)}{8} \\
\frac{(-r^3+17r^2+13r-5)}{64} & \frac{\sqrt{3}(r+1)}{8} & \frac{(r^2-16r-1)}{16} & -\frac{\sqrt{3}(r+1)}{8} \\
\frac{\sqrt{3}(-r^3+r^2+13r+11)}{64} & \frac{(3r+3)}{8} & \frac{\sqrt{3}(r^2-1)}{16} & \frac{(5-3r)}{8}
\end{pmatrix}
\end{align*}
we compute the determinant of such a matrix with the usual manipulations on the rows and the columns, using the multilinearity of the determinant. Performing first:
\begin{align*}
L_1' = L_1-L_3 \hspace{5mm} \text{and} \hspace{5mm} L_2' = L_2+L_4,
\end{align*}
we obtain
\begin{align*}
\det\left(\lambda I_4 - \overline{\mathfrak{A}}_\infty\right) = \begin{vmatrix}
\lambda + \frac{(12r^2+72r-4)}{64} & 0 & -\lambda + \frac{(-12r+4)}{16} & 0 \\
\frac{\sqrt{3}(-4r^2-24r-20)}{64} & \lambda-1 & \frac{\sqrt{3}(4r+4)}{16} & \lambda-1 \\
\frac{(r^3-17r^2-13r+5)}{64} & -\frac{\sqrt{3}(r+1)}{8} & \lambda + \frac{(-r^2+16r+1)}{16} & \frac{\sqrt{3}(r+1)}{8} \\
\frac{\sqrt{3}(r^3-r^2-13r-11)}{64} & -\frac{(3r+3)}{8} & -\frac{\sqrt{3}(r^2-1)}{16} & \lambda + \frac{(-5+
3r)}{8}
\end{vmatrix},
\end{align*}
then applying $L_4' = L_4 - \sqrt{3}L_3$ provides
\begin{align*}
\det\left(\lambda I_4 - \overline{\mathfrak{A}}_\infty\right) = \begin{vmatrix}
\lambda + \frac{(12r^2+72r-4)}{64} & 0 & -\lambda + \frac{(-12r+4)}{16} & 0 \\
\frac{\sqrt{3}(-4r^2-24r-20)}{64} & \lambda-1 & \frac{\sqrt{3}(4r+4)}{16} & \lambda-1 \\
\frac{(r^3-17r^2-13r+5)}{64} & -\frac{\sqrt{3}(r+1)}{8} & \lambda + \frac{(-r^2+16r+1)}{16} & \frac{\sqrt{3}(r+1)}{8} \\
\frac{\sqrt{3}(16r^2-16)}{64} & 0 & -\sqrt{3}\lambda + \frac{\sqrt{3}(-16r)}{16} & \lambda -1
\end{vmatrix},
\end{align*}
and finally $C_4' = C_4+C_2$ enables to write:
\begin{align*}
\det\left(\lambda I_4 - \overline{\mathfrak{A}}_\infty\right) &= \begin{vmatrix}
\lambda + \frac{(12r^2+72r-4)}{64} & 0 & -\lambda + \frac{(-12r+4)}{16} & 0 \\
\frac{\sqrt{3}(-4r^2-24r-20)}{64} & \lambda-1 & \frac{\sqrt{3}(4r+4)}{16} & 2(\lambda-1) \\
\frac{(r^3-17r^2-13r+5)}{64} & -\frac{\sqrt{3}(r+1)}{8} & \lambda + \frac{(-r^2+16r+1)}{16} & 0 \\
\frac{\sqrt{3}(16r^2-16)}{64} & 0 & -\sqrt{3}\lambda + \frac{\sqrt{3}(-16r)}{16} & \lambda -1
\end{vmatrix} \\
&= (\lambda-1) \begin{vmatrix}
\lambda + \frac{(12r^2+72r-4)}{64} & 0 & -\lambda + \frac{(-12r+4)}{16} & 0 \\
\frac{\sqrt{3}(-4r^2-24r-20)}{64} & \lambda-1 & \frac{\sqrt{3}(4r+4)}{16} & 2 \\
\frac{(r^3-17r^2-13r+5)}{64} & -\frac{\sqrt{3}(r+1)}{8} & \lambda + \frac{(-r^2+16r+1)}{16} & 0 \\
\frac{\sqrt{3}(16r^2-16)}{64} & 0 & -\sqrt{3}\lambda + \frac{\sqrt{3}(-16r)}{16} & 1
\end{vmatrix}.
\end{align*}
We can finally develop the last determinant to obtain \eqref{EQUATSS3.5Alim_PolynCarac} and \eqref{EQUATSS3.5Alim_DetaiPCarQ} page \pageref{EQUATSS3.5Alim_PolynCarac}.\\
\end{appendices}
\newline
\noindent
\textbf{Acknowledgements.} The authors are grateful to E. Caglioti, I. Gallagher, B. Lods, M. Pulvirenti, C. Saffirio and S. Simonella for many stimulating discussions concerning the topic of the present article. The authors gratefully acknowledge the financial support of the Hausdorff Research Institute for Mathematics (Bonn) through the collaborative research center The mathematics of emerging effects (CRC 1060, Project-ID 211504053), and the Deutsche Forschungsgemeinschaft (DFG, German Research Foundation).




E-mail address: \texttt{dolmaire@iam.uni-bonn.de}, \texttt{velazquez@iam.uni-bonn.de}.

\end{document}